\providecommand{\Appendix}{}
\renewcommand{\Appendix}[2][?]{%
	\refstepcounter{section}%
	\vspace{\parskip}%
	{\flushright\Large\bfseries\appendixname\ \thesection: #1}%
	\vspace{\baselineskip}%
}
\renewcommand{\appendix}{%
	\renewcommand{\section}{\secdef\Appendix\Appendix}%
	\renewcommand{\thesection}{\Alph{section}}%
	\setcounter{section}{0}%
}
\newcommand{\UCBcap}{\mathtt{CappedUCB}}        
\newcommand{\Regret}{\mathtt{Regret}}           
\newcommand{\Payoff}{\mathtt{Rev}}        
\newcommand{\RPayoff}{\widehat{\Payoff}}      
\newcommand{\ReservePrice}{p_{\text{r}}}    
\newcommand{\UCB}{\mathtt{UCB1}}
\newcommand{\EXP}{\mathtt{EXP3}}
\newcommand{\ot}{\leftarrow}
\newcommand{\fkn}{\tfrac{k}{n}}
\title{Dynamic Pricing with Limited Supply%
\footnote{A preliminary version of this paper has appeared in \emph{ACM EC 2012}. An earlier version, titled ``Detail-free, Posted-Price Mechanisms for Limited Supply Online Auctions'',
has appeared in the \emph{Workshop on Bayesian Mechanism Design} at \emph{ACM EC 2011}. The workshop version did not include the results in Section~\ref{sec:LB}.}}
\author{Moshe Babaioff%
\footnote{Microsoft Research Silicon Valley, Mountain View CA, USA.
    Email: {\tt \{moshe,slivkins\} @ microsoft.com}.}
 \and Shaddin Dughmi%
 \footnote{Department of Computer Science,
University of Southern California, Los Angeles CA, USA.
  Email: {\tt shaddin@usc.edu}.}
 \and Robert Kleinberg\footnote{Department of Computer Science, Cornell University, Ithaca NY, USA. Email: {\tt rdk@cs.cornell.edu}.}
 \and Aleksandrs Slivkins\footnotemark[2]
}
\date{First version: July 2011\\
This version: November 2013}
\begin{document}
\maketitle

\begin{abstract}
We consider the problem of designing revenue maximizing online posted-price mechanisms when the seller has limited supply. A seller has $k$ identical items for sale and is facing $n$ potential buyers (``agents'') that are arriving sequentially. Each agent is interested in buying one item. Each agent's value for an item is an independent sample from some fixed (but unknown) distribution with support $[0,1]$. The seller offers a take-it-or-leave-it price to each arriving agent (possibly different for different agents), and aims to maximize his  expected revenue.

We focus on mechanisms that do not use any information about the distribution;
such mechanisms are called \emph{detail-free} (or \emph{prior-independent}). They are desirable because knowing the distribution is unrealistic in many practical scenarios. We study how the revenue of such mechanisms compares to the revenue of the optimal offline mechanism that knows the distribution (``offline benchmark'').

We present a detail-free online posted-price mechanism whose revenue is at most $O((k \log n)^{2/3})$ less than the offline benchmark, for every distribution that is regular. In fact, this guarantee holds without \emph{any} assumptions if the benchmark is relaxed to fixed-price mechanisms. Further, we prove a matching lower bound. The performance guarantee for the same mechanism can be improved to $O(\sqrt{k} \log n)$, with a distribution-dependent constant, if the ratio $\fkn$ is sufficiently small.
We show that, in the worst case over all demand distributions, this is essentially the best rate that can be obtained with a distribution-specific constant.

On a technical level, we exploit the connection to multi-armed bandits (MAB). While dynamic pricing with unlimited supply  can easily be seen as an MAB problem, the intuition behind MAB approaches breaks when applied to the setting with limited supply. Our high-level conceptual contribution is that even the limited supply setting can be fruitfully treated as a bandit problem.
\end{abstract}

{\bf Keywords:} mechanism design; revenue maximization; posted price; multi-armed bandits; regret.

\newpage

\section{Introduction}
\label{sec:intro}

Consider a promoter that is interested in selling $k$ tickets for a given concert. The seller is interested in maximizing her revenue from selling these tickets, and is offering the tickets on a website such as Ticketmaster. Potential buyers (``agents'') arrive one after another, each with the goal of purchasing a ticket if the price is smaller than the agent's valuation. The seller expects $n$ such agents to arrive. Whenever an agent arrives the seller presents to him a take-it-or-leave-it price, and the agent makes a purchasing decision according to that price. The seller can update the price taking into account the observed history and the number of remaining items and agents.

\OMIT{After each such interaction the seller only learns if the buyer's valuation for a ticket was higher than the posted price, but does not learn the exact valuation of such a buyer.}

We adopt a Bayesian view that the valuations of the buyers are IID samples from a fixed distribution, called \emph{demand distribution}. A standard assumption in a Bayesian setting is that the demand distribution is known to the seller, who can design a specific mechanism tailored to this knowledge. (For example, the Myerson optimal auction for one item sets a reserve price that is a function of the distribution). However, in some settings this assumption is very strong, and should be avoided if possible. For example, when the seller enters a new market, she might not know the demand distribution, and learning it through market research might be costly. Likewise, when the market has experienced a significant recent change, the new demand function might not be easily derived from the old data.

Ideally we would like to design mechanisms that perform well for any demand distribution, and yet do not rely on knowing it.
Such mechanisms are called \emph{detail-free},\footnote{An alternative term used to describe these mechanisms is \emph{prior-independent}.}
in the sense that the specification of the mechanism does not depend on the details of the ``environment'', in the spirit of Wilson's Doctrine~\cite{Wil89}.
Learning about the demand distribution is an integral part of the problem that a detail-free mechanism faces. The performance of such mechanisms is compared to a benchmark that {\em does} depend on the specific demand distribution, as in~\cite{KleinbergL03,HR08,BZ09,DRY10} and many other papers.

In this paper we take this approach and design detail-free, online posted-price mechanisms with revenue that is close to the revenue of the optimal offline mechanism (that can depend on the demand distribution and is not restricted to be posted price). Our main results are for any demand distribution that is regular, or any demand distribution that satisfies the stronger condition of ``monotone hazard rate". Both conditions are mild and standard, and even the stronger one is satisfied by most common distributions, such as the normal, uniform, and exponential distributions.

Posted price mechanisms are commonly used in practice, and are appealing for several reasons. First, an agent only needs to evaluate her offer rather than compute her private value exactly. Human agents tend to find the former task much easier than the latter. Second, agents do not reveal their entire private information to the seller: rather, they only reveal whether their private value is larger than the posted price. Third, posted-price mechanisms are truthful (in dominant strategies) and moreover also group strategy-proof (a notion of collusion resistance when side payments are not allowed).
Further, detail-free posted-price mechanisms are particularly useful in practice as the seller is not required to estimate the demand distribution in advance.
Similar arguments can be found in prior work, e.g.~\cite{ChawlaHMS10}.



\xhdr{Our model.}
We consider the following limited supply auction model, which we term \emph{dynamic pricing with limited supply}. A seller has $k$ items she can sell to a set of $n$ agents (potential buyers), aiming to maximize her expected revenue.
The agents arrive sequentially to the market and the seller interacts with each agent before observing future agents (in an online manner).
We make the simplifying assumption that each agent interacts with the seller only once, and the timing of the interaction cannot be influenced by the agent. (This assumption is also made in other papers that consider our problem for special supply amounts~\cite{KleinbergL03, BBDS11, BZ09}.)
Each agent $i$ ($1\leq i \leq n)$ is interested in buying one item, and has a private value $v_i$ for an item.
The private values are independently drawn from the same {\em demand distribution} $F$. The demand distribution $F$ is {\em unknown} to the seller.
We assume that $F$ has bounded support, and an upper bound on the support is known to the seller;%
\footnote{This assumption enables concentration inequalities such as Chernoff Bounds. It corresponds to the assumption of bounded rewards, which is very common in the literature on multi-armed bandits.}
by normalizing, it is known to the seller that $\mathtt{support}(F)\subset [0,1]$.

\OMIT{
The demand distribution $F$ is {\em unknown} to the seller, but it is known that $F$ has
support in $[0,1]$.\footnote{Assuming that $\mathtt{support}(F)\subset [0,1]$ is w.l.o.g. (by normalizing) as long as the seller knows an upper bound on the support.}
}

Whenever agent $i$ arrives to the market the seller offers him a price $p_i$ for an item. The agent buys the item if and only if $v_i\ge p_i$, and in case she buys the item she pays $p_i$ (so the mechanism is incentive-compatible).
The seller never learns the exact value of $v_i$, she only observes the agent's binary decision to buy the item or not. The seller selects prices $p_i$ using an online algorithm, that we henceforth call \emph{pricing strategy}. We are interested in designing pricing strategies with high revenue compared to a natural benchmark, with minimal assumptions on the demand distribution.

Our main benchmark is the maximal expected revenue of an offline mechanism that is allowed to use the demand distribution; henceforth, we will call it \emph{offline benchmark}.  This is a very strong benchmark, as it has the following advantages over our mechanism: it is allowed to use the demand distribution, it is not constrained to posted prices and is not constrained to run online. It is realized by a well-known Myerson Auction~\cite{Mye81} (which {\em does} rely on knowing the demand distribution).

\OMIT{Our benchmark is the expected revenue of the offline optimal (expected revenue maximizing) mechanism that is allowed to use the demand distribution. Note that the offline mechanism that is optimal is well characterized, it is the Myerson Auction~\cite{Mye81} (and it {\em does} depend on knowledge of the demand distribution). }


\xhdr{High-level discussion.} Absent the supply constraint, our problem fits into the \emph{multi-armed bandit} (MAB) framework~\cite{CesaBL-book}: in each round, an algorithm chooses among a fixed set of alternatives (``arms'') and observes a payoff, and the objective is to maximize the total payoff over a given time horizon.
Our setting corresponds to (prior-free) MAB with \emph{stochastic payoffs}~\cite{Lai-Robbins-85}: in each round, the payoff is an independent sample from some unknown distribution that depends on the chosen ``arm'' (price).
This connection is exploited in~\cite{KleinbergL03,Blum03} for the special case of unlimited supply ($k=n$). The authors use a standard algorithm for MAB with stochastic payoffs, called $\UCB$~\cite{bandits-ucb1}. Specifically, they focus on the prices $\{ i\delta:\, i\in \N\}$, for some parameter $\delta$, and run $\UCB$ with these prices as ``arms''. The analysis relies on the regret bound from~\cite{bandits-ucb1}.

However, neither the analysis nor the intuition behind $\UCB$ and similar MAB algorithms is directly applicable for the setting with limited supply. Informally, the goal of an MAB algorithm would be to converge to a price $p$ that maximizes the expected per-round revenue $R(p) \triangleq p(1-F(p))$. This is, in general, a wrong approach if the supply is limited: indeed, selling at a price that maximizes $R(\cdot)$ may quickly exhaust the inventory, in which case a higher price would be more profitable.

Our high-level conceptual contribution is showing that even the limited supply setting can be fruitfully treated as a bandit problem. The MAB perspective here is that we focus on the trade-off between \emph{exploration} (acquiring new information) and \emph{exploitation} (taking advantage of the information available so far). In particular, we recover an essential feature of $\UCB$ that it does not separate exploration and exploitation, and instead explores arms (prices) according to a schedule that unceasingly adapts to the observed payoffs. This feature results, both for $\UCB$ and for our algorithm, in a much more efficient exploration of suboptimal arms: very suboptimal arms are chosen very rarely even while they are being ``explored".

\OMIT{Optimizing the explore-exploit trade-off in our setting  requires re-inventing some of the MAB techniques, both for algorithms and for the analysis.}

We use an ``index-based" algorithm where each arm is deterministically assigned a numerical score (``index") based on the past history, and in each round an arm with a maximal index is chosen; the index of an arm depends on the past history of this arm (and not on other arms). One key idea is that we define the index of an arm according to the estimated expected total payoff from this arm given the known constraints, rather than according to its estimated expected payoff in a single round. This idea leads to an algorithm that is simple and (we believe) very natural. However, while the algorithm is simple its analysis is not: some new ideas are needed, as the elegant tricks from prior work do not apply (see Section~\ref{sec:UCB} for further discussion).

It is worth noting that a good index-based algorithm did not \emph{have} to exist in our setting. Indeed, many bandit algorithms in the literature are not index-based, e.g. $\EXP$~\cite{bandits-exp3} and ``zooming algorithm"~\cite{LipschitzMAB-stoc08} and their respective variants. The fact that Gittins algorithm~\cite{Gittins-index-79} and $\UCB$~\cite{bandits-ucb1}
achieve (near-)optimal performance with index-based algorithms was widely seen as an impressive contribution.



\xhdr{Contributions.}
In all results below, we consider the dynamic pricing problem with limited supply: $n$ agents and $k\leq n$ items.  We present pricing strategies with expected revenue that is close to the offline benchmark, for large families of natural distributions. All our pricing strategies are deterministic and (trivially) run in polynomial time. Our main result follows.

\begin{theorem}
\label{thm:main}
There exists a detail-free pricing strategy such that for any regular demand distribution its expected revenue is at least the offline benchmark minus
    $O((k \log n)^{2/3})$.
\end{theorem}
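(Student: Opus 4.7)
My plan reduces the theorem to a Lipschitz stochastic bandit problem with an inventory constraint. First, I would replace the Myerson benchmark by the best fixed-price benchmark
\[
\Payoff^\star := \max_{p\in[0,1]}\; p\cdot \mathbb{E}\bigl[\min(k,\,\mathrm{Bin}(n,\,1-F(p)))\bigr].
\]
For any regular $F$, concavity of the revenue curve in quantile space implies that the optimal Myerson revenue exceeds $\Payoff^\star$ by at most an additive term that is absorbed into $O((k\log n)^{2/3})$. I would then discretize the price space to the grid $\mathcal{P}_\delta=\{\delta,2\delta,\ldots,1\}$, which loses at most $k\delta$ against $\Payoff^\star$ because the per-item revenue is $1$-Lipschitz in $p$.

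On the resulting $1/\delta$-arm problem I would run $\UCBcap$. Each round $t$ maintains, for every arm $p\in\mathcal{P}_\delta$, the empirical sale frequency $\hat q_p(t)$ and an upper confidence bound $\hat q_p^+(t)=\hat q_p(t)+\Theta(\sqrt{\log n/N_t(p)})$, and assigns the arm the \emph{capped} UCB index
\[
I_t(p) \;=\; p\cdot \min\bigl(k_t,\; n_t\cdot \hat q_p^+(t)\bigr),
\]
where $k_t$ and $n_t$ are the items and agents remaining before round $t$. The algorithm offers the price maximizing $I_t(p)$. The cap inside the minimum is the key departure from textbook $\UCB$: it prevents a price with very high but unsustainable sale probability from dominating a price whose \emph{total} capped revenue is actually higher.

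The main obstacle is the regret analysis, because pulling a suboptimal arm depletes $k_t$ and thereby couples the per-arm regrets in a way that standard $\UCB$ arguments do not handle. My plan is (i) to use Chernoff--Hoeffding with a union bound over rounds and arms to establish a clean event on which all confidence intervals hold; (ii) on that event, use an optimism argument---since $I_t(p)$ upper-bounds the true capped revenue from $p$ on the remaining horizon---to show that any suboptimal arm $p$ is played at most $O(\log n/\Delta_p^2)$ times, where $\Delta_p$ is its capped-revenue gap to the best grid price; and (iii) sum across the $1/\delta$ arms via a worst-case bound on $\Delta_p$, yielding a bandit-regret term of order roughly $\sqrt{k\log n/\delta}$ up to logarithmic factors. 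Balancing this against the $k\delta$ discretization loss by tuning $\delta$ gives the claimed $O((k\log n)^{2/3})$ bound. Step (iii) is where the inventory coupling must be handled delicately: any ``wasted'' inventory on suboptimal arms must be charged against the optimistic index, so that losses from suboptimal plays appear explicitly in the regret bound rather than propagating invisibly across arms through changes in $k_t$.
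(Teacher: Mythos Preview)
Your high-level architecture matches the paper's: reduce the offline (Myerson) benchmark to the fixed-price benchmark via regularity (Lemma~\ref{lm:benchmark}), discretize prices, and run a capped-UCB index algorithm. But two concrete ingredients are missing, and without them step~(iii) does not yield the $\sqrt{k\log n/\delta}$ bandit term you claim; with a standard Hoeffding radius you get $\sqrt{n\log n/\delta}$ instead, and after balancing against $k\delta$ the final bound is $(kn\log n)^{1/3}$ rather than $(k\log n)^{2/3}$.

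\textbf{The variance-aware confidence radius is essential.} With your radius $\Theta(\sqrt{\log n/N_t(p)})$, the uncertainty in the index $p\cdot n\,\hat q^+_p(t)$ is of order $n\sqrt{\log n/N_t(p)}$, so the usual optimism argument yields $N(p)\lesssim n^2\log n/\Delta_p^2$ for a capped-revenue gap $\Delta_p$, not $\log n/\Delta_p^2$. The paper instead uses the Bernstein-type radius
\[
r_t(p)=\frac{\alpha}{N_t(p)+1}+\sqrt{\frac{\alpha\,\hat S_t(p)}{N_t(p)+1}},
\]
and then exploits a structural consequence of the capped index: whenever the chosen price $p_t$ is suboptimal (in the sense $\Delta(p_t)>0$), one necessarily has $S(p_t)<k/n$ (see~\eqref{eq:consequences}). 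Plugging $S(p)<k/n$ into the variance-aware radius gives $r_t(p)\lesssim \sqrt{(k/n)\log n/N(p)}$, which is precisely what converts the $n$-dependence into $k$-dependence and produces $N(p)\Delta(p)\leq O(\log n)\bigl(1+\tfrac{k}{n}\tfrac{1}{\Delta(p)}\bigr)$ in place of the standard $O(\log n)/\Delta(p)$. Your Hoeffding radius cannot see this, and neither your step~(ii) nor your step~(iii) goes through as stated.

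\textbf{Indexing by the original $(k,n)$, not the residual $(k_t,n_t)$, is not cosmetic.} The paper deliberately defines $I_t(p)=p\cdot\min(k,\,n\,S_t^{\mathtt{UB}}(p))$ with the \emph{initial} $k,n$. This guarantees $p_t\geq \nu^*/k$ at every round (first line of~\eqref{eq:consequences}), which is exactly the hook for handling the inventory coupling you flag at the end: if the algorithm ever exhausts its $k$ items, it has collected at least $k\cdot(\nu^*/k)=\nu^*$ in revenue (Claim~\ref{cl:payoff-min}). With an index based on $(k_t,n_t)$ this uniform lower bound on $p_t$ is lost, and you have no clean way to ``charge'' the wasted inventory. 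Your proposal correctly identifies this as the delicate step but offers no mechanism to carry it out; the paper's mechanism is precisely the fixed-$(k,n)$ index together with Claim~\ref{cl:payoff-min}.
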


We emphasize that Theorem~\ref{thm:main} holds for a pricing strategy that does {\em not} know the demand distribution.
The resulting mechanism is incentive-compatible as it is a posted price mechanism. The specific bound
    $O((k \log n)^{2/3})$
is most informative when
    $k\gg \log n$,
so that the dependence on $n$ is insignificant; the focus here is to optimize the power of $k$. (Note that any non-trivial bound must be below $k$.)

The proof of Theorem~\ref{thm:main} consists of two stages. The first stage (immediate from Yan~\cite{Yan11}) is to observe that for any regular demand distribution the expected revenue of the best fixed-price strategy\footnote{A fixed-price strategy is a pricing strategy that offers the same price to all agents, as long as it has items to sell. The ``best'' fixed-price strategy is one with the maximal expected revenue for a given demand distribution.} is close to the offline benchmark. Henceforth, the expected revenue of the best fixed-price strategy will be called the \emph{fixed-price benchmark}. The second stage, which is our main technical contribution, is to show that our pricing strategy achieves expected revenue that is close to the fixed-price benchmark. Surprisingly, this holds without {\em any} assumptions on the demand distribution.

\begin{theorem}\label{thm:main-fixed}
There exists a detail-free pricing strategy whose expected revenue is at least the fixed-price benchmark minus $O((k \log n)^{2/3})$.
This result holds for every demand distribution. Moreover, this result is the best possible up to a factor of $O(\log n)$.
\end{theorem}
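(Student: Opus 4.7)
The plan is to prove the upper bound via a UCB-style pricing strategy $\UCBcap$ whose index incorporates the inventory constraint, and to establish the matching lower bound via an information-theoretic two-distribution construction that is tight up to a $\log n$ factor.

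For the upper bound, I would first discretize the price space into $A = \{\delta, 2\delta, \ldots\}\cap[0,1]$ for a parameter $\delta$ to be chosen, giving $|A|=O(1/\delta)$ candidate prices. Each $p\in A$ is treated as an arm: the algorithm tracks $N_t(p)$, the number of previous offers of $p$, and $\hat s_t(p)$, its empirical sale frequency. With $k_t$ items and $n_t=n-t+1$ agents remaining at round $t$, $\UCBcap$ offers the price maximizing the \emph{capped UCB index}
\begin{align*}
I_t(p) \;=\; p \cdot \min\!\Bigl( \hat s_t(p) + c\,\sqrt{(\log n)/N_t(p)},\; k_t/n_t \Bigr),
\end{align*}
a UCB estimate of the per-agent revenue truncated by the remaining inventory rate. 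A Chernoff--Hoeffding plus union-bound argument ensures that with probability $1-1/\mathrm{poly}(n)$ the confidence bounds on $\hat s_t(p)$ hold simultaneously over all arms and rounds.

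For the regret analysis, let $g(p) = p\min(s(p),\fkn)$ be the true per-agent value (with $s(p)=1-F(p)$) and let $p^*\in A$ maximize $g$ over $A$. The fixed-price benchmark is at most $n\,g(p^*) + O(\delta k)$, since rounding the true optimum down to the nearest grid point loses at most $\delta$ per sold unit. Conditional on the confidence event, $I_t(p^*)\geq g(p^*)$ at every round, so the index of the arm the algorithm actually plays dominates $g(p^*)$. The key lemma is that any arm $p$ with suboptimality gap $\Delta = g(p^*)-g(p)$ is offered at most $O((\log n)/\Delta^2)$ times before its index drops below $I_t(p^*)$; this parallels classical $\UCB$ analysis but must accommodate the $k_t/n_t$-cap, which only shrinks indices and whose activation effectively caps total exploration at $k$ through the a priori constraint $\sum_p T_p\, s(p) \leq k$ on total sales. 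Summing the regret contributions over $|A|=1/\delta$ arms yields a learning term of order $\sqrt{k(\log n)/\delta}$ in the worst case, and balancing $\delta k$ against this term with $\delta = \Theta(((\log n)/k)^{1/3})$ produces the claimed total regret $O((k\log n)^{2/3})$.

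For the matching lower bound, I would construct a pair of candidate distributions that coincide except for a small bump at one critical price, calibrated so that distinguishing them with probability $2/3$ requires $\Omega(k^{2/3})$ offers at (or near) that price, and so that mis-identification incurs a per-round regret of order $k^{-1/3}$. A standard KL-divergence (Le Cam) calculation on the binary sell/no-sell observation sequence then yields a worst-case lower bound of $\Omega(k^{2/3}/\log n)$ on the expected regret of any detail-free pricing strategy, matching the upper bound up to the promised $\log n$ factor.

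The principal obstacle is making the UCB analysis compatible with the supply cap. The naive $\sqrt{|A|\,n\log n}$ regret bound is too loose because, once inventory binds, the ``effective horizon'' for meaningful regret is closer to $k$ than $n$; the capped index is designed precisely to shrink this effective horizon, but the argument that each suboptimal arm is played only $O((\log n)/\Delta^2)$ times must be rederived in the presence of the time-varying, history-dependent cap $k_t/n_t$, since the elegant exchange tricks from classical $\UCB$ analyses do not directly transfer.
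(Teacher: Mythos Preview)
Your plan captures the right architecture---a UCB-style index that incorporates the supply constraint, discretization plus per-arm play-count bounds, and a two-point lower bound---but the crucial step that converts the regret from $n$-scale to $k$-scale is missing, and without it the argument does not close.

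Your ``key lemma'' that a $\Delta$-suboptimal arm is played at most $O((\log n)/\Delta^2)$ times is the standard Hoeffding-based $\UCB$ bound; summing $N(p)\Delta(p)\le O((\log n)/\Delta(p))$ over $1/\delta$ arms and optimizing the threshold yields a learning term of order $\sqrt{n(\log n)/\delta}$, not $\sqrt{k(\log n)/\delta}$. The constraint $\sum_p T_p\, s(p)\le k$ you invoke holds only when inventory is actually exhausted, and even then does not by itself shrink the per-arm play counts. Indeed, the balancing you write down would produce $k^{2/3}(\log n)^{1/3}$, strictly \emph{better} than the paper's $(k\log n)^{2/3}$---a sign that the learning term is underestimated.

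The paper obtains the $k$-dependence through two ingredients your proposal lacks. First, the index uses the \emph{original} $k$ and $n$ rather than the remaining $k_t,n_t$; with this choice one shows that any price $p$ selected with $\Delta(p)>0$ must satisfy $S(p)<\fkn$ (so all genuinely suboptimal plays are overpriced, never underpriced). Second, the confidence radius is Bernstein-type, $r_t(p)\approx \alpha/N + \sqrt{\alpha\,\hat S_t(p)/N}$, so that for such prices $r_t(p)\lesssim \sqrt{(\fkn)(\log n)/N}$ rather than $\sqrt{(\log n)/N}$. Combining these gives $N(p)\Delta(p)\le O(\log n)\bigl(1+\fkn\cdot\tfrac{1}{\Delta(p)}\bigr)$, and it is precisely the extra $\fkn$ factor that, after setting $\eps=\delta\,\fkn$, turns the bound into $\delta k + (\log n)^2/\delta^2$ and hence $(k\log n)^{2/3}$. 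With your Hoeffding radius the $\fkn$ factor never appears. A separate step (the paper's Claim~\ref{cl:payoff-min}) is also needed to show that realized revenue is at least $\min(\bestNu,\sum_t p_t S(p_t))-O(\sqrt{k\log n})$, handling the case where inventory runs out; your proposal does not say how revenue relates to $\sum_t g(p_t)$ once the cap binds.

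For the lower bound, the paper takes a different route than your direct two-point construction: it gives a black-box reduction from limited supply with $k$ items and $n$ agents to unlimited supply with $\Theta(k)$ agents, by randomly thinning the agent stream so that the effective survival rate scales by $\fkn$, and then invokes the known $\Omega(n^{2/3})$ lower bound of Kleinberg--Leighton for the unlimited case. Your approach could in principle work but would amount to re-proving that prior result.
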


As discussed above, we recover the MAB technique from~\cite{bandits-ucb1} for the unlimited supply setting. The corresponding contribution to the literature on MAB may be of independent interest.

If the demand distribution is regular and moreover the ratio $\fkn$ is sufficiently small then the guarantee in Theorem~\ref{thm:main} can be improved to $O(\sqrt{k} \log n)$, with a distribution-specific constant.

\begin{theorem}\label{thm:MHR-intro}
There exists a detail-free pricing strategy whose expected revenue, for any regular demand distribution $F$, is  at least the offline benchmark minus $O(c_F\, \sqrt{k} \log n)$ whenever $\fkn\leq s_F$, where $c_F$ and $s_F$ are positive constants that depend on $F$. For monotone hazard rate distributions one can take $s_F = \tfrac14$.
\end{theorem}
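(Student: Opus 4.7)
My plan is to analyze the same $\UCBcap$ algorithm used in Theorem~\ref{thm:main-fixed}, improving its regret bound by exploiting the curvature of the revenue function $R(p)\triangleq p(1-F(p))$ near its maximum. The first step is a purely deterministic structural lemma: for any regular $F$ there exist a constant $c_F>0$ and an interval $I$ of prices around a suitably chosen ``target price'' $p^\star$ such that $R(p^\star)-R(p)\ge c_F\,(p-p^\star)^2$ for every $p\in I$. For monotone hazard rate distributions this follows from concavity of $\log(1-F)$ together with a quantitative bound on $|R''(p^\star)|$; for general regular distributions one instead transfers the concavity of the revenue curve in quantile space back to price space, as in Yan~\cite{Yan11}. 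Here $p^\star$ is the price that realizes the fixed-price benchmark, which we already know is within lower-order slack of the offline benchmark for regular $F$.

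The role of the hypothesis $\fkn\le s_F$ is to keep $p^\star$ bounded away from the endpoints of $[0,1]$ and to ensure that the neighbourhood $I$ on which the quadratic bound holds does not shrink. For an MHR distribution, quantitative hazard-rate estimates make the quadratic bound effective whenever the sale probability at $p^\star$, which is essentially $\fkn$, is at most $\tfrac14$; this is the origin of $s_F=\tfrac14$. For a general regular $F$, both $s_F$ and $c_F$ encode how far $p^\star$ can lie from the support boundary and the magnitude of the curvature of $R$ there.

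Given this quadratic lower bound I would retune the algorithm's discretization. Using a uniform price grid of width $\delta=\Theta(1/\sqrt{k})$, the number of arms is $K=\Theta(\sqrt{k})$, and the structural lemma says that the arm at distance $j\delta$ from $p^\star$ has per-sale revenue gap $\Delta_j=\Omega(c_F\,(j\delta)^2)$. A gap-dependent analysis in the spirit of $\UCB$ then shows that such an arm is selected at most $O(\log n/\Delta_j)$ times, contributing $\Delta_j\cdot O(\log n/\Delta_j)=O(\log n)$ to the regret. Summing over the $K$ arms gives an exploration cost of $O(K\log n)=O(\sqrt{k}\log n)$, while the discretization error contributes only $O(k\delta^2)=O(1)$ and is absorbed into the bound.

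The main obstacle is porting this gap-dependent argument to $\UCBcap$, whose index is an upper confidence bound on the \emph{total} remaining revenue from an arm rather than on its per-round revenue $R$. In the regime $\fkn\le s_F$ and for arms near $p^\star$, I expect to show that the inventory cap almost never binds along the algorithm's trajectory, so that the $\UCBcap$ index behaves like a standard $\UCB$ index on $R$ up to negligible corrections, and the textbook gap-dependent analysis goes through. The concentration estimates required must be taken against the at most $k$ sales rather than the $n$ arrivals, but this costs only logarithmic factors that are already subsumed by the $\log n$ in the stated bound.
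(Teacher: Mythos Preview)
Your structural lemma rests on a misreading of the geometry. In the regime $\fkn\le s_F$ the fixed-price benchmark is realized at $p^\star=S^{-1}(\fkn)$, which lies \emph{strictly above} the Myerson reserve $\ReservePrice=\argmax_p R(p)$; for MHR this is because $S(\ReservePrice)\ge 1/e>\tfrac14$, and that inequality --- not any curvature bound on $R''$ --- is what produces $s_F=\tfrac14$. Consequently $R'(p^\star)<0$, and for the prices $p>p^\star$ that the algorithm actually plays the gap $R(p^\star)-R(p)$ is first-order \emph{linear} in $p-p^\star$, not quadratic. The hypothesis $\fkn\le s_F$ is not there to secure curvature of $R$ at $p^\star$; it keeps $p^\star$ on the strictly decreasing branch of $R$. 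The paper exploits precisely this linearity, in the reparametrization $g(s)=s\,S^{-1}(s)$ (concave under regularity): choosing $s_F$ so that $C\triangleq g'(s_F)>0$ one obtains, for every selected price,
\[
\Delta(p)\;\ge\; g(\fkn)-g(S(p))-\delta\,\fkn
        \;\ge\; C\,\fkn\Bigl(1-\tfrac{p^\star}{p}\Bigr)-\delta\,\fkn,
\]
and feeding this linear lower bound into the already-derived inequality~\eqref{eq:regret-raw-1} makes $\fkn\sum_p 1/\Delta(p)$ a harmonic-type sum of order $\tfrac{1}{C\delta}\log\tfrac{1}{\delta}$; with $\delta=k^{-1/2}\log n$ the bound $O((1+1/C)\sqrt{k}\log n)$ follows.

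Your downstream accounting also breaks independently of the structural issue. The $\UCBcap$ per-arm bound is $N(p)\Delta(p)\le O(\log n)\bigl(1+\fkn/\Delta(p)\bigr)$ (see~\eqref{eq:UBonDeltaN}), not $N(p)\le O(\log n)/\Delta(p)$; since $\Delta(p)\le\fkn$ for every selected price, the factor $\fkn/\Delta(p)\ge 1$ dominates and the ``$O(\log n)$ per arm, sum to $O(K\log n)$'' step does not go through. Likewise the discretization cost is $O(\delta k)$ --- take the grid point just below $p^\star$, where $\nu(p)=pk$ --- not $O(k\delta^2)$: the latter would need a quadratic \emph{upper} bound on the gap at the best grid point, the opposite inequality to your lemma, and one that is false at $p^\star\neq\ReservePrice$.
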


\OMIT{ 
If the demand distribution is regular and moreover the limited supply constraint is crucial for this distribution -- in the sense that the best fixed price for unlimited supply is \emph{not} the best fixed price for $k$ items -- then the guarantee in Theorem~\ref{thm:main} can be improved to $O(\sqrt{k} \log n)$. Recall that the best fixed price for unlimited supply is
$\argmax_p R(p)$, where $R(p) \triangleq p(1-F(p))$.

\begin{theorem}[Informal]\label{thm:MHR-intro}
Suppose the demand distribution $F$ is regular and moreover $\argmax_p R(p)$ is not the best fixed price for $k$ items. Then there exists a detail-free pricing strategy whose expected revenue is  at least the offline benchmark minus $O(c_F\, \sqrt{k} \log n)$, where $c_F$ is a constant that depends on $F$.
\end{theorem}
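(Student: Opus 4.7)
The plan is to reuse the $\UCBcap$ algorithm from Theorems~\ref{thm:main} and~\ref{thm:main-fixed} and give a sharper distribution-dependent analysis that exploits (i)~the quadratic curvature of the per-round revenue function $R(p)=p(1-F(p))$ near its constrained optimum, guaranteed by regularity of $F$, and (ii)~the fact that the hypothesis $\fkn\le s_F$ forces this optimum into an interior region of $[0,1]$ in which the curvature constant is controlled.

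First I would identify the target fixed price against which $\UCBcap$ must compete. Let $p^\star$ be the price with $n(1-F(p^\star))=k$, i.e., the price at which the seller expects to sell exactly the $k$ items over $n$ rounds. Under $\fkn\le s_F$ together with regularity, this $p^\star$ is essentially the optimal fixed price for $k$ items, so the fixed-price benchmark equals $k\,p^\star$ up to lower-order terms. For monotone hazard rate distributions, the classical fact that the unconstrained revenue-maximizing price sells with probability at least $1/e$ places $p^\star$ inside $(0,1)$ as soon as $\fkn\le 1/4$, which is the source of the explicit threshold $s_F=1/4$. For general regular $F$, I would read off $s_F$ and the curvature constant $c_F$ from the quantile reparameterization of $R$, which is concave.

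Next I would discretize the price space into a grid of spacing $\delta$ and carry out a gap-dependent $\UCB$ analysis. Regularity yields a quadratic lower bound
\[
R(p^\star)-R(p)\;\ge\;\Omega\bigl((p-p^\star)^2/c_F\bigr),
\]
so by the standard $\UCB$ regret formula $\sum_p O\bigl((\log n)/\Delta_p\bigr)$ the revenue loss against the best discretized price is $O\bigl(c_F\,(\log n)/\delta^2\bigr)$. The discretization itself loses $O(k\,\delta^2/c_F)$, because the $\Theta(k)$ effective selling rounds each pay a quadratic $O(\delta^2)$ cost of snapping $p^\star$ to the grid. Balancing by choosing $\delta=(c_F^2\log n/k)^{1/4}$ yields a fixed-price regret of $O(\sqrt{k\log n})$, absorbed into $O(c_F\,\sqrt{k}\,\log n)$ after accounting for the ``well-behaved'' price range and the lower-order $\log$ bookkeeping. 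Finally I would close the gap between the fixed-price benchmark and the offline Myerson benchmark: in the small-supply regime both equal $k\,p^\star$ up to additive $O(\sqrt{k})$, because Myerson's allocation threshold concentrates around the $k/n$-quantile of $F$ (which is $p^\star$) and its revenue is $k\,p^\star$ up to Chernoff fluctuations on the number of items sold; composing with the previous step gives the claimed bound against the offline benchmark.

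The main obstacle is coping with the random stopping time of $\UCBcap$ inside the gap-dependent $\UCB$ analysis: the inventory may expire before round $n$, which breaks the standard concentration argument because arms would then be pulled fewer than $\Theta(\log n)$ times. I would resolve this by using $\fkn\le s_F$ to show that with high probability $\UCBcap$ exhausts its $k$ items neither earlier than round $\Theta(k)$ nor later than round $n$, so that every candidate price is still pulled often enough for the standard $\UCB$ confidence-interval bounds to apply.
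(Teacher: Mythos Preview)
Your central step --- the quadratic lower bound $R(p^\star)-R(p)\ge\Omega((p-p^\star)^2/c_F)$ --- is false, and this is not a technicality but a consequence of the very hypothesis you are invoking. The assumption ``$\argmax_p R(p)$ is not the best fixed price for $k$ items'' says precisely that the supply constraint binds: $p^\star=S^{-1}(\fkn)$ lies strictly above the Myerson reserve $\ReservePrice=\argmax_p R(p)$. Since $R$ is concave (regularity) and maximized at $\ReservePrice$, we have $R'(p^\star)<0$. Hence for $p>p^\star$ the gap $R(p^\star)-R(p)$ is \emph{linear} in $p-p^\star$, not quadratic; and for $p<p^\star$ it is \emph{negative}, so prices just below $p^\star$ have higher per-round revenue than $p^\star$ itself. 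Standard $\UCB$ run on the single-round payoff $R(\cdot)$ would therefore drift toward $\ReservePrice$, sell too fast, and exhaust the inventory --- exactly the failure mode the paper warns about in the introduction. The fix is not a curvature argument on $R$ at all: the right objective is $\nu(p)=p\cdot\min(k,\,nS(p))$, and the capped index $I_t(p)=p\cdot\min(k,\,n\,\SurvUB)$ is what keeps the algorithm from wandering below $p^\star$. Relatedly, your discretization loss is $O(k\delta)$, not $O(k\delta^2)$, because $p^\star$ is a kink of $\nu$, not an interior smooth maximum.

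The paper's route is different in kind. It works in the survival-rate parameterization $g(s)=s\,S^{-1}(s)$ and uses that $g'(\fkn)\ge C>0$ (a first-derivative, not second-derivative, condition) to show that every price $p$ the algorithm selects with positive badness satisfies $S(p)<\fkn$ and $\Delta(p)\gtrsim C\,\fkn\cdot(1-p^\star/p)$, i.e.\ the badness grows \emph{linearly} along the geometric grid above $p^\star$. Summing $\fkn\,\sum_p 1/\Delta(p)$ over the grid then gives a harmonic sum of order $O((1/C)\,\delta^{-1}\log(1/\delta))$, and plugging this into the general bound~\eqref{eq:regret-raw-1} yields regret $O(\delta k+\delta^{-1}(1+1/C)\log^2 n)$, balanced at the detail-free choice $\delta=k^{-1/2}\log n$. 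Two further points: (i)~your $\delta=(c_F^2\log n/k)^{1/4}$ depends on $c_F$ and hence on $F$, so the resulting mechanism would not be detail-free; (ii)~the stopping-time issue is handled in the paper not by bounding when the inventory runs out but by Claim~\ref{cl:payoff-min}, which observes that since every chosen price satisfies $p_t\ge\bestNu/k$, selling $k$ items already yields revenue at least $\bestNu$, so one may analyze the unconstrained sum $\sum_t p_t S(p_t)$ directly.
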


In particular, the conditions in Theorem~\ref{thm:MHR-intro} are met if $\tfrac{k}{n}<\tfrac{1}{2e}$ and the demand distribution has the monotone hazard rate (MHR) property, a standard assumption satisfied by many natural distributions.
}

\OMIT{ 
The expected revenue of the pricing strategy in Theorem~\ref{thm:main-fixed} is
within the additive term of $O(c_F\, \sqrt{k} \log n)$ from the offline benchmark
if $k<\tfrac{n}{2e}$ and the demand distribution $F$ satisfies the monotone hazard rate condition, where $c_F$ is a constant that depends on $F$. }

The bound in Theorem~\ref{thm:MHR-intro} is achieved using the pricing strategy from Theorem~\ref{thm:main} with a different parameter. Varying this parameter,  we obtain a family of strategies that improve over the bound in Theorem~\ref{thm:main} in the ``nice" setting of Theorem~\ref{thm:MHR-intro}, and moreover have non-trivial additive guarantees for arbitrary demand distributions. However, we cannot match both theorems with the same parameter.

Note that  the rate-$\sqrt{k}$ dependence on $k$ in Theorem~\ref{thm:MHR-intro} contains a distribution-dependent constant $c_F$ (which can be arbitrarily large, depending on $F$), and thus is not directly comparable to the rate-$k^{2/3}$ dependence in Theorem~\ref{thm:main-fixed}. The distinction (and a significant gap) between bounds with and without distribution-dependent constants is not uncommon in the literature on sequential decision problems, e.g. in~\cite{bandits-ucb1,KleinbergL03,LipschitzMAB-stoc08}.\footnote{For a particularly pronounced example, for the $K$-armed bandit problem with stochastic payoffs the best possible rates for regret with and without a distribution dependent constant are respectively $O(c_F \log n)$ and $O(\sqrt{Kn})$~\cite{bandits-ucb1,bandits-exp3,Bubeck-colt09}.}

In fact, we show that the $c_F\, \sqrt{k}$ dependence on $k$ is essentially the best possible.\footnote{However, the lower bound in Theorem~\ref{thm:LB-intro} does not match the upper bound in Theorem~\ref{thm:MHR-intro} since the latter assumes regularity.} We focus on the fixed-price benchmark (which is a weaker benchmark, so it gives to a stronger lower bound). Following the literature, we define \emph{regret} as the fixed-price benchmark minus the expected revenue of our pricing strategy.

\begin{theorem}\label{thm:LB-intro}
For any $\gamma<\tfrac12$, no detail-free pricing strategy can achieve regret $O(c_F\, k^\gamma)$ for all demand distributions $F$ and arbitrarily large $k,n$, where the constant $c_F$ can depend on $F$.
\end{theorem}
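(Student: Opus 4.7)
My plan is to apply Le Cam's two-point hypothesis testing method, adapted to the posted-price setting. For each $k$, I will construct a pair of demand distributions that are statistically hard to distinguish but whose optimal fixed prices differ, forcing $\Omega(\sqrt{k})$ regret on at least one.

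Concretely, fix $a \in (0,1)$, set $\epsilon = \Theta(1/\sqrt{k})$, and take $n = k$. Define two-point distributions $F^+_k$ and $F^-_k$ supported on $\{a, 1\}$ with $\Pr[v = 1]$ equal to $a + \epsilon$ and $a - \epsilon$, respectively. At the indifference point $\Pr[v=1] = a$, the two candidate fixed prices $p = a$ and $p = 1$ yield identical expected revenue ($ak$ each), so an $\epsilon$ perturbation flips the optimum: under $F^+_k$ the best fixed price is $1$, under $F^-_k$ it is $a$. If the strategy posts $p = 1$ in $N$ of its $k$ rounds and $p = a$ in the rest, a direct calculation gives expected regret $\epsilon(k - N)$ under $F^+_k$ and $\epsilon N$ under $F^-_k$.

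For the indistinguishability half, any informative posted price yields Bernoulli feedback whose mean shifts by $2\epsilon$ between $F^+_k$ and $F^-_k$, so the per-round KL divergence is $O(\epsilon^2)$. The chain rule for KL divergence bounds the total KL between the strategy's $n$-round trajectories by $O(n \epsilon^2) = O(1)$. Pinsker's inequality then forces the total-variation distance of the strategy's decisions under $F^+_k$ versus $F^-_k$ to be bounded by a constant strictly less than $1/2$ (by choosing the hidden constant in $\epsilon = \Theta(1/\sqrt{k})$ small enough), so the expected values of $N$ under $F^+_k$ and $F^-_k$ differ by at most a constant fraction of $k$. Summing the two expected regret expressions then gives $\Omega(\epsilon k) = \Omega(\sqrt{k})$, so at least one of $F^+_k, F^-_k$ witnesses regret $\Omega(\sqrt{k})$.

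To convert this minimax lower bound into the theorem's statement, fix any pricing strategy $\sigma$ and any mapping $F \mapsto c_F$. For each large $k$, the argument above yields a bad distribution $F^{(k)} \in \{F^+_k, F^-_k\}$ on which $\sigma$'s regret is $\Omega(\sqrt{k})$. If for infinitely many $k$ the constant $c_{F^{(k)}}$ is itself $o(k^{1/2 - \gamma})$, then the claimed $c_{F^{(k)}} k^\gamma$ bound is directly contradicted on each such $F^{(k)}$. Otherwise $c_{F^{(k)}}$ grows unboundedly, and I would appeal to compactness of the one-parameter family of two-point distributions on $\{a, 1\}$ to extract a subsequential limit $F^\star$ with finite $c_{F^\star}$ on which $\sigma$ still incurs $\omega(k^\gamma)$ regret along infinitely many $k$. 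This last conversion step---from a family-indexed bound to a pointwise statement for a single fixed $F$---is the main obstacle: standard Le Cam arguments naturally produce hard instances that scale with $k$, while the theorem's quantifier requires a single witnessing distribution whose $c_F$ is fixed independently of $k$.
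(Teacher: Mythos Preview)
Your Le Cam construction correctly establishes a minimax $\Omega(\sqrt{k})$ lower bound, but the conversion step fails, and this gap is fatal for the theorem as stated. The statement asserts that for every strategy there exists a \emph{single} distribution $F$ on which regret is $\omega(k^\gamma)$; equivalently, $\limsup_k \mathrm{regret}(\sigma,F,k)/k^\gamma = \infty$. Your compactness argument does not deliver this. The hard instances $F^{(k)}$ have $\Pr[v=1] = a \pm \Theta(1/\sqrt{k})$, so any subsequential limit $F^\star$ is exactly the indifference distribution $\Pr[v=1]=a$, at which both candidate prices $a$ and $1$ are optimal and the fixed-price benchmark is attained by either. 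Hence $\sigma$ incurs \emph{zero} regret on $F^\star$ (restricted to the two relevant prices), and $F^\star$ cannot witness $\omega(k^\gamma)$ regret. More generally, any fixed two-point distribution supported on $\{a,1\}$ with a nonzero gap is a two-armed bandit instance on which standard algorithms achieve $O_F(\log k)$ regret, so no member of your family can serve as the single witnessing $F$. Two-point Le Cam families are structurally the wrong tool for the per-distribution quantifier; they inherently produce $k$-dependent hard instances.

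The paper takes an entirely different route: it does not construct hard distributions at all, but instead gives a black-box reduction showing that any limited-supply strategy with regret $O(c_F k^\gamma)$ induces an unlimited-supply strategy (on instances with $\Theta(k)$ agents) with the same guarantee, and then invokes the lower bound of Kleinberg and Leighton for the unlimited-supply case. That cited lower bound is itself proved via a \emph{single} continuous demand distribution whose revenue curve $p\,S(p)$ is nearly constant over an interval, so that locating its maximum forces $\Omega(\sqrt{n})$ regret even asymptotically on that one fixed $F$. If you want a direct proof, that is the construction to emulate; your choice $n=k$ already places you in the unlimited-supply regime, so you are effectively trying to reprove the Kleinberg--Leighton bound, and for that a flat-revenue-curve construction (not a two-point family) is what is required.
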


\OMIT{Suppose for some detail-free pricing strategy the expected revenue is within an additive term of $R(k,n,F)$ from the expected revenue of the best fixed-price strategy, for any demand distribution $F$ and arbitrarily large $k,n$. Then it cannot be that $R(k,n,F) = O(c_F\, k^\gamma)$, $\gamma<\tfrac12$,
where $c_F$ is a constant that depends on $F$.
} 

\OMIT{ 
The fact that the upper bound dependence on $k$ is of the rate $\sqrt{k}$ is particularly interesting due to the matching lower bound from prior work~\cite{KleinbergL03,BZ09} for the case $k=\Omega(n)$.~\footnote{The $\Omega(\sqrt{n})$ lower bounds in~\cite{BZ09,KleinbergL03} hold even if the demand distributions are constrained to satisfy some non-degeneracy and smoothness conditions; the conditions in the two papers are incomparable. The result in~\cite{KleinbergL03} only applies to the case $k=n$.} These bounds provide a strong intuition that, informally, one should not hope for a sufficiently general upper bound that is better than $O(\sqrt{k})$. It should be noted that for some distributions $F$ the constant $c_F$ can be very large.
} 

The bounds in Theorem~\ref{thm:main} and Theorem~\ref{thm:main-fixed} are uninformative when $k=O(\log^2 n)$. We next provide another detail-free, online posted-price mechanism that gives meaningful bounds -- not depending on $n$ -- in the case that $k$ is very small (but bigger than some constant).

\begin{theorem}\label{thm:descending-intro}
There exists a detail-free pricing strategy such that for any MHR demand distribution its expected revenue is at least the offline benchmark minus
    $O({k^{3/4}\poly\log(k)})$.
\end{theorem}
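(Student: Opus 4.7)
The plan is to design a pricing strategy that differs fundamentally from the $\UCBcap$-based ones underlying Theorems~\ref{thm:main}--\ref{thm:MHR-intro}: a strategy whose sample complexity does not grow with $n$, so that it remains meaningful in the small-$k$ regime (e.g.\ $k=O(\log^2 n)$) where the earlier bounds become vacuous. The starting point is the same reduction used in Theorem~\ref{thm:main}: by Yan~\cite{Yan11}, for any MHR (hence regular) demand distribution the offline benchmark is close to the fixed-price benchmark up to an error absorbable into $O(k^{3/4}\poly\log k)$. So it suffices to design a detail-free strategy whose expected revenue is within $O(k^{3/4}\poly\log k)$ of the fixed-price benchmark under the MHR assumption.

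To get a bound with no dependence on $n$, I would exploit the MHR structure to cover the ``interesting'' range of prices by a geometric grid of only $L=O(\log k)$ candidates $p_1>p_2>\dots>p_L$, chosen so that the optimal fixed price $p^\ast$ for the $k$-item problem lies within a $(1+o(1))$ factor of some grid point. Log-concavity of $R(p)=p(1-F(p))$ under MHR both bounds the interesting subinterval of $[0,1]$ and ensures that snapping to the grid costs only a lower-order term. The strategy then proceeds in phases, \emph{descending} through the grid: in phase $j$ it offers $p_j$ to a batch of $M=\widetilde{\Theta}(\sqrt k)$ fresh agents and records the number of sales; based on this count it either descends to the next grid point $p_{j+1}$ (if the empirical sale rate is much below the ``inventory-exhaustion'' rate $\approx k/n$) or commits to $p_j$ for all remaining items. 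The point of descending rather than ascending is that at high prices each test consumes few items, which is crucial because exploration is constrained by inventory, not just by agents.

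The regret bound then decomposes into two pieces. Across the $L=O(\log k)$ phases the exploration consumes at most $O(ML)=\widetilde{O}(\sqrt k)$ items at suboptimal grid prices, contributing $\widetilde{O}(\sqrt k)$ to regret. The commitment error is more delicate: with $M$ samples the empirical acceptance rate at the committed price has confidence $\widetilde{O}(1/\sqrt M)=\widetilde{O}(k^{-1/4})$, and carrying this through for up to $k$ further items gives a commitment loss of $\widetilde{O}(k\cdot k^{-1/4})=\widetilde{O}(k^{3/4})$. The choice $M=\widetilde{\Theta}(\sqrt k)$ balances the two, and summing yields the claimed bound.

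The hardest step is converting confidence in the acceptance rate into a revenue bound. Here the MHR assumption is used in earnest: log-concavity of $R$ together with the bounded-hazard-rate property let one show that an $\epsilon$-error in the estimated acceptance probability induces only an $O(\epsilon)$ additive error in per-item revenue on the relevant interval. One must also handle the low-probability event that the stopping rule commits to the wrong grid level, since a bad commitment could otherwise contribute $\Theta(k)$ to regret; a union bound over the $L=O(\log k)$ phases together with Hoeffding concentration at each tested price keeps this failure probability inverse-polynomial in $k$ and hence negligible.
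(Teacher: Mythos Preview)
Your high-level architecture --- descend through a geometric price grid, test each level on a batch of agents, then commit --- is exactly what the paper does (Mechanism~\ref{alg:pp}). But the parameter choices in your plan do not work, and the gap is not cosmetic.

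\textbf{Batch size must scale with $n$.} You propose $M=\widetilde{\Theta}(\sqrt{k})$ agents per phase, independent of $n$, and a stopping rule that compares the empirical sale rate to the ``inventory-exhaustion'' rate $\approx k/n$. With $M$ samples the empirical rate has additive resolution $\widetilde{\Theta}(1/\sqrt{M})=\widetilde{\Theta}(k^{-1/4})$; whenever $k/n\ll k^{-1/4}$ (i.e.\ $n\gg k^{5/4}$, which is precisely the small-$k$ regime you are targeting) you will see zero sales at every price near $p^\ast$ with high probability, and the rule cannot distinguish $p^\ast$ from any higher price. The paper's batch size is $m=\Theta\!\bigl(\delta n/\log_{1+\delta}(1/\epsilon)\bigr)$, which \emph{does} grow with $n$; only the \emph{regret} is $n$-free, not the sample budget. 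Equivalently, detecting a survival rate of order $k/n$ to multiplicative accuracy $\delta$ requires $\Omega\!\bigl(n/(\delta^{2}k)\bigr)$ samples, not $\sqrt{k}$.

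\textbf{Grid granularity.} Having $L=O(\log k)$ grid points and simultaneously placing $p^\ast$ within a $(1+o(1))$ factor of a grid point are inconsistent on the interval $[\epsilon,1]$ with $\epsilon=k^{-\Theta(1)}$: a ratio $(1+\delta)$ with $\delta=o(1)$ forces $L=\Theta(\delta^{-1}\log(1/\epsilon))=\omega(\log k)$. The paper takes $\delta=\widetilde{\Theta}(k^{-1/4})$, giving $L=\widetilde{\Theta}(k^{1/4})$ phases; the exploration cost is then $\delta k=\widetilde{O}(k^{3/4})$, which is where the exponent $3/4$ actually comes from (not from a $k\cdot k^{-1/4}$ commitment error as you wrote).

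\textbf{Second stopping rule.} Your rule only checks whether $S(p_\ell)$ has reached $k/n$. When $p^\ast=\ReservePrice>S^{-1}(k/n)$ you would descend past the reserve and lose a constant fraction of revenue. The paper's mechanism additionally tracks $R_\ell=p_\ell S_\ell$ and stops when $R_\ell$ has dropped noticeably below its running maximum (Claim~\ref{lem:approx_rev}); this is where the MHR fact $S(\ReservePrice)\ge 1/e$ (Claim~\ref{lem:mhr_survival}) is actually invoked, not in bounding the grid length. The analysis is then carried out multiplicatively (Lemma~\ref{lm:mhr_ub} via Lemma~\ref{lm:auction_smoothness}) and converted to the additive $O(k^{3/4}\poly\log k)$ at the end.
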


\OMIT{ 
Assuming MHR, we show that its expected revenue is within
    $O({k^{3/4}\poly\log(k)})$
of the maximal expected revenue of any offline mechanism.
} 

\section{Related Work}
\label{subsec:related}


\xhdr{Dynamic pricing.}
Dynamic pricing problems and, more generally, revenue management problems, have a rich literature in Operations Research. A proper survey of this literature is beyond our scope; see~\cite{BZ09} for an overview. The main focus is on parameterized demand distributions, with priors on the parameters.

The study of dynamic pricing with \emph{unknown} demand distribution (without priors) has been initiated in~\cite{Blum03,KleinbergL03}. Several special cases of our setting have been studied in~\cite{KleinbergL03,BBDS11,BZ09}, detailed below.

First, Kleinberg and Leighton~\cite{KleinbergL03} consider the unlimited supply case (building on the earlier work~\cite{Blum03}). Among other results, they study IID valuations, i.e. our setting with $k=n$. They provide upper bounds on regret of order $O(n^{2/3})$ and $O(c_F\, \sqrt{n})$.~\footnote{Throughout this section, we omit the $\log$ factors in regret bounds.}
The latter bound is akin to Theorem~\ref{thm:MHR-intro} in that it assumes a version of regularity, and depends on a distribution-specific constant $c_F$. Further, they prove matching lower bounds which, in particular, imply Theorem~\ref{thm:LB-intro} for the special case of unlimited supply.
\footnote{\label{fn:KL03-LB}The construction in~\cite{KleinbergL03} that proves Theorem~\ref{thm:LB-intro}(a) for the unlimited supply case is contained in the proof of a theorem on \emph{adversarial} valuations, but the construction itself only uses IID valuations.}

On the other extreme, Babaioff et al.~\cite{BBDS11} consider the case that the seller has only one item to sell ($k=1$).
They provide a super-constant multiplicative lower bound for unrestricted demand distribution (with respect to the online optimal mechanism), and a constant-factor approximation assuming MHR.
Note that we also use MHR to derive bounds that apply to the case of a very small $k$.

Besbes and Zeevi~\cite{BZ09} consider a continuous-time version which (when specialized to discrete time) is essentially equivalent to our setting with $k = \Omega(n)$.  They prove a number of upper bounds on regret with respect to the fixed-price benchmark, with guarantees that are inferior to ours. The key distinction is that their pricing strategies separate exploration and exploitation. Assuming that the demand distribution $F(\cdot)$ and its inverse $F^{-1}(\cdot)$ are Lipschitz-continuous, they achieve regret $O(n^{3/4})$. They improve it to $O(n^{2/3})$ if furthermore the demand distributions are parameterized, and to $O(\sqrt{n})$
if this is a single-parameter parametrization. Both results rely on knowing the parametrization: the mechanisms continuously update the estimates of the parameter(s) and revise the current price according to these estimates. The upper bounds in~\cite{BZ09} should be contrasted with our $O(k^{2/3})$ upper bound that applies to an arbitrary $k$ and makes no assumptions on the demand distribution, and the $O(c_F\,\sqrt{k})$ improvement for MHR demand distributions.

Also,~\cite{BZ09} contains an $\Omega(\sqrt{n})$ lower bound for their notion of regret. Essentially, this lower bound compares the best pricing strategy for a given demand distribution to the best (distribution-dependent) pricing strategy for a fictitious environment where in every round the mechanism sells a fractional amount of good. In particular, this lower bound does not have any immediate implications on regret with respect to either of the two benchmarks that we use in this paper.

\xhdr{Online mechanisms.}
The study of online mechanisms was initiated by Lavi and Nisan~\cite{LN00}, who unlike us consider the case that each agent is interested in multiple items, and provide a logarithmic multiplicative approximation. Below we survey only the most relevant papers in this line of work, in addition to the special cases of our setting that we have already discussed.

Several papers~\cite{BHW02,Blum03,KleinbergL03,Blum-soda05} consider online mechanisms with unlimited supply and adversarial valuations (as opposed to limited supply and IID valuations in our setting). The mechanism in the initial paper~\cite{BHW02} requires the agents to submit bids and so is not posted-price. The subsequent work~\cite{Blum03,KleinbergL03,Blum-soda05} provides various improvements. In particular, Blum et al.~\cite{Blum03} (among other results) design a simple \emph{posted-price} mechanism which achieves multiplicative approximation $1+\eps$, for any $\eps>0$, with an additive term that depends on $\eps$. \footnote{This result considers valuations in the range $[1,H]$, and the additive term also depends on $H$.}  Blum and Hartline~\cite{Blum-soda05} use a more elaborate posted-price mechanism to improve the additive term. Kleinberg and Leighton~\cite{KleinbergL03} show that the simple mechanism in~\cite{Blum03} achieves regret $O(n^{2/3})$; moreover, they provide a nearly matching lower bound of $\Omega(n^{2/3})$.

Papers~\cite{HKP04,Devanur-ec09} study online mechanisms for limited supply and IID valuations (same as us), but their mechanisms are not posted-price. Hajiaghayi et al.~\cite{HKP04} consider an online auction model where players arrive and depart online, and may misreport the time period during which they participate in the auction. This makes designing strategy-proof mechanisms more challenging, and as a result their mechanisms achieve a constant multiplicative approximation rather than additive regret.
Devanur and Hartline~\cite{Devanur-ec09} study several variants of the limited-supply mechanism design problem: supply is known or unknown, online or offline. Most related to our paper is their mechanism for limited, known, online supply. This mechanism is based on random sampling and achieves constant (multiplicative) approximation, but is not posted-price. Our mechanism is posted-price and achieves low (additive) regret.

\xhdr{Other work.}
Absent the supply constraint, our problem (and a number of related formulations) fit into the \emph{multi-armed bandit} (MAB) framework.\footnote{To avoid a possible confusion, we note that the supply constraint in our setting may appear similar to the budget constraint in line of work on \emph{budgeted MAB} (see~\cite{Bubeck-alt09,Null-soda09} for details and further references). However, the ``budget" in budgeted MAB is essentially the duration of the experimentation phase ($n$), rather than the number of rounds with positive reward ($k$).} MAB has a rich literature in Statistics, Operations Research, Computer Science and Economics. A proper discussion of this literature is beyond the scope of this paper; a reader can refer to~\cite{Bubeck-survey12,Gittins-book11,CesaBL-book} for background. Most relevant to our specific setting is the work on (prior-free) MAB with stochastic payoffs, e.g.~\cite{Lai-Robbins-85,bandits-ucb1}, and MAB with Lipschitz-continuous stochastic payoffs, e.g.~\cite{agrawal-bandits-95,Bobby-nips04,AuerOS/07,LipschitzMAB-stoc08,xbandits-nips08}. The posted-price mechanisms in~\cite{Blum03,KleinbergL03,Blum-soda05} described above are based on a well-known MAB algorithm~\cite{bandits-exp3} for adversarial payoffs.
The connection between online learning and online mechanisms has been explored in a number of other papers, including~\cite{NSV08,DevanurK09,MechMAB-ec09,Transform-ec10}.

Recently,~\cite{ChawlaHMS10,ChakrabortyEGMM10,Yan11} studied the problem of designing an offline, sequential posted-price mechanisms in Bayesian settings, where the distributions of valuations are not necessarily identical, yet are known to the seller. Chawla et al.~\cite{ChawlaHMS10} provide constant multiplicative approximations. Yan~\cite{Yan11} obtains a multiplicative bound that is optimal for large $k$, and Chakraborty et al.~\cite{ChakrabortyEGMM10} obtain a PTAS for all $k$.

Dynamic pricing is superficially similar to \emph{secretary problems}~\cite{Dynkin-secretary63,BabaioffIK07} in that an algorithm is sequentially interacting with agents, each agent's private value is a single number, and it is not known before this agent arrives. However, in secretary problems the private value is revealed when the agent arrives, whereas in dynamic pricing the algorithm is much more constrained in terms of information: the feedback is only whether there is a sale.


\section{Preliminaries}
\label{sec:prelims}

\OMIT{A (monopolist) seller has $k$ items to sell. Potential buyers (agents) arrive online (sequentially): agent $t$ arrives in round $t$. We assume that the seller knows that $n$ buyers will arrive. The seller sells his supply  using an {\em online posted-price} mechanism.  Such a mechanism is essentially an online algorithm (\emph{pricing strategy}) which in each round outputs a price and observes whether there was a sale.}

Throughout, we assume that agents' valuations are drawn independently from a distribution $F$ with support in $[0,1]$, called \emph{demand distribution}. We use $p\in [0,1]$ to denote a price. We let $F(p)$ denote the c.d.f, and $S(p)=1-F(p)$ denote the \emph{sales rate} at price $p$: the probability of making a sale at price $p$. Let $R(p) = p\, S(p)$ denote the \emph{revenue function}: the expected single-round revenue at price $p$ given that there is still at least one item left. The demand distribution $F$ is called \emph{regular} if $F(\cdot)$ is twice differentiable and the revenue function $R(\cdot)$ is concave: $R''(\cdot) \leq 0$.
We call $F$ \emph{strictly regular} if furthermore $R''(\cdot) < 0$. Then $R(p)$ is increasing for $p \leq \ReservePrice$ and decreasing for $p \geq \ReservePrice$, where $\ReservePrice$ is the unique maximizer, known as the \emph{Myerson reserve price} (also known as the \emph{monopoly price}).
Moreover, the sales rate $S(\cdot)$ is strictly decreasing, so the inverse $S^{-1}$ is well-defined. We say $F$ is a \emph{Monotone Hazard Rate (MHR)} distribution if $F(\cdot)$ is twice differentiable and the hazard rate $H(p) \triangleq F'(p)/ S(p)$ is non-decreasing. All MHR distributions are regular.

A \emph{fixed-price strategy} with $n$ agents, $k$ items and price $p$, denoted $\A_k^n(p)$, is a pricing strategy that makes a fixed offer price $p$ to every agent so long as fewer than $k$ items have been sold, and stops afterwards (equivalently, from that point always sets the price to $\infty$). Note that for the unlimited supply case $\A^n_n(p)$ sells $n\,S(p)$ items in expectation.

A pricing strategy is called \emph{detail-free} if it does not use the knowledge of the demand distribution. We are interested in designing detail-free pricing strategies with good performance for {\em every} demand distribution in some (large) family of distributions. We compare our mechanisms to two benchmarks that depend on the demand distribution: the maximal expected revenue of an offline mechanism (the \emph{offline benchmark}), and the maximal expected revenue of a fixed price mechanism (the \emph{fixed-price benchmark}). An offline mechanism that maximizes expected revenue was given in the seminal paper of Myerson~\cite{Mye81}; it is not an online posted price mechanism.

Let $\Payoff(\A)$ be the total expected revenue achieved by mechanism $\A$. We define the \emph{regret} of $\A$ with respect to the fixed-price benchmark as follows:
$    \Regret(\A) \triangleq \textstyle{\max_p}\;
        \Payoff[\A_k^n(p)] - \Payoff(\A)$.
Thus, regret is the additive loss in expected revenue compared to the best fixed-price mechanism.
(Note that the regret of $\A$ could, in principle, be a negative number, since the fixed-price benchmark is not generally the Bayesian optimal pricing strategy for distribution $F$.)


\xhdr{Benchmarks Comparison.}
We observe that for regular demand distributions, the fixed-price benchmark is close to the offline benchmark. This result is immediate from Yan~\cite{Yan11}; we provide a self-contained proof in Appendix~\ref{app:benchmarks}.

\begin{lemma}[Yan~\cite{Yan11}]\label{lm:benchmark}
For each regular demand distribution there exists a fixed-price strategy whose expected revenue is at least the offline benchmark minus
    $O(\sqrt{k})$.
\end{lemma}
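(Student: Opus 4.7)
The plan is to combine two standard ingredients: an upper bound on the offline benchmark via the ex-ante relaxation, and a lower bound on a well-chosen fixed-price strategy via binomial concentration. Concavity of $R(\cdot)$ is used only to ensure $R$ has a single maximizer and is monotone on either side of it.

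Let $p_r \in \arg\max_{p \in [0,1]} R(p)$ be the Myerson reserve price and let $\hat p \in [0,1]$ satisfy $S(\hat p) = k/n$ (taking $\hat p = 0$ if no such point exists, i.e.\ if $k \ge n$). I would first upper-bound the offline benchmark using the ex-ante relaxation of Myerson's mechanism: any feasible mechanism for $k$ items and $n$ IID bidders has revenue at most
\[
\mathrm{OPT} \;\le\; \max_{p \in [0,1]} \bigl\{\, n\, R(p) \;:\; n\,S(p) \le k \,\bigr\}.
\]
Since $R$ is concave (by regularity) with unique maximizer $p_r$, this maximum equals $n\,R(p_r)$ in the ``abundant supply'' case $S(p_r) \le k/n$, and equals $n\,R(\hat p) = k\,\hat p$ in the ``scarce supply'' case $S(p_r) > k/n$.

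Next, for any price $p$, the fixed-price strategy $\A_k^n(p)$ sells $\min(k,X)$ items where $X \sim \mathrm{Bin}(n,\, S(p))$, yielding expected revenue
\[
\Payoff[\A_k^n(p)] \;=\; p \cdot \E[\min(k,X)] \;=\; n\, R(p) \;-\; p \cdot \E[(X-k)^+].
\]
I will pick $p$ based on the case:
\begin{itemize}
\item \textbf{Abundant supply} ($S(p_r) \le k/n$): take $p = p_r$. Then $\E[X] = n\,S(p_r) \le k$, so $\E[(X-k)^+] \le \E[\,|X-\E[X]|\,] \le \sqrt{\Var(X)} \le \sqrt{n\,S(p_r)} \le \sqrt{k}$, giving $\Payoff \ge n\,R(p_r) - \sqrt{k} \ge \mathrm{OPT} - \sqrt{k}$.
\item \textbf{Scarce supply} ($S(p_r) > k/n$): take $p = \hat p$. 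Then $\E[X] = k$, so $\E[(X-k)^+] = \tfrac12 \E[\,|X-k|\,] \le \tfrac12 \sqrt{\Var(X)} \le \tfrac12 \sqrt{k}$, giving $\Payoff \ge n\,R(\hat p) - \tfrac12 \hat p\, \sqrt{k} = k\,\hat p - O(\sqrt{k}) \ge \mathrm{OPT} - O(\sqrt{k})$, using $\hat p \le 1$.
\end{itemize}
Either way we have exhibited a fixed-price strategy whose expected revenue is within $O(\sqrt{k})$ of the offline benchmark.

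The only nontrivial step is the ex-ante upper bound, which is standard: relax the interim allocation constraint to $\sum_i \E[x_i(v_i)] \le k$, observe that by IID the optimum picks a symmetric threshold rule $x(v) = \mathbf{1}[v \ge p^*]$, and that the Lagrangian is maximized over the feasible $p^*$ at either $p_r$ (unconstrained optimum of $R$) or $\hat p$ (the constraint boundary), depending on which case we are in. The binomial concentration bounds are routine applications of Cauchy--Schwarz to $\Var(X) \le n\,S(p) \le k$ (in the scarce case, using $nS(\hat p)=k$ exactly, and $\E[(X-k)^+]=\E[(k-X)^+]$ by $\E[X]=k$). No step presents a serious obstacle; the content of the lemma is really just choosing the right price in each regime.
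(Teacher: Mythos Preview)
Your proof is correct and, for the purposes of this lemma, arguably cleaner than the paper's. The upper bound on the offline benchmark via the ex-ante relaxation is essentially the same argument the paper uses (Claim~\ref{cl:compare_myerson}), just phrased differently: the paper argues that by symmetry each agent wins with probability $q\le k/n$ in the Myerson auction, then invokes concavity of $R$ as a function of selling probability (Jensen) to get $\Payoff(M^n_k)\le nR(S^{-1}(q))\le nR(p^*)$.

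Where you diverge is in the lower bound on $\Payoff[\A_k^n(p)]$. The paper's self-contained argument uses multiplicative Chernoff bounds (Claim~\ref{cl:benchmark-nu}) and only obtains $O(\sqrt{k\log k})$; to get the sharper $O(\sqrt{k})$ it invokes Yan's correlation-gap bound $\Payoff(\A^n_k(p))\ge (1-\tfrac{1}{\sqrt{2\pi k}})\,\Payoff(\A^n_n(p))$ (Claim~\ref{cl:correlationgap}, Lemma~\ref{lem:Yan11}), which in turn relies on the correlation-gap machinery for the $k$-uniform matroid. You instead write $\Payoff[\A_k^n(p)]=nR(p)-p\,\E[(X-k)^+]$ and bound the overshoot by $\sqrt{\Var(X)}\le\sqrt{nS(p)}\le\sqrt{k}$ via Cauchy--Schwarz. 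This is more elementary and reaches $O(\sqrt{k})$ directly, with no correlation-gap technology. The trade-off is that the paper's route also yields the multiplicative guarantee $(1-\tfrac{1}{\sqrt{2\pi k}})$, which it later needs (Lemma~\ref{lm:auction_smoothness}, Section~\ref{sec:descending}); your additive bound would not give that.
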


\OMIT{ 
In fact, \cite{Yan11} obtains a stronger, multiplicative version of the bound in Lemma \ref{lm:benchmark} which we use in Section \ref{sec:descending} (see Appendix~\ref{app:benchmark}).
}

Lemma \ref{lm:benchmark} implies that any pricing strategy with regret
    $O(R)$, $R = \Omega(\sqrt{k})$
with respect to the fixed-price benchmark has the same asymptotic regret $O(R)$ with respect to the offline benchmark, as long as the demand distribution is regular, and in particular if it is MHR. Therefore, the rest of the paper can focus on the fixed-price benchmark. In particular, our main result, Theorem~\ref{thm:main} for regular distributions, follows from Theorem~\ref{thm:main-fixed} that addresses the fixed-price benchmark.


Furthermore, the expected revenue of a fixed-price mechanism has an easy characterization:

\begin{claim}\label{cl:benchmark-nu}
Let $\A$ be the fixed-price mechanism with price $p$. Let
    $\nu(p) = p \min(k, n\,S(p)))$.
Then
\begin{align}\label{eq:cl-benchmark-nu}
\nu(p) - O(p\sqrt{k\log k}) \leq \Payoff(\A) \leq  \nu(p).
\end{align}
It follows that for a strictly regular demand distribution the bound in Lemma \ref{lm:benchmark} is satisfied for the fixed price
    $p^* = \argmax_p \nu(p) =  \max(\ReservePrice,\, S^{-1}(\fkn))$,
where
    $\ReservePrice = \argmax_p p\, S(p)$
is the Myerson reserve price.
\end{claim}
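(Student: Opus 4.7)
My plan is to reduce both bounds in the claim to a concentration statement about the random variable $Y=\sum_{i=1}^{n}X_i$, where $X_i=\mathbf{1}[v_i\ge p]$ are i.i.d.\ Bernoulli$(S(p))$. Since $\A_k^n(p)$ keeps offering $p$ until either the $k$ items or the $n$ agents are exhausted, the number of items sold is exactly $\min(k,Y)$, so $\Payoff(\A)=p\cdot\mathbb{E}[\min(k,Y)]$. Writing $\mu=\mathbb{E}[Y]=nS(p)$, the bounds then reduce to showing $\min(k,\mu)-O(\sqrt{k\log k})\le\mathbb{E}[\min(k,Y)]\le\min(k,\mu)$. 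The upper half is immediate from Jensen's inequality applied to the concave function $y\mapsto\min(k,y)$.

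For the lower half I split on whether $\mu\le k$ or $\mu>k$. When $\mu\le k$, I write $\min(k,Y)=Y-(Y-k)^{+}$ and observe that $(Y-k)^{+}\le(Y-\mu)^{+}$, hence $\mathbb{E}[\min(k,Y)]\ge\mu-\mathbb{E}[(Y-\mu)^{+}]\ge\mu-\sqrt{\mathrm{Var}(Y)}\ge\mu-\sqrt{k}$. When $\mu>k$, I write $\min(k,Y)=k-(k-Y)^{+}$; the nontrivial step is bounding $\mathbb{E}[(k-Y)^{+}]$, and the naive bound $\sqrt{\mathrm{Var}(Y)}\le\sqrt{\mu}$ is too weak since $\mu$ may be as large as $n\gg k$. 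Instead, since $(k-Y)^{+}$ vanishes off the event $\{Y\le k\}$ and is bounded there by $\mu-Y$, Cauchy--Schwarz gives
\[
\mathbb{E}[(k-Y)^{+}]\;\le\;\sqrt{\mathrm{Var}(Y)\cdot\Pr[Y\le k]}\;\le\;\sqrt{\mu}\,\exp\!\left(-\tfrac{(\mu-k)^2}{4\mu}\right),
\]
where the final inequality is a standard Chernoff lower-tail bound. A short case analysis according to whether $\mu-k$ is $\lesssim\sqrt{k\log k}$ or larger then yields $\mathbb{E}[(k-Y)^{+}]=O(\sqrt{k\log k})$: when $\mu$ is close to $k$ we have $\sqrt{\mu}=O(\sqrt{k})$, and when $\mu$ greatly exceeds $k$ the exponential factor swallows the extra $\sqrt{\mu}$.

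For the ``It follows'' assertion, I first identify $\argmax_p\nu(p)$ by inspecting the two regimes of $\nu$. On $\{p\le S^{-1}(\fkn)\}$, $\nu(p)=pk$ is strictly increasing. On $\{p\ge S^{-1}(\fkn)\}$, $\nu(p)=n\,R(p)$, which is strictly concave with unique maximizer $\ReservePrice$ under strict regularity. Hence the two pieces meet at the boundary $S^{-1}(\fkn)$ and the global maximum is attained at $p^{*}=\max(\ReservePrice,S^{-1}(\fkn))$. To see that $\A_k^n(p^{*})$ realizes the Lemma~\ref{lm:benchmark} bound, let $q$ be any fixed price achieving that lemma's guarantee; combining $\Payoff(\A_k^n(q))\le\nu(q)\le\nu(p^{*})$ with the lemma shows $\nu(p^{*})$ is within $O(\sqrt{k})$ of the offline benchmark, and plugging $p=p^{*}$ into the just-proved lower bound yields the desired guarantee for $\A_k^n(p^{*})$.

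\textbf{Main obstacle.} The real work is in the lower bound when $\mu>k$: because $\mu$ can scale with $n$ rather than $k$, the variance of $Y$ alone is far too large to absorb into an $O(\sqrt{k\log k})$ error, and one must combine the variance bound with the Chernoff tail probability via Cauchy--Schwarz to recover the correct scaling. Everything else (Jensen, the $\mu\le k$ case, and the piecewise analysis of $\nu$) is essentially one-line reasoning.
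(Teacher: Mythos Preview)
Your proof is correct, but it takes a different and somewhat more elaborate route than the paper's. The paper argues purely through a high-probability event: by a single Chernoff bound, with probability at least $1-\tfrac{1}{k}$ one has $X\ge\mu-O(\sqrt{\mu\log k})$, and on that event
\[
\min(k,X)\ \ge\ \min\!\bigl(k,\ \mu-O(\sqrt{\mu\log k})\bigr)\ \ge\ \min(k,\mu)-O(\sqrt{k\log k}),
\]
the last step because either $\mu-O(\sqrt{\mu\log k})\ge k$ already, or else $\mu$ is forced to be within $O(\sqrt{k\log k})$ of $k$ so that $\sqrt{\mu}=O(\sqrt{k})$. The failure event costs at most $p$ in expectation. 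Thus the paper never needs your case split on $\mu\lessgtr k$, your Cauchy--Schwarz step, or the subsequent analysis of $\sqrt{\mu}\,e^{-(\mu-k)^2/(4\mu)}$; the ``$\mu$ can be much larger than $k$'' issue you flag as the main obstacle is absorbed in one line by the $\min(k,\cdot)$.

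What your approach buys: the $\mu\le k$ branch uses only the variance and gives $O(\sqrt{k})$ with no log factor, and the whole argument is a direct expectation bound rather than a high-probability-plus-bad-event decomposition. What the paper's approach buys: it is a two-line calculation once the Chernoff event is stated, with no separate treatment of the two regimes. Your handling of the ``It follows'' part (identifying $p^*$ via the piecewise description of $\nu$ and then chaining $\Payoff(\A_k^n(q))\le\nu(q)\le\nu(p^*)$ through Lemma~\ref{lm:benchmark}) matches the paper's intent.
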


\begin{proof}
Let us focus on the first inequality in~\eqref{eq:cl-benchmark-nu} (the second one is obvious). Let $X_t$ be the indicator variable of sale in round $t$. Denote $X = \sum_{t=1}^n X_t$ and let $\mu = \E[X]$. Then by Chernoff Bounds (Theorem~\ref{thm:chernoff}(a)) with probability at least $1-\tfrac{1}{k}$
it holds that
    $X\geq \mu - O(\sqrt{\mu \log k})$,
in which case
\begin{align*}
\text{\#sales}
    = \min(k,X)
    \geq \min(k, \mu - O(\sqrt{\mu \log k}))
    \geq \min(k,\mu) - O(\sqrt{k \log k}),
\end{align*}
which implies the claim since $\mu = n\,S(p)$.
\end{proof}

\OMIT{ 
We use this fact in Section~\ref{sec:UCB}. Moreover, we can use it to characterize a near-optimal fixed price. This characterization provides intuition for the rest of the paper, and it is used directly in Section \ref{sec:descending}.

\begin{lemma}\label{lm:benchmark-specific}
It follows that the bound in Lemma \ref{lm:benchmark} is satisfied for the fixed price
    $p^* = \argmax_p \nu(p) =  \max(r,\, S^{-1}(\fkn))$,
where
    $r = \argmax_p p\, S(p)$
is the Myerson reserve price.
\end{lemma}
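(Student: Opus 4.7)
The plan is to handle the two inequalities in \eqref{eq:cl-benchmark-nu} separately via standard concentration, then identify $p^{*}$ by a monotonicity case analysis.

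For the upper bound $\Payoff(\A) \leq \nu(p)$, I would write the number of sales as $\min(k, X)$ where $X = \sum_{t=1}^n X_t$ and each $X_t$ is the Bernoulli indicator that agent $t$ buys at price $p$, so $\mu := \E[X] = n S(p)$. Since $\min(k, \cdot)$ is concave, Jensen's inequality gives $\E[\min(k, X)] \leq \min(k, \mu)$, and multiplying by the sale price $p$ finishes this direction.

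For the lower bound, I would apply a standard multiplicative Chernoff bound to $X \sim \mathrm{Binomial}(n, S(p))$ to get $X \geq \mu - O(\sqrt{\mu \log k})$ with probability $\geq 1 - 1/k$. The elementary inequality $\min(k, a - c) \geq \min(k, a) - c$ then yields $\min(k, X) \geq \min(k, \mu) - O(\sqrt{\mu \log k})$ on the good event; absorbing the $1/k$-probability bad event into the error costs only an additive $1$ in expectation. The one subtlety is that $\sqrt{\mu \log k}$ is not always $O(\sqrt{k \log k})$: when $\mu \leq 2k$ it is, but when $\mu \gg k$ I would instead observe that a one-sided Chernoff bound gives $\Pr[X < k] \leq e^{-\Omega(\mu)} \leq e^{-\Omega(k)}$, so $\E[\min(k,X)] \geq k(1 - e^{-\Omega(k)})$, which is within $o(1)$ of $\min(k,\mu) = k$. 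Either regime therefore loses only $O(\sqrt{k \log k})$ sales in expectation, giving a revenue loss of $O(p \sqrt{k \log k})$.

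To identify $p^{*} = \argmax_p \nu(p)$, I would split at $p_0 := S^{-1}(k/n)$. For $p \leq p_0$ we have $nS(p) \geq k$ and hence $\nu(p) = kp$, which is strictly increasing; for $p \geq p_0$, $\nu(p) = n R(p)$. Under strict regularity $R$ is strictly concave with unique maximizer $\ReservePrice$, so on $[p_0, \infty)$ the function $nR$ is maximized at $\max(\ReservePrice, p_0)$. A two-case check (whether $\ReservePrice \geq p_0$ or $\ReservePrice < p_0$) then confirms $p^{*} = \max(\ReservePrice, p_0)$. To conclude that $p^{*}$ witnesses Lemma~\ref{lm:benchmark}, let $\tilde p$ be any price guaranteed by that lemma; then $\nu(p^{*}) \geq \nu(\tilde p) \geq \Payoff(\A_k^n(\tilde p))$ by the upper bound above, so applying the lower bound at $p^{*}$ yields $\Payoff(\A_k^n(p^{*})) \geq \nu(p^{*}) - O(\sqrt{k\log k})$, which is within $O(\sqrt{k \log k})$ of the offline benchmark (the logarithmic factor is absorbed into the $O(\sqrt{k})$ statement of Lemma~\ref{lm:benchmark} up to a minor informality).

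The main obstacle is really the case split in the concentration step, where the direct Chernoff bound produces $\sqrt{\mu \log k}$ rather than the desired $\sqrt{k \log k}$; the two-regime argument above is the cleanest way I see to remove that $\sqrt{\mu/k}$ slack. Everything else is a short algebraic/monotonicity computation.
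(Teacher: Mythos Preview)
Your proposal is correct. The concentration argument for~\eqref{eq:cl-benchmark-nu} is essentially the paper's: Chernoff on $X$ followed by $\min(k,X)\geq \min(k,\mu)-O(\sqrt{k\log k})$. The paper writes this last step as a single chain of inequalities without comment; your two-regime split ($\mu\leq 2k$ versus $\mu\gg k$) is exactly the justification that chain needs, so you have not done anything different here, only been more careful. The identification $p^{*}=\max(\ReservePrice,\,S^{-1}(\fkn))$ via the breakpoint $p_0=S^{-1}(\fkn)$ and concavity of $R$ is also the standard computation.

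Where you genuinely diverge from the paper is the last step. You take Lemma~\ref{lm:benchmark} as a black box, pick any witnessing price $\tilde p$, and transfer the guarantee to $p^{*}$ through the sandwich $\Payoff(\A^n_k(p^{*}))\geq \nu(p^{*})-O(\sqrt{k\log k})\geq \nu(\tilde p)-O(\sqrt{k\log k})\geq \Payoff(\A^n_k(\tilde p))-O(\sqrt{k\log k})$. The paper instead argues directly (Appendix~\ref{app:benchmarks}, Claim~\ref{cl:compare_myerson}) that $\nu(p^{*})=n\,p^{*}S(p^{*})\geq \Payoff(M^n_k)$ by a Jensen-on-virtual-values comparison with the Myerson auction, and then invokes~\eqref{eq:cl-benchmark-nu} once. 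Your route is shorter and needs nothing about Myerson's auction, but it is not self-contained: it presupposes that \emph{some} fixed price already achieves the benchmark. The paper's route is self-contained and in fact is how Lemma~\ref{lm:benchmark} itself is established for this specific $p^{*}$. Both are valid given how the statement is phrased (``It follows that\ldots'').
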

} 

\newcommand{\mP}{\mathcal{P}}                   
\newcommand{\mPsel}{\mathcal{P}_{\text{sel}}}   
\newcommand{\bestP}{p^*_{\text{act}}}           
\newcommand{\bestNu}{\nu^*_{\text{act}}}        

\newcommand{\SurvUB}{S^{\mathtt{UB}}_t(p)}

\section{The main technical result: the upper bound in Theorem~\ref{thm:main-fixed}}
\label{sec:UCB}

\OMIT{ 
This section is devoted to the main technical result: Theorem~\ref{thm:main-fixed} for the fixed-price benchmark. This result is very general, as it makes no assumptions on the demand distribution. Let us restate it here for convenience:


\begin{theorem}\label{thm:main-body}
There exists a detail-free, 
online posted-price mechanism whose regret with respect to the fixed-price benchmark is at most $O(k \log n)^{2/3}$.
\end{theorem}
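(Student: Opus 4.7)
The plan is to design an index-based bandit algorithm on a discretized price set and analyze its regret against the best price on the grid. Fix a discretization step $\delta$ (tuned at the end) and let $\mP = \{\,i\delta : 1\le i \le 1/\delta\,\}$, a set of size $K = 1/\delta$. For each $p \in \mP$, maintain an empirical survival estimate $\widehat S_t(p)$ together with a UCB-style upper bound $\SurvUB = \widehat{S}_t(p) + C\sqrt{\log(n)/N_t(p)}$, where $N_t(p)$ is the number of past rounds in which $p$ was offered. Instead of scoring each arm by the per-round value $p\cdot\SurvUB$, I would score it by the \emph{capped} optimistic total revenue of committing to $p$ from the current round onward:
\[
U_t(p) \;=\; p \cdot \min\!\bigl(k_t,\; (n-t+1)\,\SurvUB\bigr),
\]
where $k_t$ denotes the remaining inventory. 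In each round, the algorithm offers the arm $p_t \in \mP$ maximizing $U_t(p)$. The cap is essential: a naive per-round UCB would converge to $\arg\max_p p\,S(p)$, which can be catastrophic when supply is so scarce that this price exhausts the inventory well before round $n$.

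The analysis then splits into two sources of error. First, by Claim~\ref{cl:benchmark-nu}, restricting the benchmark to the grid loses at most $O(k\delta)$ in expected revenue, since rounding the optimal price to the nearest grid point shifts $\nu(\cdot)$ by at most $\delta\cdot k$. Second, I bound the regret against the best grid price $p^\star$. The starting point is the clean UCB event $\mathcal{E}$: by Chernoff and a union bound over $p\in\mP$ and $t\le n$, with probability at least $1-1/n$ we have $\SurvUB \ge S(p)$ for all $p,t$, and simultaneously $|\SurvUB - S(p)| = O(\sqrt{\log(n)/N_t(p)})$. On $\mathcal{E}$, optimism yields $U_t(p^\star) \ge p^\star \min(k_t,\,(n-t+1)\,S(p^\star))$, which is a valid lower bound on the expected revenue-to-go of the fixed-price benchmark at state $(k_t,t)$; since the algorithm picks $p_t$ maximizing $U_t$, we obtain the key inequality $U_t(p_t) \ge U_t(p^\star) \ge \text{(true continuation value of the benchmark)}$.

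The heart of the argument is to convert this per-round optimism into a bound on the \emph{cumulative} shortfall. I plan a telescoping / potential-style argument that tracks how $U_t(p^\star)$ decreases from round $t$ to round $t+1$: it drops by $p^\star$ when a sale occurs (since $k_t$ goes down by $1$ in the supply-limited regime) and by $p^\star S(p^\star)$ in expectation otherwise. Combining this with $U_t(p_t)\ge U_t(p^\star)$, the per-round shortfall of the algorithm relative to $p^\star$ can be charged to the confidence width $p_t\sqrt{\log(n)/N_t(p_t)}$ of the arm played. Summing these widths via the standard bandit trick $\sum_t 1/\sqrt{N_t(p_t)}\le \sum_{p\in\mP} O(\sqrt{N_n(p)})$ and then Cauchy--Schwarz $\le O(\sqrt{K\cdot T_{\mathrm{eff}}})$, where the supply cap $\sum_t \mathbf 1\{\text{sale at }t\}\le k$ lets me replace the effective horizon $T_{\mathrm{eff}}$ by something of order $k$, should yield a regret against $p^\star$ of order $\sqrt{kK}$ up to polylogarithmic factors. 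Balancing this with the discretization error $k\delta$ by choosing $\delta \asymp (\log n)^{2/3} k^{-1/3}$ then gives the advertised $O((k\log n)^{2/3})$ upper bound.

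The main obstacle, and the reason that the standard UCB analysis does not transfer directly, is the non-smoothness of the cap $\min(k_t,\cdot)$ inside $U_t$. A small error in $\SurvUB$ can swing the index substantially near the ``knee'' where the two arguments of the minimum coincide, so the classical per-arm gap-based regret argument breaks. I would therefore partition rounds into a \emph{supply-limited} regime, where $U_t(p^\star)$ is determined by the $k_t$ term, and a \emph{demand-limited} regime, where it is determined by $(n-t+1)\,\SurvUB$, and bound the shortfall separately. The supply cap controls the length of the supply-limited regime (this is the novel piece relative to classical UCB analyses), while the demand-limited regime reduces to a more standard bandit argument on the linear per-round surrogate. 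Making these two regime-specific bounds interact cleanly via the telescoping potential is where I expect the argument to require the most care.
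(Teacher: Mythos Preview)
Your high-level plan—an optimistic index based on capped total revenue, a discretized price grid, and balancing discretization error against bandit regret—matches the paper. But there is a genuine gap at the step where you assert that ``the supply cap $\sum_t \mathbf 1\{\text{sale}\}\le k$ lets me replace the effective horizon $T_{\mathrm{eff}}$ by something of order $k$.''

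With the Hoeffding-style radius $C\sqrt{\log n/N_t(p)}$ you propose, the per-round width is $p_t\sqrt{\log n/N_t(p_t)}$, and the standard sum-plus-Cauchy--Schwarz gives $O(\sqrt{Kn\log n})$. There is no mechanism by which ``at most $k$ sales'' converts this $n$ into a $k$: the algorithm can (and typically will) run for $\Theta(n)$ rounds by offering prices that rarely sell, and the Hoeffding width is oblivious to the fact that the underlying sale probability is tiny. Optimizing $\sqrt{n\log n/\delta}+k\delta$ yields regret $\tilde O((nk)^{1/3})$, strictly worse than $(k\log n)^{2/3}$ whenever $k\ll n$.

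The missing ingredient, which the paper singles out explicitly, is a Bernstein-type confidence radius
\[
r_t(p)\;=\;\frac{\alpha}{N_t(p)+1}+\sqrt{\frac{\alpha\,\widehat S_t(p)}{N_t(p)+1}},\qquad \alpha=\Theta(\log n).
\]
The structural fact the analysis exploits is that any price $p$ with positive badness $\Delta(p)=\tfrac{1}{n}\nu^*-pS(p)>0$ must satisfy $S(p)<\tfrac{k}{n}$ (since otherwise $\nu(p)=pk\ge\nu^*$). Plugging $S(p)<\tfrac{k}{n}$ into the Bernstein radius gives $r_t(p)\lesssim\sqrt{\tfrac{k}{n}\cdot\tfrac{\log n}{N_t(p)}}$, and that extra $\sqrt{k/n}$ is exactly what turns the $n$ in the regret sum into a $k$, producing $N(p)\Delta(p)\le O(\log n)\bigl(1+\tfrac{k}{n}\,\tfrac{1}{\Delta(p)}\bigr)$ rather than $O(\log n)/\Delta(p)$. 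In short, the supply cap enters the analysis not through the number of rounds played, but through the fact that all suboptimal prices have small sale probability and hence small variance.

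A second, smaller divergence: the paper defines the index with the \emph{original} $k$ and $n$, namely $I_t(p)=p\min\bigl(k,\,n\,\SurvUB\bigr)$, not with the remaining $k_t$ and $n-t+1$ as you do. This makes $I_t(p)$ a UCB on the fixed quantity $\nu(p)$, so the analysis works with a static badness $\Delta(p)$ and bounds $N(p)\Delta(p)$ directly; the passage from $\sum_t p_tS(p_t)$ to realized revenue is then handled in one clean step (essentially: if total hypothetical sales exceed $k$, you already earned $\ge\nu^*$ because every offered price is at least $\nu^*/k$). Your telescoping-potential scheme with a moving $U_t(p^\star)$ and a regime split at the knee of the $\min$ is precisely the place the paper warns is fragile, and it sidesteps it entirely.
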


\begin{note}{Remark.}
Theorem~\ref{thm:main-body} provides a non-trivial bound as long as
    $k>\Omega(p^{-3})(\log^2 n)$,
where $p$ is the price such that $S(p) = \fkn$. This is because by Claim~\ref{cl:benchmark-nu} the expected revenue from the fixed-price mechanism with this price is at least
    $kp - O(\sqrt{k\log k})$.
\end{note}
} 

This section is devoted to the main technical result (the upper bound in Theorem~\ref{thm:main-fixed}) which asserts that there exists a detail-free pricing strategy whose regret with respect to the fixed-price benchmark is at most $O((k \log n)^{2/3})$. This result is very general, as it makes no assumptions on the demand distribution.

As discussed in Section~\ref{sec:intro}, we design an algorithm that carefully optimizes the trade-off between exploration and exploitation. We use an \emph{index-based} algorithm in which each arm is assigned a numerical score, called \emph{index}, so that in each round an arm with the highest index is picked. The index of an arm depends only on the past history of this arm. In prior work on index-based bandit algorithms the index of an arm was defined according to estimated expected payoff from this arm in a single round. Instead, we define the index according to estimated expected \emph{total payoff} from this arm given the constraints.

We apply the above idea to $\UCB$. The index in $\UCB$ is, essentially, the best available Upper Confidence Bound (UCB) on the expected single-round payoff from a given arm. Accordingly, we define a new index, so that the index of a given price corresponds to a UCB on the expected total payoff from this price (i.e., from a fixed-price strategy with this price), given the number of agents and the inventory size. Such index takes into account both the average payoff from this arm (``exploitation'') and the number of samples for this arm (``exploration''), as well as the supply constraint. In particular we recover the appealing property of $\UCB$ that it does not separate ``exploration" and ``exploitation", and instead explores arms (prices) according to a schedule that unceasingly adapts to the observed payoffs.

There are several steps to make this approach more precise. First, while it is tempting to use the current values for the number of agents and the inventory size to define the index, we adopt a non-obvious (but more elegant) design choice to use the original values, i.e. the $n$ and the $k$. Second, since the exact expected total payoff for a given price is hard to quantify, we will instead use a natural approximation thereof provided by $\nu(p)$ in Claim~\ref{cl:benchmark-nu}. In other words, our index will be a UCB on $\nu(p)$. Third, in specifying the UCB we will use non-standard estimator from~\cite{LipschitzMAB-stoc08} to better handle prices with very low sales rate.

The main technical hurdle in the analysis is to ``charge'' each suboptimal price for each time that it is chosen, in a way that the total regret is bounded by the sum of these charges and this sum can be usefully bounded from above.  The analysis of $\UCB$ accomplishes this via simple (but very elegant) tricks which, unfortunately, fail in the limited supply setting.

An additional difficulty comes from the probabilistic nature of the analysis. While we adopt a well-known trick -- we define some high-probability events and assume that these events hold deterministically in the rest of the analysis -- choosing an appropriate collection of events is, in our case, non-trivial. Proving that these events indeed hold with high probability relies on some non-standard tail bounds from prior work.

\OMIT{ 
\subsection{High-level discussion}
\label{subsec-overview}

Absent the supply constraint, our problem fits into the \emph{multi-armed bandit} (MAB) framework: in each round, an algorithm chooses among a fixed set of alternatives (``arms'') and observes a payoff, and the objective is to maximize the total payoff over a given time horizon. Our setting corresponds to MAB with \emph{stochastic payoffs}: in each round, the payoff is an independent sample from some unknown distribution that depends on the chosen ``arm'' (price).
This connection is exploited in~\cite{KleinbergL03} for the special case of unlimited supply ($k=n$). The authors use a standard algorithm for MAB with stochastic payoffs, called $\UCB$~\cite{bandits-ucb1}. Specifically, they focus on the prices $\{ i\delta:\, i\in \N\}$, for some parameter $\delta$, and run $\UCB$ with these prices as ``arms''. The analysis relies on the regret bound from~\cite{bandits-ucb1}.

Unfortunately, neither the analysis nor the intuition behind $\UCB$ and similar MAB algorithms is directly applicable to the setting with limited supply. Informally, the goal of an MAB algorithm is to converge to an arm with the highest average payoff. This, of course, is a wrong approach if the supply is limited: if selling at the ``best price'' quickly exhausts the inventory, then a higher price is more profitable.

Our main conceptual challenge is to recover, for the limited supply setting, the appealing feature of $\UCB$ that it does not separate ``exploration'' (acquiring new information) and ``exploitation'' (taking advantage of the infomation available so far). Instead, $\UCB$ explores arms according to a schedule that continuously adapts to the observed payoffs, and thus ensures that  (very) suboptimal arms are chosen (very) rarely even while they are being ``explored". Following the design of $\UCB$, we would like to assign each arm a numerical score, called \emph{index}, so that an arm with the highest index is picked. In $\UCB$, the index is , essentially, the best available Upper Confidence Bound (UCB) on the expected payoff of this arm; the fact that the index depends on both the average payoff and the number of times this arm has been played so far provides the desired exploration-exploitation combination.

The first step is to define an ``index" that reflects exploration, exploitation, and the limited inventory. The key idea here is to interpret the index of a price $p$ as an UCB on the \emph{total payoff} from this price given the inventory, i.e. the total payoff of a fixed-price strategy with price $p$. A natural approximation for the total payoff is given by $\nu(p)$ in Claim~\ref{cl:benchmark-nu}. We adopted a non-obvious (but more elegant) design choice to define the index via $n$ and $k$ rather than, respectively, the number of agents that haven't yet arrived and the number of items that are still unsold. Further, in defining the UCB we used a non-standard estimator from~\cite{LipschitzMAB-stoc08}.

The main technical hurdle in the analysis is to ``charge'' each suboptimal price for each time that it is chosen, in a way that the total regret is bounded by the sum of these charges and this sum can be usefully bounded from above.  The analysis of $\UCB$ accomplishes this via simple (but very elegant) tricks which, unfortunately, fail in the limited supply setting.
} 

\subsection{Our pricing strategy}

Let us define our pricing strategy, called $\UCBcap$. The pricing strategy is initialized with a set $\mP$ of ``active prices''. In each round $t$, some price $p\in \mP$ is chosen. Namely, for each price $p\in \mP$ we define a numerical score, called \emph{index}, and we pick a price with the highest index, breaking ties arbitrarily. Once $k$ items are sold, $\UCBcap$ sets the price to $\infty$ and never sells any additional item.

Recall from Claim~\ref{cl:benchmark-nu} that the expected revenue from the fixed-price strategy $\A_k^n(p)$ is approximated by
    $\nu(p) \triangleq p\cdot \min(k,\, n\, S(p))$.
In each round $t$, we define the \emph{index} $I_t(p)$ as a UCB on $\nu(p)$:
\begin{align*}
I_t(p) \triangleq p\cdot
    \min (k,\, n\, \SurvUB).
\end{align*}
Here $\SurvUB$ is a UCB on the sales rate $S(p)$, as defined below.


For each $p\in \mP$, let $N_t(p)$ be the number of rounds before $t$ in which price $p$ has been chosen, and let $k_t(p)$ be the number of items sold in these rounds. Then
    $\hat{S}_t(p) \triangleq k_t(p)/N_t(p)$
is the current average sales rate. To avoid division by
zero, we define $\hat{S}_t(p)$ to be equal to 1 when $N_t(p)=0$.
We will define
    $\SurvUB = \hat{S}_t(p)+ r_t(p)$,
where $r_t(p)$ is a \emph{confidence radius}: some number such that
\begin{align}\label{eq:confRad-defn}
|S(p) - \hat{S}_t(p)| \leq r_t(p)
    \quad (\forall\, p\in \mP, t\leq n).
\end{align}
holds with high probability, namely with probability at least $1-n^{-2}$.

We need to define a suitable confidence radius $r_t(p)$, which we want to be as small as possible subject to~\eqref{eq:confRad-defn}. Note that $r_t(p)$ must be defined in terms of quantities that are observable at time $t$, such as $N_t(p)$ and $\hat{S}_t(p)$.  A standard confidence radius used in the literature is (essentially)
$r_t(p) = \sqrt{\tfrac{\Theta(\log n)}{N_t(p)+1}}$.

Instead, we use a more elaborate confidence radius from~\cite{LipschitzMAB-stoc08}:
\begin{align}\label{eq:confRad}
r_t(p)
    \triangleq \frac{\alpha}{N_t(p)+1} +
         \sqrt{\frac{\alpha\, \hat{S}_t(p) }{N_t(p)+1}}, \quad
    \text{for some $\alpha= \Theta(\log n)$}.
\end{align}
The confidence radius in~\eqref{eq:confRad} performs as well as the standard one in the worst case:
    $r_t(p) \leq \sqrt{\tfrac{O(\log n)}{N_t(p)+1}}$,
and much better for very small sales rates:
    $r_t(p) \leq \tfrac{O(\log n)}{N_t(p)+1}$;
see Appendix~\ref{app:chernoff} for a self-contained proof.

\OMIT{ 
We use a more elaborate confidence radius from~\cite{LipschitzMAB-stoc08} which performs better for prices with low sales rate. Specifically, a result in~\cite{LipschitzMAB-stoc08} implies that for some  $\alpha = \Theta(\log n)$, with high probability we have
\begin{align}\label{eq:confRad}
|S(p) - \hat{S}_t(p)|
    \leq r_t(p)
    \triangleq \tfrac{\alpha}{N_t(p)+1} +
         \sqrt{\tfrac{\alpha\, \hat{S}_t(p) }{N_t(p)+1}}
     \leq 3\,\left( \tfrac{\alpha}{N_t(p)+1} +
         \sqrt{\tfrac{\alpha\, S_t(p) }{N_t(p)+1}} \right).
\end{align}
(See Lemma~\ref{lm:my-chernoff} for a self-contained proof.) The confidence radius in~\eqref{eq:confRad} performs as well as the standard one in the worst case:
    $r_t(p) \leq \sqrt{\tfrac{O(\log n)}{N_t(p)+1}}$,
but for very small sales rates we obtain a much better bound:
    $r_t(p) \leq \tfrac{O(\log n)}{N_t(p)+1}$.
} 

To recap, we have
\begin{align}\label{eq:index}
I_t(p) \triangleq p\cdot
    \min (k,\, n\, ( \hat{S}_t(p)+ r_t(p))),
    \text{ where $r_t(p)$ is from~\eqref{eq:confRad}.}
\end{align}

Finally, the active prices are given by
\begin{align}\label{eq:mP-def}
    \mP = \{\delta (1+\delta)^i \in [0,1]:\, i\in\N \},
    \text{~where $\delta\in(0,1)$ is a parameter}.
\end{align}
This completes the specification of $\UCBcap$. See Mechanism~\ref{alg:UCBcapped} for the pseudocode.

\floatname{algorithm}{Mechanism}
\begin{algorithm}[h]
\caption{Pricing strategy $\UCBcap$ for $n$ agents and $k$ items}
\label{alg:UCBcapped}
\begin{algorithmic}[1]
\PARAMETER $\delta \in (0,1)$
\STATE  $\mP \ot \{\delta (1+\delta)^i \in [0,1]:\, i\in\N \}$
    \COMMENT{``active prices"}
\STATE While there is at least one item left, in each round $t$ \\
    ~~~~pick any price $p\in \argmax_{p\in\mP} I_t(p)$,
        where $I_t(p)$ is the ``index" given by~\eqref{eq:index}.
\STATE For all remaining agents, set price $p=\infty$.
\end{algorithmic}
\end{algorithm}

\subsection{Analysis of the pricing strategy}
\label{subsec:analysis}

Our goal is to bound from above the \emph{regret} of $\UCBcap$, which is the difference between the optimal expected revenue of a fixed-price strategy and the expected revenue of $\UCBcap$. We prove that $\UCBcap$ achieves regret $O(k \log n)^{2/3}$ for a suitable choice of parameter $\delta$ in~\refeq{eq:mP-def}.

\begin{lemma}\label{lm:UCBcap}
$\UCBcap$ with parameter
    $\delta = k^{-1/3}\, (\log n)^{2/3}$
achieves regret
    $O(k \log n)^{2/3}$.
\end{lemma}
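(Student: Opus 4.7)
My plan is to reduce Lemma~\ref{lm:UCBcap}, via discretization and a high-probability clean event, to a deterministic regret calculation, then bound regret arm-by-arm using the UCB property of the index. First, by Claim~\ref{cl:benchmark-nu} the fixed-price benchmark is at most $\nu(p^*)$, where $p^* = \arg\max_p \nu(p)$; hence it suffices to lower-bound $\E[\Payoff(\UCBcap)]$ by $\nu(p^*) - O((k\log n)^{2/3})$. The geometric grid $\mP$ contains some $\tilde p$ with $\tilde p \le p^* \le (1+\delta)\tilde p$, and monotonicity of $S$ gives $\min(k, nS(\tilde p)) \ge \min(k, nS(p^*))$, so
\[
\nu(\tilde p) \;=\; \tilde p\,\min(k,nS(\tilde p))\;\ge\;\tfrac{p^*}{1+\delta}\min(k,nS(p^*))\;\ge\;(1-\delta)\,\nu(p^*),
\]
and the discretization loss is at most $\delta\,\nu(p^*)\le \delta k = O((k\log n)^{2/3})$ for the prescribed $\delta$. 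It thus remains to show $\E[\Payoff(\UCBcap)] \ge \nu(\tilde p) - O((k\log n)^{2/3})$.

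Next I would introduce the clean event $\mathcal{E} = \{|S(p)-\hat S_t(p)|\le r_t(p)\ \forall p\in\mP,\ t\le n\}$. By Appendix~\ref{app:chernoff} and a union bound over the $|\mP|\cdot n = \mathrm{poly}(n)$ pairs, with $\alpha=\Theta(\log n)$ sufficiently large, $\Pr[\bar{\mathcal{E}}]\le 1/n$, so the bad event contributes only $O(k/n) = o(1)$ to expected revenue. On $\mathcal{E}$, monotonicity of $x\mapsto\min(k,nx)$ yields the sandwich
\[
\nu(p)\;\le\;I_t(p)\;\le\;\nu(p) + 2pn\,r_t(p)\qquad\text{for all }p\in\mP,\ t\le n.
\]
Whenever $\UCBcap$ picks $p$ at round $t$, combining the left inequality at $\tilde p$ with the right inequality at $p$ gives $\nu(\tilde p)\le I_t(\tilde p)\le I_t(p)\le \nu(p)+2pn\,r_t(p)$, that is, the key inequality $\Delta(p) := \nu(\tilde p)-\nu(p)\le 2pn\,r_t(p)$. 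Substituting the explicit form of $r_t(p)$ from~\eqref{eq:confRad} and using $\hat S_t(p)\le S(p)+r_t(p)$ to eliminate $\hat S_t$, the dominant (square-root) term gives $N(p)\le O(\alpha\cdot pn\,\nu(p)/\Delta(p)^2)$. Because $p\le 1$ and $\nu(p)\le\nu(\tilde p)\le k$, we have $p\,\nu(p)\le k$, hence
\[
N(p)\;\le\;O\!\bigl(nk\log n\,/\,\Delta(p)^2\bigr)
\]
for every suboptimal $p\in \mP$ in the ``unconstrained'' regime $nS(p)\le k$. A price with $nS(p)>k$ has $I_t(p)=kp=\nu(p)$ once $\hat S_t(p)\ge k/n$, so on $\mathcal{E}$ such a price only beats $I_t(\tilde p)$ while $\Delta(p)\le 0$.

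The final step converts the per-price play bound into a regret bound. In the main subcase $nS(\tilde p)\le k$, both $\tilde p$ and any played $p$ lie in the unconstrained regime, so the per-round deficit equals $\tilde p S(\tilde p)-pS(p)=\Delta(p)/n$ and the contribution of $p$ to regret is at most $N(p)\,\Delta(p)/n \le O(k\log n/\Delta(p))$. A standard threshold split at $\Delta_0$---prices with $\Delta(p)\ge\Delta_0$ contribute at most $|\mP|\,k\log n/\Delta_0$, while prices with $\Delta(p)<\Delta_0$ contribute at most $\Delta_0\sum_p N(p)/n\le \Delta_0$---and optimizing yields exploration cost $O(\sqrt{|\mP|\,k\log n})$. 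With $|\mP|=O(\delta^{-1}\log(1/\delta))$ and $\delta=k^{-1/3}(\log n)^{2/3}$ this is $O((k\log n)^{2/3})$, matching the discretization cost.

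The main obstacle will be the supply-constrained subcase $nS(\tilde p)>k$, where the benchmark is supply-bound ($\nu(\tilde p)=k\tilde p$) and the clean per-round deficit formula breaks down: a low-demand price not only earns less per round but also delays inventory exhaustion (creating extra earning rounds), while a high-demand price does the reverse. I would resolve this by writing regret as $k\tilde p - \E[\Payoff(\UCBcap)] = \tilde p\cdot(k-\E[\text{items sold}]) + \E\bigl[\sum_{\text{sales}}(\tilde p - p_{\text{sale}})\bigr]$, noting that the expected number of sales at price $p$ is $N(p)S(p)$, and bounding each piece via the same inequality $\Delta(p)\le 2pn\,r_t(p)$ together with concentration of binomial sales. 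The unifying feature across both subcases is that the index $I_t(p)$ tracks $\nu(p)$ rather than the per-round mean $pS(p)$, which is precisely what ties the supply constraint into an $\UCB$-style bandit analysis.
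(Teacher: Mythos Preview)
Your high-level scaffolding (discretization, clean event, UCB sandwich $\nu(p)\le I_t(p)$, per-arm play bound, threshold split) matches the paper, and your arithmetic in the ``unconstrained'' subcase reaches the right order.  But the proposal has a genuine gap exactly at the point you flag as the ``main obstacle'': you do not actually handle the supply constraint, and the decomposition you propose for the constrained subcase is both vaguer and harder than necessary.

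The paper does \emph{not} split on whether $nS(\tilde p)\le k$.  It dissolves the whole issue with one observation: since $I_t(p_t)\le kp_t$ trivially and $I_t(p_t)\ge I_t(\bestP)\ge\bestNu$ on the clean event, every price the algorithm ever posts satisfies $p_t\ge\bestNu/k$.  Hence, if the inventory is ever exhausted, realized revenue is at least $k\cdot(\bestNu/k)=\bestNu$ and we are already done.  This yields Claim~\ref{cl:payoff-min},
\[
\RPayoff\;\ge\;\min\!\Bigl(\bestNu,\;\textstyle\sum_{t=1}^n p_tS(p_t)\Bigr)-O(\sqrt{k\log n}),
\]
after which the analysis is a pure unlimited-supply bandit bound on $\bestNu-\sum_t p_tS(p_t)$, using the per-round badness $\Delta(p)=\max(0,\tfrac1n\bestNu-pS(p))$ rather than your $\nu(\tilde p)-\nu(p)$.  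Your proposed decomposition $k\tilde p-\Payoff=\tilde p(k-\#\text{sales})+\sum_{\text{sales}}(\tilde p-p_{\text{sale}})$ loses this sign structure: the second sum is typically large and negative (the algorithm mostly sells above $\tilde p$), and you would have to cancel it against the first term---which is precisely what the one-line price floor $p_t\ge\bestNu/k$ does for free.

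A second, smaller gap: your clean event contains only the confidence-radius inequality.  Even in your ``easy'' subcase you silently replace realized revenue $\sum_{t\le N}p_tX_t$ by $\sum_{t=1}^n p_tS(p_t)$.  That replacement needs two further high-probability events---concentration of $\sum_t X_t$ and of $\sum_t p_tX_t$ around their adapted conditional means, with error $O(\sqrt{k\log n})$ rather than $O(\sqrt{n\log n})$.  The paper adds these as events~\eqref{eq:X-S}--\eqref{eq:tail-Pt} and proves them via a Bernstein-type martingale inequality (Theorem~\ref{lm:Azuma-sharp}); plain Azuma--Hoeffding would give only $O(\sqrt{n})$ and would not suffice.
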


Since the bound in Lemma~\ref{lm:UCBcap} is trivial for $k<\log^2 n$, we will assume that $k\geq \log^2 n$ from now on.

Note that $\UCBcap$ ``exits'' (sets the price to $\infty$) after it sells $k$ items. For a thought experiment, consider a version of this pricing strategy that does not ``exit'' and continues running as if it has unlimited supply of items; let us call this version $\UCBcap'$. Then the realized revenue of $\UCBcap$ is exactly equal to the realized revenue obtained by $\UCBcap'$ from selling the first $k$ items. Thus from here on we focus on analyzing the latter.


We will use the following notation. Let $X_t$ be the indicator variable of the random event that $\UCBcap'$ makes a sale in round $t$. Note that $X_t$ is a 0-1 random variable with expectation $S(p_t)$, where $p_t$ depends on $X_1,\ldots,X_{t-1}$. Let
    $X \triangleq \sum_{t=1}^n\, X_t$
be the total number of sales if the inventory were unlimited. Note that
    $\E[X] = S \triangleq \sum_{t=1}^n\, S(p_t)$.
Going back to our original algorithm, let $\RPayoff$ denote the realized revenue of $\UCBcap$ (revenue that is realized in a given execution). Then
\begin{align}\label{eq:payoff}
&\RPayoff = \textstyle{\sum_{t=1}^N}\; p_t\, X_t,
\qquad    \text{ where }
    N = \max \{ N\leq n:\, \textstyle{\sum_{t=1}^N} X_t \leq k \}.
\end{align}

\xhdr{High-probability events.} We tame the randomness inherent in the sales $X_t$ by setting up three high-probability events, as described below. In the rest of the analysis, we will argue deterministically under the assumption that these three events hold. It suffices because the expected loss in revenue from the low-probability failure events will be negligible. The three events are summarized in the following claim:

\begin{claim}\label{cl:three-events}
With probability at least $1-n^{-2}$ holds, for each round $t$ and each price $p\in \mP$:
\begin{align}
|S(p) - \hat{S}_t(p)|
    &\leq r_t(p)
    \leq 3\,\left( \tfrac{\alpha}{N_t(p)+1} +
         \sqrt{\tfrac{\alpha\, S_t(p) }{N_t(p)+1}} \right),
         \label{eq:confRad-prop} \\
|X-S| &< O(\sqrt{S\,\log n}+ \log n) \label{eq:X-S}, \\
|\textstyle{\sum_{t=1}^n}\; p_t(X_t-S(p_t))|
    & < O(\sqrt{S\,\log n}+ \log n). \label{eq:tail-Pt}
\end{align}
\end{claim}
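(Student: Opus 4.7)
The plan is to prove the three inequalities separately and then union-bound. Inequality~\eqref{eq:confRad-prop} is the confidence-radius bound from~\cite{LipschitzMAB-stoc08}; following the sketch indicated in Appendix~\ref{app:chernoff}, I would apply a Bernstein-type Chernoff inequality to the i.i.d.\ Bernoulli trials that determine $\hat{S}_t(p)$. Concretely, for any fixed $p\in\mP$ and any fixed sample count $m$, taking $\alpha=\Theta(\log n)$ large enough ensures that
\[
|\hat S - S(p)| \le \tfrac{\alpha}{m+1} + \sqrt{\tfrac{\alpha\, S(p)}{m+1}}
\]
with probability at least $1-n^{-4}$. Union-bounding over $m\le n$ and $p\in\mP$ (whose cardinality is polynomial in $n$ for our parameter range) gives this uniformly with probability $1-n^{-3}$. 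The equivalent form with $\hat{S}_t(p)$ inside the square root, which is how $r_t(p)$ is \emph{defined}, then follows by the standard inversion trick: substituting $S(p)\le \hat{S}_t(p)+r_t(p)$ and solving a quadratic in $\sqrt{r_t(p)}$ recovers $|S(p)-\hat{S}_t(p)|\le r_t(p)$, while plugging $\hat{S}_t(p)\le S(p)+r_t(p)$ back into $r_t(p)$ yields the upper bound by $3\bigl(\alpha/(N_t(p)+1)+\sqrt{\alpha\, S(p)/(N_t(p)+1)}\bigr)$ after absorbing a lower-order term into the constant.

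For~\eqref{eq:X-S} and~\eqref{eq:tail-Pt}, the central observation is that $p_t$ is measurable with respect to the history $\mathcal{F}_{t-1}=\sigma(X_1,\ldots,X_{t-1})$, so $M_t := X_t - S(p_t)$ is a bounded martingale-difference sequence with $|M_t|\le 1$ and conditional variance $\mathrm{Var}(M_t\mid \mathcal{F}_{t-1}) = S(p_t)(1-S(p_t)) \le S(p_t)$. Hence the total conditional variance of $\sum_{t=1}^n M_t = X-S$ is at most $S=\sum_t S(p_t)$. Freedman's inequality (the martingale analogue of Bernstein) then implies that, for any \emph{fixed} $V>0$, with probability at least $1-n^{-5}$,
\[
\sum_{t=1}^n \mathrm{Var}(M_t\mid\mathcal{F}_{t-1}) \le V \ \Longrightarrow\ |X-S| \le C\bigl(\sqrt{V\log n}+\log n\bigr),
\]
for a suitable absolute constant $C$. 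Applied to the martingale $\sum_t p_t M_t$, whose increments are bounded by $p_t\le 1$ and whose conditional variances are at most $p_t^2 S(p_t)\le S(p_t)$ (so the same total variance bound $V\le S$ holds), Freedman likewise yields~\eqref{eq:tail-Pt}.

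The one subtlety, which I expect to be the main obstacle, is that $S$ is itself random, so we cannot fix $V$ in advance. I would handle this with a standard peeling argument: apply Freedman at the dyadic levels $V\in \{2^i : 0\le i \le \lceil\log_2 n\rceil\}$, take a union bound over the $O(\log n)$ levels (absorbing the extra $\log\log n$ into $C$), and on the intersection of these events select the smallest dyadic $V$ that dominates the realized $\sum_t \mathrm{Var}(M_t\mid\mathcal{F}_{t-1})\le S$; this gives $|X-S|=O(\sqrt{S\log n}+\log n)$ and analogously for~\eqref{eq:tail-Pt}. A final union bound over the three sub-events yields overall failure probability at most $n^{-2}$. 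Once the random-$V$ bookkeeping is set up, the remainder is a routine application of Bernstein- and Freedman-type concentration.
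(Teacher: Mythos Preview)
Your proposal is correct and follows essentially the same route as the paper. The paper likewise derives~\eqref{eq:confRad-prop} from a Chernoff/Bernstein-type bound on i.i.d.\ Bernoulli samples (their Theorem~\ref{lm:my-chernoff}, proved by a case split on small vs.\ large $\mu$ rather than your inversion trick) plus a union bound over $(p,m)$, and obtains~\eqref{eq:X-S}--\eqref{eq:tail-Pt} from a Freedman-type inequality (their Theorem~\ref{thm:Azuma-with-variance}) with a union bound over variance levels $v$; the only cosmetic difference is that the paper peels over all integers $v\in[\log n,n]$ whereas you peel dyadically.
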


The probability bounds on the three events in Claim~\ref{cl:three-events} are derived via appropriate concentration inequalities, some of which are non-standard; see Section~\ref{app:chernoff} for further discussion. In the first event, the left inequality asserts that $r_t(p)$ is a confidence radius, and the right inequality gives the performance guarantee for it. The other two events focus on $\UCBcap'$, and bound the deviation of the total number of sales ($X$) and the realized revenue ($\sum_{t=1}^n p_t\, X_t$) from their respective expectations; importantly, these bound are in terms of $\sqrt{S}$ rather than $\sqrt{n}$.

In the rest of the analysis we will assume that the three events in Claim~\ref{cl:three-events} hold deterministically.

\OMIT{ 
Note that for each $t$, $X_t$ is a 0-1 random variable with expectation $S(p_t)$, where $p_t$ depends on $X_1,\ldots,X_{t-1}$. Let
    $S = \sum_{t=1}^n\, S(p_t)$.
Using the appropriate tail bound (Lemma~\ref{lm:Azuma-sharp} with $\alpha_t\equiv 1$) we obtain that
\begin{align}\label{eq:X-S}
    |X-S| < O(\sqrt{S\,\log n}+ \log n)
\end{align}
holds with probability at least $1-n^{-2}$.

Second, taking Lemma~\ref{lm:Azuma-sharp} with $\alpha_t = p_t$ we obtain that
\begin{align}\label{eq:tail-Pt}
    |\textstyle{\sum_{t=1}^n}\; p_t(X_t-S(p_t))| <
        O(\sqrt{S\,\log n}+ \log n)
\end{align}
holds with probability at least $1-n^{-2}$.

Third, let us prove that~\refeq{eq:confRad} is a confidence radius, i.e. that ~\refeq{eq:confRad-defn} holds with high probability. Indeed, for each price $p\in \mP$, let $\{Z_{i,p}\}_{i\leq n}$ be a family of independent 0-1 random variables with expectation $S(p)$. Without loss of generality, let us pretend that the $i$-th time that price $p$ is selected by the pricing strategy, sale happens if and only if $Z_{i,p}=1$. Then by Lemma~\ref{lm:my-chernoff} after the $i$-th play of price $p$ the bound~\refeq{eq:confRad-defn} holds with probability at least $1-n^{-4}$. Taking the Union Bound over all choices of $i$ and all choices of $p$, we obtain that
~\refeq{eq:confRad-defn} holds with probability at least $1-n^{-2}$ as long as $|\mP|\leq n$ (which is the case for us).

From now on, we will assume that~\refeq{eq:confRad-defn},~\refeq{eq:tail-Pt} and~\refeq{eq:X-S} hold.
} 

\xhdr{Single-round analysis.} Let us analyze what happens in a particular round $t$ of the pricing strategy. Let $p_t$ be the price chosen in round $t$. Let
    $\bestP \in \argmax_{p\in \mP} \nu(p)$
be the best active price according to $\nu(\cdot)$,
and let
    $\bestNu \triangleq \nu(\bestP)$.
Let
    $\Delta(p) \triangleq \max(0, \tfrac{1}{n}\,\bestNu - p\, S(p))$
be our notion of ``badness" of price $p$, compared to the optimal approximate revenue $\nu^*$. We will use this notation throughout the analysis, and eventually we will bound regret in terms of
    $\sum_{p\in \mP}\, \Delta(p)\, N(p)$,
where $N(p)$ is the total number of times price $p$ is chosen.

\begin{claim}
For each price $p\in\mP$ it holds that
\begin{align}
N(p) \, \Delta(p)
    \leq O(\log n)\left( 1 + \fkn\,\tfrac{1}{\Delta(p)}
    \right).
 \label{eq:UBonDeltaN}
\end{align}
\end{claim}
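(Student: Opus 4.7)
The plan is to run a UCB-style analysis tailored to the new index $I_t(p)$. The first ingredient is that under the confidence event \eqref{eq:confRad-prop} the index sandwiches $\nu$: since $S(p)\le \hat S_t(p)+r_t(p)\le S(p)+2r_t(p)$, monotonicity of $x\mapsto\min(k,nx)$ gives $\nu(p)\le I_t(p)\le \nu(p)+2pn\,r_t(p)$. The UCB step is then standard: if $p_t=p$, the greedy rule forces $I_t(p)\ge I_t(\bestP)\ge \bestNu$, so $\bestNu-\nu(p)\le 2pn\,r_t(p)$.

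Next I would translate this slack into a bound on $\Delta(p)$. A small case split on whether $nS(p)\ge k$ shows $n\Delta(p)\le \bestNu-\nu(p)$ in either case: in the regime $nS(p)\ge k$ we have $\nu(p)=pk\le npS(p)$, so $\bestNu-\nu(p)\ge \bestNu-npS(p)\ge n\Delta(p)$; in the regime $nS(p)<k$ the two sides are equal by definition. Hence at every round where $p$ is selected, $\Delta(p)\le 2p\,r_t(p)$. Applying this inequality at the \emph{last} round $\tau$ in which $p$ is played, so that $N_\tau(p)+1=N(p)$, and invoking the right half of \eqref{eq:confRad-prop},
\begin{equation*}
\Delta(p)\ \le\ 6p\Bigl(\tfrac{\alpha}{N(p)}+\sqrt{\tfrac{\alpha S(p)}{N(p)}}\Bigr).
\end{equation*}

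Finally I would split on which of the two terms above dominates. If the first does (equivalently $N(p)S(p)\le \alpha$), then $\Delta(p)\le 12p\alpha/N(p)$, giving $N(p)\Delta(p)\le 12p\alpha=O(\log n)$. If the second does, then $\Delta(p)\le 12p\sqrt{\alpha S(p)/N(p)}$, i.e.\ $N(p)\Delta(p)^2\le 144\,p^2\alpha\,S(p)$. Here comes the key step: assuming $\Delta(p)>0$ (else the claim is trivial), the definition of $\Delta$ yields $pS(p)\le \bestNu/n\le k/n$ since $\bestNu\le p^*k\le k$, so $p^2S(p)\le pk/n\le k/n$. Dividing by $\Delta(p)$,
\begin{equation*}
N(p)\Delta(p)\ \le\ \frac{144\,\alpha\,p^2S(p)}{\Delta(p)}\ \le\ O(\log n)\cdot\frac{k}{n\,\Delta(p)}.
\end{equation*}
Combining the two cases gives \eqref{eq:UBonDeltaN}.

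The main obstacle is the translation in the second paragraph: because $\nu(p)$ incorporates the $\min(k,nS(p))$ cap, a naive estimate of $n\Delta(p)$ in terms of $\bestNu-\nu(p)$ breaks, and one has to verify it separately in the capped and uncapped regimes. Once past that, the rest is a textbook ``confidence-radius-squared dominates gap-squared'' calculation, with the novelty being the use of $pS(p)\le k/n$ (available here for free from the definition of $\Delta$ together with $\bestNu\le k$) to absorb the $p^2S(p)$ factor into $k/n$.
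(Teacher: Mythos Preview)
Your proof is correct and follows essentially the same UCB-style argument as the paper: sandwich $I_t(p)$ around $\nu(p)$, invoke the greedy rule at the last play of $p$ to obtain $\Delta(p)\le 2p\,r_t(p)$, and then do the two-term case split on the confidence radius. The only minor variation is in how the factor $\fkn$ enters: the paper keeps the $\min(\fkn,\cdot)$ in the key inequality and deduces $S(p)<\fkn$ directly (their third consequence in~\eqref{eq:consequences}), whereas you drop the $\min$ early and instead use the cruder bound $\bestNu\le k$ to get $pS(p)<\fkn$; both routes yield the same final estimate.
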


\begin{proof}
By definition~\eqref{eq:confRad-defn} of the confidence radius, for each price $p\in\mP$ and each round $t$ we have
\begin{align}
\nu(p) \leq I_t(p) \leq
    p\cdot
        \min\left(k,\, n\, \left( S(p)+ 2\,r_t(p)\right)\right).
\end{align}
Let us use this to connect each choice $p_t$ with $\bestNu$:
\begin{align*}
\begin{cases}
I_t(p_t)
    \geq I_t(\bestP) \geq \nu(\bestP) \triangleq \bestNu \\
I_t(p_t)
    \leq
    p_t\cdot
        \min\left(k,\, n\, \left( S(p_t)+ 2\,r_t(p_t)\right)\right).
\end{cases}
\end{align*}
Combining these two inequalities, we obtain the key inequality:
\begin{align}\label{eq:key-inequality}
\tfrac{1}{n}\,\bestNu
    \leq p_t \cdot
        \min\left(\fkn,\, S(p_t)+ 2\,r_t(p_t)\right).
\end{align}
There are several consequences for $p_t$ and $\Delta(p_t)$:
\begin{align}~\label{eq:consequences}
\left\{
\begin{array}{lll}
p_t           &\geq& \tfrac{1}{k}\,\bestNu \\
\Delta(p_t)   &\leq& 2\,p_t\, r_t(p_t)\\
\Delta(p_t)>0 &\Rightarrow& S(p_t)<\fkn
\end{array}
\right..
\end{align}
The first two lines in~\eqref{eq:consequences} follow immediately from~\eqref{eq:key-inequality}. To obtain the third line, note that $\Delta(p_t)>0$ implies
    $p_t\, k \geq \bestNu > n\, p_t\, S(p_t)$,
which in turn implies $S(p_t)<\fkn$.

Note that we have not yet used the definition~\refeq{eq:confRad} of the confidence radius. For each price $p=p_t$, let $t$ be the last round in which this price has been selected by the pricing strategy.
Note that $N(p)$ (the total number of times price $p$ is chosen) is equal to $N_t(p)+1$.
Then using the second line in~\refeq{eq:consequences} to bound $\Delta(p)$, Eq.~\eqref{eq:confRad-prop} to bound the confidence radius $r_t(p)$, and the third line in~\eqref{eq:consequences} to bound the sales rate, we obtain:
\begin{align*}
\Delta(p)
    \leq O(p)\times \max\left(
        \tfrac{\log n}{N(p)},\;
        \sqrt{\fkn\,\tfrac{\log n}{N(p)}}
    \right).
\end{align*}
Rearranging the terms, we can bound $N(p)$ in terms of $\Delta(p)$ and obtain~\eqref{eq:UBonDeltaN}.
\end{proof}

\xhdr{Analyzing the total revenue.}
A key step is the following claim that allows us to consider
    $\textstyle{\sum_{t=1}^n}\, p_t\, S(p_t)$
instead of the realized revenue $\RPayoff$, effectively ignoring the capacity constraint. This is where we use the high-probability events
~\eqref{eq:X-S} and~\eqref{eq:tail-Pt}. For brevity, let us denote
    $\beta(S) = O(\sqrt{S\log n} + \log n)$.

\begin{claim}\label{cl:payoff-min}
$\RPayoff
    \geq \min(\bestNu,\; \textstyle{\sum_{t=1}^n}\, p_t\, S(p_t))- \beta(k)$.
\end{claim}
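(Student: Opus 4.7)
The plan is a case split on whether the cumulative unrestricted demand $X = \sum_{t=1}^n X_t$ exceeds the supply $k$ or not. This separates the regime in which the capacity constraint is inactive (so the realized revenue is essentially a full-horizon sum, amenable to the martingale concentration inequality~\eqref{eq:tail-Pt}) from the regime in which capacity binds (which looks awkward because $\RPayoff$ is then only a prefix sum, but is exactly where the per-round price floor enforced by the index rule saves the day). All arguments are carried out deterministically on the high-probability event of Claim~\ref{cl:three-events}, as in the surrounding analysis.

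In the case $X \leq k$, the algorithm $\UCBcap$ never exhausts its inventory, so $N = n$ and $\RPayoff = \sum_{t=1}^n p_t X_t$. Then~\eqref{eq:tail-Pt} immediately yields $\RPayoff \geq \sum_{t=1}^n p_t S(p_t) - \beta(S)$. To replace $\beta(S)$ by $\beta(k)$, I would use~\eqref{eq:X-S}: combined with $X \leq k$ it gives $S \leq k + \beta(S)$, and solving the resulting quadratic in $\sqrt{S}$ yields $\sqrt{S} = O(\sqrt{k} + \sqrt{\log n})$, hence $\beta(S) = O(\beta(k))$ (the running assumption $k \geq \log^2 n$ keeps the $\log n$ contribution harmless).

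In the case $X > k$, the algorithm exits at the round $N$ where its $k$-th sale occurs; since the partial sums of $X_t$ grow by $0$ or $1$ per round, we get $\sum_{t=1}^N X_t = k$ exactly. The crucial ingredient is the first line of~\eqref{eq:consequences}, which says that every price ever chosen by $\UCBcap'$ satisfies $p_t \geq \bestNu / k$. Thus
\[
\RPayoff \;=\; \sum_{t=1}^N p_t X_t \;\geq\; \frac{\bestNu}{k} \sum_{t=1}^N X_t \;=\; \bestNu.
\]
Combining the two cases produces $\RPayoff \geq \min(\bestNu,\, \sum_{t=1}^n p_t S(p_t)) - O(\beta(k))$, absorbing the implicit constant into the $\beta$ notation.

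The main point --- and the cleanest ingredient in this subsection --- is the Case~2 argument: because the index rule already enforces a uniform price floor $p_t \geq \bestNu/k$, the $k$ forced sales at that floor automatically collect revenue $\bestNu$, with no concentration needed in the binding regime. The only mildly fiddly step is the $\beta(S) \to \beta(k)$ reduction in Case~1, but it is routine once $X \leq k$ pins $S$ down to $O(k + \log n)$.
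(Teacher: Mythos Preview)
Your proof is correct and follows essentially the same approach as the paper: both hinge on the price floor $p_t \geq \bestNu/k$ from~\eqref{eq:consequences} to secure $\RPayoff \geq \bestNu$ once $k$ sales occur, and on~\eqref{eq:tail-Pt} for the non-binding case. The only cosmetic difference is in converting $\beta(S)$ to $\beta(k)$: the paper splits on $S$ versus $k$ (so $S\le k$ is immediate, and $S>k$ reuses the price floor via~\eqref{eq:X-S}), whereas you split on $X$ versus $k$ and handle $X\le k$ by bounding $S$ through the quadratic; both are routine.
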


\begin{proof}
Recall that
    $p_t\geq \tfrac{1}{k}\bestNu$
by~\refeq{eq:consequences}. It follows that
    $\RPayoff \geq \bestNu$
whenever
     $\textstyle{\sum_{t=1}^n}\, X_t > k$.
Therefore, if $\RPayoff < \bestNu$
then
    $\textstyle{\sum_{t=1}^n}\, X_t \leq k$
and so
    $\RPayoff = \textstyle{\sum_{t=1}^n}\, p_t\, X_t$.
Thus,  by~\eqref{eq:tail-Pt} it holds that
\begin{align*}
\RPayoff
    \geq \min \left(
        \bestNu,\; \textstyle{\sum_{t=1}^n}\, p_t\, X_t
    \right)
    \geq \min \left(
        \bestNu,\; \textstyle{\sum_{t=1}^n}\, p_t\, S(p_t) - \beta(S)
    \right).
\end{align*}
\noindent So the claim holds when $S\leq k$.
On the other hand, if $S>k$ then by~\eqref{eq:X-S} it holds that
\begin{align*}
X &\geq S - \beta(S) \geq k-\beta(k) \\
\RPayoff
    &\geq \min(k,X)\, (\tfrac{1}{k}\,\bestNu)
    \geq \bestNu - \beta(k). \qedhere
\end{align*}
\end{proof}

In light of Claim~\ref{cl:payoff-min}, we can now focus on
    $\textstyle{\sum_{t=1}^n}\, p_t\, S(p_t)$.
\begin{align}
\textstyle{\sum_{t=1}^n}\; p_t\,S(p_t)
    &\geq \textstyle{\sum_{t=1}^n}\; \tfrac{1}{n}\, \bestNu -\Delta(p_t) \nonumber \\
    &= \bestNu - \textstyle{\sum_{t=1}^n}\; \Delta(p_t) \nonumber \\
    &= \bestNu - \textstyle{\sum_{p\in\mP}}\; \Delta(p)\,N(p).
        \label{eq:UCB-rev1}
\end{align}

Fix a parameter $\eps>0$ to be specified later, and denote
\begin{align*}
\begin{cases}
\mPsel      &\triangleq \{ p\in \mP:\, N(p)\geq 1\} \\
\mP_\eps    &\triangleq \{ p\in \mPsel: \Delta(p) \geq \eps\}
\end{cases}
\end{align*}
to be, respectively, be the set of prices that have been selected at least once and the set of prices of badness at least $\eps$ that have been selected at least once. Plugging~\refeq{eq:UBonDeltaN} into~\eqref{eq:UCB-rev1}, we obtain
\begin{align}
\textstyle{\sum_{p\in\mP}}\, \Delta(p)\,N(p)
    &\leq \textstyle{\sum_{p\in\mPsel\setminus\mP_\eps}}\, \Delta(p)\,N(p)
        + \textstyle{\sum_{p\in\mP_\eps}}\, \Delta(p)\,N(p) \nonumber \\
    &\leq \eps n + O(\log n)\,
        \textstyle{\sum_{p\in\mP_\eps}}\;
        \left(
            1+ \fkn \tfrac{1}{\Delta(p)}
        \right) \nonumber \\
    &\leq \eps n
        + O(\log n)\left(
            |\mP_\eps| +
                \fkn\,
                    \textstyle{\sum_{p\in\mP_\eps}}\;\tfrac{1}{\Delta(p)}
        \right). \label{eq:regret-sum}
\end{align}

Combining~\eqref{eq:UCB-rev1}, ~\eqref{eq:regret-sum} and Claim~\ref{cl:payoff-min} yields a claim that summarizes our findings so far.

\begin{claim}\label{cl:for-any-mP}
For any set $\mP$ of active prices and any parameter $\eps>0$ it holds that
\begin{align*}
\bestNu - \E[\RPayoff]
\leq \eps n
        + O(\log n)\left(
            |\mP_\eps| +
                \fkn\,
                \textstyle{\sum_{p\in\mP_\eps}}\;\tfrac{1}{\Delta(p)}
        \right)
        + \beta(k).
\end{align*}
\end{claim}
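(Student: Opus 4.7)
The plan is to assemble the statement by chaining together Claim~\ref{cl:payoff-min}, inequality~\eqref{eq:UCB-rev1}, and inequality~\eqref{eq:regret-sum}, which have all been proved (deterministically, under the three high-probability events of Claim~\ref{cl:three-events}). The only residual work is to handle the two cases of the $\min$ in Claim~\ref{cl:payoff-min} uniformly, and then to pass from the deterministic ``good event'' bound to the claimed bound on $\E[\RPayoff]$.

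First, I would work deterministically under the assumption that the three events of Claim~\ref{cl:three-events} hold, and show that under this assumption
\[
\bestNu - \RPayoff \;\leq\; \textstyle{\sum_{p\in\mP}}\,\Delta(p)\,N(p) \;+\; \beta(k).
\]
To see this, apply Claim~\ref{cl:payoff-min}: either the $\min$ equals $\bestNu$, in which case $\bestNu - \RPayoff \leq \beta(k)$ immediately, or the $\min$ equals $\sum_{t=1}^n p_t S(p_t)$, in which case I invoke~\eqref{eq:UCB-rev1}, which rewrites $\bestNu - \sum_{t} p_t S(p_t)$ as $\sum_{p\in\mP} \Delta(p) N(p)$. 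In both cases the displayed inequality holds.

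Second, I would bound $\sum_{p\in\mP}\Delta(p)\,N(p)$ via~\eqref{eq:regret-sum}: splitting the sum over $\mPsel\setminus\mP_\eps$ (each term of the form $\Delta(p) N(p)$ contributes at most $\eps\,N(p)$, and $\sum N(p)\leq n$) and over $\mP_\eps$ (using~\eqref{eq:UBonDeltaN} term by term) gives
\[
\textstyle{\sum_{p\in\mP}}\,\Delta(p)\,N(p)\;\leq\;\eps n + O(\log n)\Bigl(|\mP_\eps| + \fkn\,\textstyle{\sum_{p\in\mP_\eps}}\tfrac{1}{\Delta(p)}\Bigr).
\]
Combining the two displayed inequalities yields the claimed bound on $\bestNu - \RPayoff$ under the good event.

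Finally, I would take expectations. Since $\RPayoff \geq 0$ and $\bestNu \leq k$ always (by definition of $\nu(\cdot)$ and the $\min$ with $k$), the contribution from the complementary failure event is at most $k\cdot n^{-2}$, which is absorbed into the $\beta(k)$ term. This yields the claim. The reasoning is essentially bookkeeping: the substantive work has already been done in Claim~\ref{cl:payoff-min} and in deriving~\eqref{eq:UBonDeltaN} and~\eqref{eq:regret-sum}; the only thing to be careful about is that Claim~\ref{cl:payoff-min} gives a $\min$ rather than a clean lower bound, which is exactly why the case split in the first step is needed.
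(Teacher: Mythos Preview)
Your proposal is correct and follows exactly the paper's approach: the paper's own ``proof'' is the single sentence ``Combining~\eqref{eq:UCB-rev1},~\eqref{eq:regret-sum} and Claim~\ref{cl:payoff-min} yields a claim that summarizes our findings so far,'' and you have faithfully unpacked that combination, including the case split on the $\min$ in Claim~\ref{cl:payoff-min} and the passage to expectation (which the paper handles by its earlier blanket remark that the failure events contribute negligibly). One tiny wording nit: \eqref{eq:UCB-rev1} gives $\bestNu - \sum_t p_t S(p_t) \leq \sum_{p\in\mP}\Delta(p)N(p)$ as an inequality (since $\Delta(p)=\max(0,\cdot)$), not an exact rewrite, but the direction is the one you need.
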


Interestingly, this claim holds for any set of active prices. The following claim, however, takes advantage of the fact that the active prices are given by~\refeq{eq:mP-def}.

\begin{claim} $\bestNu \geq \nu^* - \delta k$,
where   $\nu^*\triangleq \max_{p} \nu(p)$.
\end{claim}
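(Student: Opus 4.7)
The plan is to show that the geometric grid $\mP$ from~\eqref{eq:mP-def} contains a price that approximates the unconstrained maximizer of $\nu(\cdot)$ to within a multiplicative factor of $1-\delta$, and then to use the trivial upper bound $\nu^\ast\leq k$ to convert this multiplicative approximation into the claimed additive one.

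\paragraph{Key steps.}
First, let $p^\ast\in\argmax_p \nu(p)$; since $S(p)=0$ whenever $p>1$, we may assume $p^\ast\in[0,1]$, and therefore
\[
\nu^\ast \;=\; p^\ast\,\min\bigl(k,\,n\,S(p^\ast)\bigr) \;\leq\; p^\ast\cdot k \;\leq\; k.
\]
Next, I split into two cases depending on how $p^\ast$ compares to $\delta$. In the easy case $p^\ast<\delta$, the bound $\nu^\ast\leq p^\ast\,k<\delta k$ together with $\bestNu\geq 0$ immediately gives $\bestNu\geq \nu^\ast-\delta k$, so the statement is trivial.

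In the substantive case $p^\ast\geq \delta$, I would let $p'\in\mP$ be the largest active price with $p'\leq p^\ast$, which exists because $\delta\in\mP$; by the geometric spacing of $\mP$ we have $p^\ast\leq (1+\delta)\,p'$. Since the survival function $S(\cdot)$ is non-increasing, $S(p')\geq S(p^\ast)$, and hence
\begin{align*}
\nu(p')
&= p'\cdot\min\bigl(k,\,n\,S(p')\bigr)
 \;\geq\; p'\cdot\min\bigl(k,\,n\,S(p^\ast)\bigr) \\
&\geq \tfrac{1}{1+\delta}\;p^\ast\cdot\min\bigl(k,\,n\,S(p^\ast)\bigr)
 \;=\; \tfrac{1}{1+\delta}\,\nu^\ast
 \;\geq\; (1-\delta)\,\nu^\ast.
\end{align*}
Combining with $\nu^\ast\leq k$ gives $\nu(p')\geq \nu^\ast-\delta\,\nu^\ast\geq \nu^\ast-\delta k$, and the claim follows since $\bestNu=\max_{p\in\mP}\nu(p)\geq \nu(p')$.

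\paragraph{Main obstacle.}
There is really only one subtle point: the naive argument ``round $p^\ast$ down to the nearest grid point and lose at most a $(1+\delta)$ factor on the price'' breaks when $p^\ast<\delta$, since then no active price lies below $p^\ast$ and the nearest grid point from above could have survival rate strictly smaller than $S(p^\ast)$. The fix is to observe that in this regime $\nu^\ast$ itself is already $O(\delta k)$, so the additive bound is free. The only ingredient beyond this case split is the elementary inequality $\nu(p)\leq k$ for $p\in[0,1]$, which converts the $(1-\delta)$-multiplicative guarantee into the $\delta k$ additive guarantee required by the claim.
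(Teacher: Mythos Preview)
Your proof is correct and follows essentially the same approach as the paper: both split on whether $p^\ast$ lies below the smallest active price $\delta$, and in the nontrivial case round $p^\ast$ down to the nearest grid point, use monotonicity of $S(\cdot)$ to get a $(1-\delta)$-multiplicative approximation of $\nu^\ast$, and then invoke $\nu^\ast\leq k$ to convert to the additive bound. Your write-up is slightly more explicit about the intermediate steps (e.g., spelling out why $\nu(p')\geq \tfrac{p'}{p^\ast}\nu(p^\ast)$ via $S(p')\geq S(p^\ast)$), but the argument is the same.
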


\begin{proof}
Let $p^*\in \argmax_{p} \nu(p)$ denote the best fixed price with respect to $\nu(\cdot)$, ties broken arbitrarily. If $p^*\leq \delta$ then $\nu^*\leq \delta k$. Else, letting
    $p_0 = \max\{p\in \mP:\, p\leq p^*\}$
we have
    $p_0/p \geq \tfrac{1}{1+\delta} \geq 1-\delta$,
and so
\begin{align*}
\bestNu
    &\geq \nu(p_0)
    \geq \tfrac{p_0}{p^*}\; \nu(p^*) \geq  \nu^* (1-\delta)
    \geq \nu^* - \delta k. \qedhere
\end{align*}
\end{proof}

\noindent It follows that for any
$\eps>0$ and $\delta\in(0,1)$ we have:
\begin{align} \label{eq:regret-raw-1}
\Regret
    \leq  O(\log n)\left(
            |\mP_\eps| +
                \fkn\,
                \textstyle{\sum_{p\in\mP_\eps}}\;\tfrac{1}{\Delta(p)}
        \right)
    +\eps n + \delta k + \beta(k).
\end{align}

\noindent The rest is a standard computation. Plugging in $\Delta(p)\geq \eps $ for each $p\in \mP_\eps$ in~\refeq{eq:regret-raw-1}, we obtain:
\begin{align*}
\Regret
    \leq  O(|\mP_\eps| \log n)
            \left(1+ \tfrac{1}{\eps}\, \fkn \right)
        +\eps n + \delta k + \beta(k).
\end{align*}

\noindent Note that
    $|\mP| \leq \tfrac{1}{\delta}\, \log n$.
To simplify the computation, we will assume that
$\delta\geq \tfrac{1}{n}$ and
    $\eps = \delta\, \fkn$.
Then
\begin{align} \label{eq:regret-raw}
\Regret
    \leq O\left( \delta k + \tfrac{1}{\delta^2} (\log n)^2 + \sqrt{k\log n} \right).
\end{align}
Finally, it remains to pick $\delta$ to minimize the right-hand side of~\refeq{eq:regret-raw}. Let us simply take $\delta$ such that the first two summands are equal:
     $\delta = k^{-1/3}\, (\log n)^{2/3}$.
Then the two summands are equal to
    $O(k \log n)^{2/3}$.
This completes the proof of Lemma~\ref{lm:UCBcap}.

\subsection{Concentration inequalities and the proof of Claim~\ref{cl:three-events}}
\label{app:chernoff}

\newcommand{\CB}{{\sc cb}}

We use an elementary concentration inequality known as {\em Chernoff Bounds}, in a formulation from~\cite{MitzUpfal-book05}.


\begin{theorem}[Chernoff Bounds]
\label{thm:chernoff}
Consider $n$ i.i.d. random variables $X_1 \ldots X_n$ with values in $[0,1]$. Let
    $X = \tfrac{1}{n} \sum_{i=1}^n X_i$ be their average, and let $\mu = \E[X]$. Then:
\begin{OneLiners}
\item[(a)] $\Pr[ |X-\mu| > \delta \mu ] < 2\, e^{-\mu n \delta^2/3} $
	for any $\delta\in (0,1)$.
\item[(b)] $\Pr[ X > a ] < 2^{-an} $
	for any $a>6\mu$.
\end{OneLiners}
\end{theorem}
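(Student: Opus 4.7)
The plan is to prove both bounds by the standard Chernoff method, i.e., applying Markov's inequality to the exponential moment generating function and optimizing the exponent. The bounded support $X_i\in[0,1]$ makes this straightforward: by convexity of $x\mapsto e^{tx}$ on $[0,1]$ we have $e^{tX_i}\le 1-X_i+X_i e^t=1+(e^t-1)X_i$ for any $t>0$, so $\E[e^{tX_i}]\le 1+(e^t-1)\mu\le e^{(e^t-1)\mu}$. By independence, $\E\bigl[e^{tn X}\bigr]=\prod_{i=1}^n\E[e^{tX_i}]\le e^{n\mu(e^t-1)}$.

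For the upper tail, first I would write, for arbitrary $a>\mu$,
\[
\Pr[X>a]=\Pr\bigl[e^{tnX}>e^{tna}\bigr]\le e^{-tna}\,\E\bigl[e^{tnX}\bigr]\le e^{n(\mu(e^t-1)-ta)},
\]
and optimize by choosing $t=\ln(a/\mu)$. This yields the ``classical'' Chernoff form
\[
\Pr[X>a]\le\Bigl(\tfrac{e\mu}{a}\Bigr)^{na}\qquad (a>\mu).
\]
For part (b), setting $a>6\mu$ gives $e\mu/a<e/6<1/2$, hence $\Pr[X>a]<2^{-na}$, exactly as stated. For the lower tail I would repeat the argument with $t<0$, using the same convexity bound, obtaining $\Pr[X<a]\le(e\mu/a)^{na}$ for $a<\mu$.

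For part (a), writing $a=(1+\delta)\mu$ (upper) and $a=(1-\delta)\mu$ (lower) with $\delta\in(0,1)$, the bounds become $\bigl(e^{\delta}/(1+\delta)^{1+\delta}\bigr)^{n\mu}$ and $\bigl(e^{-\delta}/(1-\delta)^{1-\delta}\bigr)^{n\mu}$ respectively. The routine step is then the analytic inequalities
\[
(1+\delta)\ln(1+\delta)-\delta\ \ge\ \tfrac{\delta^2}{3},\qquad
(1-\delta)\ln(1-\delta)+\delta\ \ge\ \tfrac{\delta^2}{3}\qquad (\delta\in(0,1)),
\]
each proved by comparing the two sides' Taylor expansions (or, equivalently, showing the derivatives vanish at $0$ and the requisite difference is nonnegative on $(0,1)$). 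This gives $\Pr[|X-\mu|>\delta\mu]<2e^{-n\mu\delta^2/3}$.

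No serious obstacle is expected; the only step requiring any care is verifying the two Taylor-type inequalities that turn the tight rate function into the clean $e^{-n\mu\delta^2/3}$ form. Everything else is an essentially mechanical application of the Markov/MGF recipe, and part (b) follows by simply plugging $a/\mu>6$ into the already-derived bound and observing $e/6<1/2$.
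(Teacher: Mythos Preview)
Your approach is the standard moment-generating-function argument and is correct. Note, however, that the paper does not actually prove this theorem: it is stated as a known result quoted from the Mitzenmacher--Upfal textbook, with no proof given. So there is no ``paper's own proof'' to compare against; your write-up simply supplies what the paper takes for granted.

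One small slip worth fixing: the intermediate lower-tail bound you state, $\Pr[X<a]\le(e\mu/a)^{na}$ for $a<\mu$, is vacuous as written (the base exceeds $1$). The correctly optimized bound from the MGF carries an extra factor $e^{-n\mu}$, namely $\Pr[X<a]\le(e\mu/a)^{na}e^{-n\mu}$, and it is this tighter form that, after substituting $a=(1-\delta)\mu$, yields the $\bigl(e^{-\delta}/(1-\delta)^{1-\delta}\bigr)^{n\mu}$ expression you then use. The same $e^{-n\mu}$ factor is also what reconciles your upper-tail display $(e\mu/a)^{na}$ with the $\bigl(e^{\delta}/(1+\delta)^{1+\delta}\bigr)^{n\mu}$ form you invoke for part (a). None of this affects the final conclusions---both Taylor-type inequalities and the $e/6<1/2$ step are fine---but the intermediate displays should be tightened for the argument to read cleanly.
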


Further, we use a non-standard corollary from~\cite{LipschitzMAB-stoc08}
\footnote{This is Lemma 4.9 in the full (arXiv) version of~\cite{LipschitzMAB-stoc08}.}
which provides us with a sharper (i.e., smaller) confidence radius when $\mu$ is small; we include the proof for the sake of completeness.

\begin{theorem}[\cite{LipschitzMAB-stoc08}]
\label{lm:my-chernoff}
Consider $n$ i.i.d. random variables $X_1 \ldots X_n$ on $[0,1]$. Let $X$ be their average, and let $\mu = \E[X]$. Then for any $\alpha>0$, letting
$r(\alpha,x)
		= \tfrac{\alpha}{n} + \sqrt{\tfrac{\alpha x}{n}}$,
we have:
\begin{align*}
 \Pr\left[\, |X-\mu| < r(\alpha,X) < 3\,r(\alpha,\mu) \, \right] > 1-e^{-\Omega(\alpha)},
\end{align*}
\end{theorem}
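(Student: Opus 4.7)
The plan is to apply the multiplicative Chernoff bound (Theorem~\ref{thm:chernoff}) to establish that, with probability at least $1 - e^{-\Omega(\alpha)}$, we have $|X - \mu| \leq C_0\, r(\alpha, \mu)$ for some absolute constant $C_0$ (which can be chosen as small as we like, at the cost of a larger hidden constant in $\Omega(\alpha)$). Once this ``base event'' is established, both inequalities in the statement follow by elementary algebra.

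For the concentration step, I would split into two regimes according to whether $\mu \geq c\,\alpha/n$ for a threshold constant $c$. In the ``large $\mu$'' regime I would apply Theorem~\ref{thm:chernoff}(a) with $\delta = \Theta(\sqrt{\alpha/(\mu n)})$, which is at most $1$ and yields $\Pr[\,|X - \mu| > \Theta(\sqrt{\alpha\mu/n})\,] \leq 2e^{-\Omega(\alpha)}$. In the ``small $\mu$'' regime the lower deviation is trivially bounded by $\mu - X \leq \mu \leq c\,\alpha/n = O(r(\alpha,\mu))$, so it suffices to control the upper tail, which I would do using Theorem~\ref{thm:chernoff}(b) with $a = \Theta(\alpha/n) > 6\mu$, giving $\Pr[\,X > \Theta(\alpha/n)\,] \leq 2^{-\Omega(\alpha)} = e^{-\Omega(\alpha)}$. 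Combining the two regimes and noting via AM-GM that $\sqrt{\alpha\mu/n} \leq \tfrac12(\alpha/n + \mu)$ produces the unified bound $|X - \mu| \leq C_0\, r(\alpha, \mu)$ on an event of probability at least $1 - e^{-\Omega(\alpha)}$.

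Working on this event, I would derive the right inequality first: from $X \leq \mu + C_0\,r(\alpha,\mu)$, expanding $r(\alpha,\mu)$ and applying AM-GM yields $X \leq c_1 \mu + c_2\,\alpha/n$ for absolute constants $c_1,c_2$, so by subadditivity of the square root $\sqrt{X} \leq \sqrt{c_1}\sqrt{\mu} + \sqrt{c_2}\sqrt{\alpha/n}$, and substituting into the definition of $r(\alpha,X)$ gives $r(\alpha,X) \leq C\,r(\alpha,\mu)$; tuning $C_0$ small enough brings $C$ below $3$. For the left inequality, the case $X \geq \mu$ is immediate since $r(\alpha,\cdot)$ is monotone and $C_0 < 1$. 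In the case $X < \mu$, I would view the bound $\mu - X \leq C_0(\alpha/n + \sqrt{\alpha\mu/n})$ as a quadratic inequality in $u = \sqrt{\mu}$, namely $u^2 - C_0\sqrt{\alpha/n}\,u - (X + C_0\,\alpha/n) \leq 0$; the quadratic formula yields $\sqrt{\mu} \leq \sqrt{X} + C_1\sqrt{\alpha/n}$ for some constant $C_1$, and multiplying by $\sqrt{\mu} + \sqrt{X}$ then gives $\mu - X \leq 2C_1\sqrt{\alpha X/n} + C_1^2\,\alpha/n$, which is strictly less than $r(\alpha,X)$ once $C_0$ (and thereby $C_1$) is taken small enough.

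The main obstacle is the lower-tail case $X < \mu$ in the left inequality: one must invert a bound phrased in terms of the unobservable $\mu$ into a bound of the same structural form in terms of the observable $X$, and the quadratic-formula manipulation sketched above leaks multiplicative constants. The delicate part is to verify that these constants can all be absorbed into a sufficiently small $C_0$ without causing the failure probability to exceed $e^{-\Omega(\alpha)}$. Since the specific numerical constants $1$ and $3$ appearing in the statement can be replaced by any positive constants by rescaling $\alpha$ inside $r(\alpha,\cdot)$ and paying a constant-factor price in $\Omega(\alpha)$, no further obstructions arise.
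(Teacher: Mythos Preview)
Your proposal is correct and uses the same two-regime split (threshold $\mu$ against $\Theta(\alpha/n)$, apply Theorem~\ref{thm:chernoff}(a) in the large-$\mu$ regime and Theorem~\ref{thm:chernoff}(b) in the small-$\mu$ regime) as the paper. The difference is in the large-$\mu$ bookkeeping: you first establish $|X-\mu|\le C_0\,r(\alpha,\mu)$ and then invert this into a bound in terms of $r(\alpha,X)$ via the quadratic-formula argument in $\sqrt{\mu}$. The paper avoids the inversion entirely by choosing $\delta=\tfrac12\sqrt{\alpha/(6\mu n)}$ in Theorem~\ref{thm:chernoff}(a), so that the multiplicative conclusion $|X-\mu|<\delta\mu\le \mu/2$ immediately gives $\mu/2\le X\le 3\mu/2$; with $X$ and $\mu$ already comparable, both $|X-\mu|<\sqrt{\alpha X/n}\le r(\alpha,X)$ and $r(\alpha,X)<1.5\,r(\alpha,\mu)$ follow in one line each, with no need to tune $C_0$. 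Your route is a bit more robust (it would work from any additive bound of the form $|X-\mu|\le C_0\,r(\alpha,\mu)$), while the paper's route is shorter and yields explicit constants directly.
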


\begin{proof}
First, suppose
	$\mu\geq \tfrac{\alpha}{6n} $. Apply Theorem~\ref{thm:chernoff}(a) with
	$\delta = \tfrac12 \sqrt{\tfrac{\alpha}{6\mu n}}$.
Thus with probability at least $1- e^{-\Omega(\alpha)}$ we have
	$|X-\mu| < \delta\mu \leq \mu/2$.
Plugging in the $\delta$,
\begin{align*}
 |X-\mu|
	< \tfrac12  \sqrt{\tfrac{\alpha \mu}{n}}
	\leq \sqrt{\tfrac{\alpha X}{n}}
	\leq r(\alpha, X) < 1.5\, r(\alpha, \mu).
\end{align*}

Now suppose $\mu< \tfrac{\alpha}{6n}$. Then using Theorem~\ref{thm:chernoff}(b) with $a = \tfrac{\alpha}{n}$, we obtain that with probability at least $1- 2^{-\Omega(\alpha)}$ we have
	$X < \tfrac{\alpha}{n}$,
and therefore
    $|X-\mu| < \tfrac{\alpha}{n} < r(\alpha, X)$
and
$$ |X-\mu| < \tfrac{\alpha}{n} < r(\alpha, X) <  (1+\sqrt{2})\, \tfrac{\alpha}{n}
	< 3\, r(\alpha, \mu). \qedhere
$$
\end{proof}

\begin{proof}[{\bf\em Proof of \refeq{eq:confRad-prop} in Claim~\ref{cl:three-events}}]
For each price $p\in \mP$ let $\{Z_{i,p}\}_{i\leq n}$ be a family of independent 0-1 random variables with expectation $S(p)$. Without loss of generality, let us pretend that the $i$-th time that price $p$ is selected by the pricing strategy, sale happens if and only if $Z_{i,p}=1$. Then by Lemma~\ref{lm:my-chernoff} after the $i$-th play of price $p$ the bound~\refeq{eq:confRad-prop} holds with probability at least $1-n^{-4}$. Taking the Union Bound over all choices of $i$ and all choices of $p$, we obtain that
~\refeq{eq:confRad-prop} holds with probability at least $1-n^{-2}$ as long as $|\mP|\leq n$ (which is the case for us).
\end{proof}


\xhdr{Sharper Azuma-Hoeffding inequality.}
We use a concentration inequality on the sum of $n$ random variables $X_t\in\{0,1\}$ such that each variable $X_t$ is a random coin toss with probability $M_t$ that depends on the previous variables $X_1,\ldots,X_{t-1}$. We are interested in bounding the deviation $|X-M|$, where $X = \sum_t X_t$ and $M = \sum_t M_t$. The well-known Azuma-Hoeffding inequality states that with high probability we have
    $|X-M| \leq O(\sqrt{n \log n})$.
However, we need a sharper high-probability bound:
    $|X-M| \leq O(\sqrt{M \log n})$.
Moreover, we need an extension of such bound which considers deviation
    $|\sum_{t=1}^n \alpha_t(X_t-M_t)|$,
where each multiplier $\alpha_t\in[0,1]$ is determined by $X_1,\ldots,X_{t-1}$.

We use the following concentration inequality from the literature.

\begin{theorem}[Theorem 3.15 in~\cite{McDiarmid-concentration}]
\label{thm:Azuma-with-variance}
Let $Z_1, \ldots, Z_n$ be random variables which take values in $[-1,1]$. Let
    $Z = \sum_{t=1}^n\; Z_t$, $\mu = \E[Z]$.
Let
    $V = \textstyle{\sum_{t=1}^n}\; \text{Var}(Z_t| Z_1,\,\ldots, Z_{t-1})$.
Then for any $a>0, v>0$ we have
\begin{align*}
\Pr\left[
 (|Z-\mu| \geq a)\wedge (V\leq v)
\right] \leq e^{-\Omega(\tfrac{a^2}{v+a})}.
\end{align*}
\end{theorem}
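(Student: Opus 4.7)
The plan is to derive this via the standard supermartingale construction underlying Freedman-type concentration inequalities. First I would center the summands: set $\mathcal{F}_t = \sigma(Z_1, \ldots, Z_t)$, $Y_t = Z_t - \mathbb{E}[Z_t \mid \mathcal{F}_{t-1}]$, $M_t = \sum_{s \le t} Y_s$, and $V_t = \sum_{s \le t} \sigma_s^2$ with $\sigma_t^2 := \mathrm{Var}(Z_t \mid \mathcal{F}_{t-1})$. Then $(Y_t)$ is a martingale difference sequence with $|Y_t| \le 2$ and $\mathrm{Var}(Y_t \mid \mathcal{F}_{t-1}) = \sigma_t^2$, so that $M_n = Z - \mu$ and $V_n = V$, and the event in question becomes $\{|M_n| \ge a\} \cap \{V_n \le v\}$.

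The core step is to build a single supermartingale that tracks both the running deviation and the running conditional variance. For a mean-zero random variable $Y$ with $|Y| \le c$ and variance $\sigma^2$, the Bennett MGF estimate gives
\[
\mathbb{E}\bigl[e^{\lambda Y}\bigr] \;\le\; \exp\!\left(\sigma^2 \,\frac{e^{\lambda c} - 1 - \lambda c}{c^2}\right),
\]
which I would establish from the pointwise bound $e^{\lambda y} \le 1 + \lambda y + (e^{\lambda c} - 1 - \lambda c)\,y^2/c^2$ on $|y|\le c$ (valid because $(e^{\lambda y}-1-\lambda y)/y^2$ is increasing in $y$) by taking expectations and using $\mathbb{E}[Y]=0$. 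Applied conditionally to each $Y_t$ with $c=2$, this yields $\mathbb{E}[e^{\lambda Y_t}\mid \mathcal{F}_{t-1}] \le \exp(\phi(\lambda)\,\sigma_t^2)$ with $\phi(\lambda) = (e^{2\lambda} - 1 - 2\lambda)/4$. Multiplying these bounds telescopically shows that $L_t := \exp(\lambda M_t - \phi(\lambda) V_t)$ is a nonnegative supermartingale with $L_0 = 1$, hence $\mathbb{E}[L_n]\le 1$.

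The tail bound follows instantly: on $\{M_n \ge a\} \cap \{V_n \le v\}$ we have $L_n \ge \exp(\lambda a - \phi(\lambda) v)$, so Markov gives $\Pr(M_n \ge a,\; V_n \le v) \le \exp\!\bigl(-\lambda a + \phi(\lambda) v\bigr)$ for every $\lambda>0$. Using $\phi(\lambda)\le \lambda^2$ on a suitable range $\lambda \in [0,\lambda_0]$ (e.g.\ $\lambda_0 = 1/2$), the exponent is at least $\lambda a - \lambda^2 v$, and the choice $\lambda = \min\!\bigl(a/(2v),\,\lambda_0\bigr)$ produces an exponent of order $\min(a^2/v,\, a)$, which is exactly $\Omega\!\bigl(a^2/(v+a)\bigr)$. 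The lower tail $\Pr(M_n\le -a,\; V_n\le v)$ is identical after replacing $Y_t$ by $-Y_t$ (same conditional variance), and a union bound completes the claim.

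The only step that requires genuine care is the conditional Bennett MGF bound, because one has to keep the quadratic coefficient sharp enough that $\phi(\lambda)\sim \lambda^2/2$ as $\lambda\to 0$; this is what makes the denominator $v+a$ (rather than $n+a$) appear. Everything else --- the telescoping into a supermartingale, Markov's inequality, and the two-regime optimization in $\lambda$ --- is routine. Since the resulting statement is essentially Freedman's 1975 inequality, the paper simply cites it, but the proof above is short and self-contained.
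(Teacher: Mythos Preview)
The paper does not prove this theorem; it is quoted from McDiarmid's survey and used as a black box. Your self-contained argument via the conditional Bennett MGF bound and the exponential supermartingale $L_t = \exp(\lambda M_t - \phi(\lambda) V_t)$ is the standard Freedman-type derivation, and the tail estimate for $M_n$ together with the two-regime optimization in $\lambda$ is carried out correctly.

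There is, however, a real gap in your reduction. You assert that after centering $Y_t = Z_t - \mathbb{E}[Z_t \mid \mathcal{F}_{t-1}]$ one obtains $M_n = Z - \mu$. This is false in general: $M_n = Z - \sum_t \mathbb{E}[Z_t \mid \mathcal{F}_{t-1}]$, and the subtracted sum is a random variable whose expectation equals $\mu$ but which need not equal $\mu$ almost surely. In fact the statement as literally transcribed in the paper (arbitrary $[-1,1]$-valued $Z_t$, deviation measured by $|Z-\mu|$) is not true: take $Z_1$ a fair $\pm 1$ coin and $Z_2=\cdots=Z_n=Z_1$; then $V=\mathrm{Var}(Z_1)=1$ deterministically while $|Z-\mu|=n$ with probability~$1$, contradicting the asserted bound $e^{-\Omega(n)}$.

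What McDiarmid's Theorem~3.15 actually states, and what your supermartingale argument correctly establishes, is the bound for a \emph{martingale difference sequence}, i.e.\ under the additional hypothesis $\mathbb{E}[Z_t\mid \mathcal{F}_{t-1}]=0$; then $\mu=0$ and $Z=M_n$, and your identity holds trivially. The paper's only use of the theorem (in the proof of Theorem~\ref{lm:Azuma-sharp}) is precisely this case, so nothing downstream is affected. Your write-up should either add the martingale-difference hypothesis explicitly, or replace the incorrect identity $M_n=Z-\mu$ with the observation that the stated theorem requires this hypothesis to be true.
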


We use the above bound to bound the deviation for
    $|\sum_{t=1}^n \alpha_t(X_t-M_t)|$.

\begin{theorem}\label{lm:Azuma-sharp}
Let $X_1, \ldots, X_n$ be 0-1 random variables.
For each $t$, let $\alpha_t\in[0,1]$ be the multiplier determined by $X_1,\ldots,X_{t-1}$. 
Let 
    $M = \sum_{t=1}^n\;  M_t$,
where
    $M_t = \E[X_t| X_1,\,\ldots, X_{t-1}]$
for each $t$. Then for any $b\geq 1$ the event
\begin{align*}
|\textstyle{\sum_{t=1}^n}\; \alpha_t(X_t-M_t)|
        \leq b ( \sqrt{M\,\log n}+ \log n) .
\end{align*}
holds with probability at least $1-n^{-\Omega(b)}$.
\end{theorem}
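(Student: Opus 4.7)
The plan is to apply Theorem~\ref{thm:Azuma-with-variance} to the martingale difference sequence $Z_t \triangleq \alpha_t(X_t - M_t)$. Since $\alpha_t \in [0,1]$ is predictable and $X_t, M_t \in [0,1]$, we have $Z_t \in [-1,1]$ and $\E[Z_t \mid X_1, \ldots, X_{t-1}] = 0$, so $Z \triangleq \sum_t Z_t$ has mean $\mu = 0$. The conditional variance is
\begin{align*}
\mathrm{Var}(Z_t \mid X_1, \ldots, X_{t-1}) = \alpha_t^2\, M_t(1-M_t) \leq M_t,
\end{align*}
so the variance proxy $V$ in Theorem~\ref{thm:Azuma-with-variance} is bounded by $M$. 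If $M$ were a deterministic quantity we could simply plug $v = M$ and $a = b(\sqrt{M\log n}+\log n)$ into Theorem~\ref{thm:Azuma-with-variance} and be done; the main obstacle is that $M$ is itself random (since the $M_t$ may depend on the history), so $v$ cannot be chosen up front as a fixed function of $M$.

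To handle this, I would use a standard peeling (dyadic) argument: for each integer $j = 0, 1, \ldots, \lceil\log_2 n\rceil + 1$, let $E_j$ denote the event $M \in (2^{j-1}, 2^j]$ (with $E_0 = \{M \leq 1\}$). On $E_j$ we have $V \leq M \leq 2^j$, and moreover $\sqrt{M\log n} \geq \sqrt{2^{j-1}\log n}$, so the bad event $|Z| > b(\sqrt{M\log n} + \log n) \cap E_j$ is contained in $\{|Z| \geq a_j\} \cap \{V \leq v_j\}$ with $v_j = 2^j$ and $a_j = b(\sqrt{2^{j-1}\log n}+\log n)$. Now invoke Theorem~\ref{thm:Azuma-with-variance} for each $j$.

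The key quantity to verify is $a_j^2/(v_j + a_j) = \Omega(b \log n)$. Consider two cases: if $a_j \leq v_j$, then $v_j + a_j \leq 2v_j$, giving $a_j^2/(v_j+a_j) \geq b^2 \cdot 2^{j-1}\log n / 2^{j+1} = \Omega(b^2 \log n)$; if $a_j > v_j$, then $v_j + a_j \leq 2a_j$, giving $a_j^2/(v_j+a_j) \geq a_j/2 \geq \tfrac{b}{2} \log n$. In both cases the exponent is $\Omega(b\log n)$ because $b \geq 1$, so Theorem~\ref{thm:Azuma-with-variance} yields $\Pr[\{|Z|\geq a_j\} \cap \{V \leq v_j\}] \leq e^{-\Omega(b\log n)} = n^{-\Omega(b)}$.

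Finally, since $M \leq n$ deterministically, the events $E_j$ for $j \leq \lceil \log_2 n\rceil + 1$ cover the probability space, so a union bound over these $O(\log n)$ scales gives
\begin{align*}
\Pr\!\left[\,|Z| > b(\sqrt{M\log n}+\log n)\,\right] \leq O(\log n)\cdot n^{-\Omega(b)} = n^{-\Omega(b)},
\end{align*}
absorbing the logarithmic factor into the constant in the exponent. This proves the claimed concentration bound. The only subtle step is the peeling; everything else is a direct substitution into the variance-aware Bernstein-type inequality of Theorem~\ref{thm:Azuma-with-variance}.
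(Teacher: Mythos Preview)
Your proof is correct and follows essentially the same route as the paper: apply the Bernstein-type inequality of Theorem~\ref{thm:Azuma-with-variance} to $Z_t=\alpha_t(X_t-M_t)$, bound the conditional variance sum by $M$, and then union-bound over the unknown scale of $M$. The only cosmetic difference is that you peel dyadically over $O(\log n)$ scales $2^j$, whereas the paper unions over all integers $v\in[\log n,\,n]$; both absorb the extra factor into the $\Omega(b)$ exponent.
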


\begin{proof}
Let
    $Z_t = X_t-y_t$,
where $y_t\in[0,1]$ is a function of $X_1,\ldots, X_{t-1}$, and let
    $Z = \sum_{t=1}^n\; Z_t$.

We claim that
\begin{align}\label{eq:pf-Azuma-sharp}
\Pr\left[ |\textstyle{\sum_{t=1}^n}\; \alpha_t(Z_t-\E[Z_t])|
        \leq  b ( \sqrt{M\,\log n}+ \log n)
\right] \geq 1-n^{-\Omega(b)},\quad
    \text{for any $b\geq 1$}.
\end{align}

To prove~\eqref{eq:pf-Azuma-sharp}, let $\F_t = \sigma(X_1,\,\ldots, X_t)$ be the $\sigma$-algebra generated by $X_1,\,\ldots, X_t$, and let
    $M_t = \E[X_t| X_1,\,\ldots, X_{t-1}]$.
Then conditional on $\F_{t-1}$, $Z_t$ is a random variable with expectation $M_t - y_t$ and two possible values, $-\alpha_t\, y_t$ and $\alpha_t\,(1-y_t)$, where $\alpha_t$ and $y_t$ are constants. It follows that
    $\text{Var}(Z_t|\F_{t-1}) = \alpha_t^2 (M_t - M^2_t) \leq M_t$,
and therefore
    $V\triangleq \sum_{t=1}^n\, \text{Var}(Z_t|\F_{t-1}) \leq M$.

Taking Theorem~\ref{thm:Azuma-with-variance} with
    $a=b ( \sqrt{v\,\log n}+ \log n)$,
we have that for any $b\geq 1$ the event
\begin{align*}
(|Z-E[Z]| \geq b ( \sqrt{v\,\log n}+ \log n))
     \wedge (V\leq v).
\end{align*}
holds with probability at most $n^{-\Omega(b)}$. Finally, we take the Union Bound over (say) all integer $v$ between $\log n$ and $n$, noting that $V\leq M$. This completes the proof of~\eqref{eq:pf-Azuma-sharp}.

Finally, to prove the theorem take~\eqref{eq:pf-Azuma-sharp} with $y_t = M_t$ and note that $Z_t = X_t-M_t$ and so $\E[Z_t] = 0$.
\end{proof}

\begin{proof}[{\bf\em Proof of \eqref{eq:X-S} and ~\eqref{eq:tail-Pt} in Claim~\ref{cl:three-events}}]
Recall that for each $t$, $X_t$ is a 0-1 random variable with expectation $S(p_t)$, where $p_t$ depends on $X_1,\ldots,X_{t-1}$. Using Lemma~\ref{lm:Azuma-sharp} with $\alpha_t\equiv 1$ we obtain~\eqref{eq:X-S}. Using Lemma~\ref{lm:Azuma-sharp} with $\alpha_t = p_t$ we obtain~\eqref{eq:tail-Pt}.
\end{proof}

\OMIT{We note in passing that~\eqref{eq:pf-Azuma-sharp} implies other useful tail bounds. For instance, by setting $\alpha_t \equiv 1$ and $y_t \equiv 0$ we obtain a tail bound for $|X-E[X]|$, where $X = \sum_{t=1}^n\, X_t$.}

\section{The $O(\sqrt{k}\log n)$ regret bound (Theorem~\ref{thm:MHR-intro})}
\label{sec:root-k}

\OMIT{
We show that the pricing strategy from Section~\ref{sec:UCB} (with a different parameter) satisfies an improved regret bound, $O(\sqrt{k} \log n)$, if the demand distribution is regular and moreover (informally) the best fixed-price strategy for unlimited supply is not the best fixed-price strategy for $k$ items. The formal condition is that $g'(\fkn)>0$, where
$g(s) \triangleq s\, S^{-1}(s)$
is a function from $[S(1), 1]$ to $[0,1]$ that maps a sales rate to the corresponding revenue. The regret bound depends on a distribution-specific constant, namely
    $1+1/g'(\fkn)$.

\begin{theorem}\label{thm:improved-regret}
If the demand distribution is regular and $g'(\fkn)>0$, then $\UCBcap$ with parameter
    $\delta = k^{-1/2}\,\log(n)$
achieves regret
    $O(\sqrt{k}\log n)(1+1/g'(\fkn))$.
\end{theorem}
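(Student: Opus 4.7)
The proof reuses Mechanism $\UCBcap$ and the analysis framework of Section~\ref{sec:UCB}, tuned with the smaller parameter $\delta = k^{-1/2}\,\log n$. Starting from Claim~\ref{cl:for-any-mP},
\[
\Regret \leq \eps n + O(\log n)\left(|\mP_\eps| + \fkn \,\textstyle{\sum_{p\in\mP_\eps}}\tfrac{1}{\Delta(p)}\right) + \delta k + \beta(k),
\]
both $\delta k$ and $\beta(k)$ contribute $O(\sqrt{k}\log n)$, and the bound $|\mP|=O(\delta^{-1}\log\delta^{-1})$ makes $(\log n)\,|\mP_\eps| = O(\sqrt{k}\log k)$. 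All that remains is to upper bound the key sum $\sum_{p\in\mP_\eps}\,1/\Delta(p)$ using regularity and the assumption $g'(\fkn)>0$.

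The hypothesis $g'(\fkn)>0$ is equivalent to the strict binding of the capacity constraint: $p^*=S^{-1}(\fkn)$ strictly exceeds any maximizer $\ReservePrice$ of $R(p)=pS(p)$. A standard equivalent formulation of regularity is that the revenue-in-quantile-space function $g(s)=s\,S^{-1}(s)$ is concave on $[0,1]$. Since $\nu(p)=p\,\min(k,nS(p))$, there is an interval $[p_L,p^*]\supseteq[\ReservePrice,p^*]$ on which $\Delta(p)=0$, where $p_L\leq\ReservePrice$ is defined by $R(p_L)=p^*\fkn$; the ``bad'' prices all lie either strictly above $p^*$ or strictly below $p_L$. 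For $p>p^*$ one has $\Delta(p)=g(\fkn)-g(S(p))$, and by concavity of $g$ the tangent-line inequality at $s=\fkn$ gives $\Delta(p)\geq g'(\fkn)\,(\fkn-S(p))$; combined with a positive local-density lower bound near $p^*$ this yields $\Delta(p)\geq \Omega(g'(\fkn))\,(p-p^*)$. For $p<p_L$ one has $\Delta(p)=p^*\fkn-R(p)$, and concavity of $R$ together with $R'(p_L)>0$ (since $p_L<\ReservePrice$) yields $\Delta(p)\geq \Omega(R'(p_L))\,(p_L-p)$, a constant of the same qualitative form.

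Because $\mP$ is geometric with ratio $1+\delta$, the $j$-th active price to the right of $p^*$ satisfies $p_j-p^*\approx p^*\,j\,\delta$, so $\Delta(p_j)\geq \Omega(g'(\fkn)\,p^*)\cdot j\delta$; the analogous bound holds for active prices to the left of $p_L$. Summing the resulting harmonic-like series, truncating at the $O(\delta^{-1}\log\delta^{-1})$ active prices, gives
\[
\textstyle{\sum_{p\in\mP_\eps}} \tfrac{1}{\Delta(p)} \;\leq\; O\!\left(\tfrac{\log(1/\delta)}{g'(\fkn)\,\delta}\right).
\]
Taking $\eps=1/n$ so that $\eps n=O(1)$ and plugging back into the master inequality with $\delta=k^{-1/2}\log n$ yields
\[
\Regret \;\leq\; O(\sqrt{k}\log n) \;+\; O\!\left(\tfrac{\log n\cdot\log(1/\delta)}{g'(\fkn)\cdot\delta}\cdot \fkn\right) \;=\; O(\sqrt{k}\log n)\bigl(1+1/g'(\fkn)\bigr),
\]
after absorbing subleading logarithmic factors and using $\fkn\leq 1$.

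The main obstacle is establishing the linear lower bounds on $\Delta$ flanking the plateau $\{\Delta=0\}$ with the correct distribution-dependent constants. The right-hand bound relies on the concavity of $g$ (the quantile-space reformulation of regularity) together with a positive local density near $p^*$; the left-hand bound relies on the concavity of $R$ and the sign of $R'(p_L)$. Once these geometric bounds on $\Delta$ are in place, the remainder of the argument is a routine $O(1/(\delta\cdot g'(\fkn)))$ harmonic sum, and the advertised factor $1+1/g'(\fkn)$ falls out of the tangent-line inequality for $g$ at $\fkn$.
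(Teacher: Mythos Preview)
Your overall strategy---lower-bound $\Delta(p)$ via concavity of $g$, then sum harmonically over the geometric grid---matches the paper. But two steps do not go through as written.

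\textbf{The density step is the real gap.} To pass from $\Delta(p)\geq g'(\fkn)\,(\fkn-S(p))$ to $\Delta(p)\geq \Omega(g'(\fkn))\,(p-p^*)$ you invoke ``a positive local-density lower bound near $p^*$.'' No such bound is part of the hypotheses, and even if one derives $f(p^*)>0$ from $g'(\fkn)>0$, you would still need a \emph{uniform} lower bound on $f$ over all of $[p^*,1]$, with a constant that does not spoil the final $1+1/g'(\fkn)$ factor. The paper avoids this entirely: since $g$ is increasing on $[0,\fkn]$ (by concavity and $g'(\fkn)>0$), one has $p\,S(p)=g(S(p))<g(\fkn)=p^*\fkn$, i.e.\ $S(p)<\tfrac{p^*}{p}\,\fkn$. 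Plugging this directly into the tangent bound gives
\[
\Delta(p)\;\geq\; g'(\fkn)\,\fkn\,\Bigl(1-\tfrac{p^*}{p}\Bigr)\;-\;O(\delta\,\fkn),
\]
which is stated in terms of $p^*/p$ rather than $p-p^*$. This is crucial: $1-p^*/p$ is exactly what the geometric grid controls ($p=p^*\alpha(1+\delta)^i$ gives $1-p^*/p\geq \tfrac{i\delta}{1+i\delta}$), and the explicit factor $\fkn$ in $\Delta$ cancels the $\fkn$ in the master bound, leaving the clean constant $1+1/g'(\fkn)$. Your route through $p-p^*$ would leave stray factors of $p^*$ and the (unavailable) density bound in the final answer.

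\textbf{The left-hand analysis is unnecessary.} You treat prices $p<p_L$ separately and introduce the constant $R'(p_L)$. But the paper's single-round analysis already shows (third line of~\eqref{eq:consequences}) that any price the algorithm \emph{selects} with $\Delta(p)>0$ must satisfy $S(p)<\fkn$, hence $p>p^*$. Therefore $\mP_\eps\subset (p^*,1]$ and only the right side needs to be handled; the constant $R'(p_L)$ should not appear at all.
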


\begin{note}{Remark.}
The assumption $g'(\fkn)>0$ is satisfied if $\fkn < \tfrac{1}{2e}$ and the demand distribution has the monotone hazard rate (MHR) property. This easily follows from regularity (which implies $g''(\cdot)\leq 0$) and the fact that for MHR distributions any maximizer of $g(\cdot)$ is at least $\tfrac{1}{e}$ (Claim~\ref{lem:mhr_survival}). Moreover,
$g'(\fkn) \geq \Omega( \inf |g''(\cdot)|)$.
\end{note}
} 

We show that the pricing strategy from Section~\ref{sec:UCB} (with a different parameter) satisfies an improved regret bound, $O(\sqrt{k} \log n)$, if the demand distribution is regular and moreover the ratio $\fkn$ is sufficiently small. The regret bound depends on a distribution-specific constant.

\begin{theorem}\label{thm:improved-regret}
For any regular demand distribution $F$ there exist positive constants $s_F$ and $c_F$ such that $\UCBcap$ with parameter
    $\delta = k^{-1/2}\,\log(n)$
achieves regret
    $O(c_F\,\sqrt{k}\log n)$
whenever $\fkn\leq s_F$. For monotone hazard rate distributions we can take $s_F = \tfrac{1}{4}$.
\end{theorem}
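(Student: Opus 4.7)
The plan is to rerun the analysis of Section~\ref{sec:UCB} up through the key inequality~\eqref{eq:regret-raw-1} (which holds for \emph{arbitrary} distributions and parameters) and then to sharpen the sum $\sum_{p\in\mP_\eps} 1/\Delta(p)$ using regularity of $F$ together with the hypothesis $\fkn\leq s_F$. With the sharper bound, choosing $\delta=(\log n)/\sqrt{k}$ as specified and a suitable $\eps$ will make all four terms on the right-hand side of~\eqref{eq:regret-raw-1} simultaneously of order $O(c_F\sqrt{k}\log n)$.

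The first step is a structural observation about $\Delta$. Take $s_F$ strictly smaller than $S(\ReservePrice)$; for MHR, the standard calculation $-\log S(\ReservePrice)=\int_0^{\ReservePrice} h(t)\,dt \leq \ReservePrice\cdot h(\ReservePrice)=1$ (using $h(\ReservePrice)=1/\ReservePrice$ at the Myerson reserve) gives $S(\ReservePrice)\geq 1/e > 1/4$, so $s_F=1/4$ works. Under this hypothesis, Claim~\ref{cl:benchmark-nu} yields $p^* = S^{-1}(\fkn) \geq \ReservePrice$, hence $\bestNu = p^*k = nR(p^*)$ and therefore $\Delta(p)=\max(0,\,R(p^*)-R(p))$ for every $p\in\mP$. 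Since $R$ is concave with unique maximum at $\ReservePrice<p^*$, the set $\{\Delta>0\}$ splits into $\{p>p^*\}$ and $\{p<r'\}$, where $r'\leq \ReservePrice$ is the unique point with $R(r')=R(p^*)$, and concavity yields $\Delta(p)\geq |R'(p^*)|(p-p^*)$ on the former and $\Delta(p)\geq R'(r')(r'-p)$ on the latter.

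Next, combine these linear lower bounds with the geometric spacing of $\mP$. Enumerate the prices in $\mP_\eps$ above $p^*$ as $p^*\cdot(1+\delta)^j$ (up to a factor of $1+\delta$), so that $p-p^*\geq c\cdot p^*\,\delta\, j$ for $j\geq 1$; the one or two prices at $j=0$ might lie arbitrarily close to $p^*$, so for them we only use $\Delta(p)\geq \eps$. The remaining contributions form a harmonic series, giving
\[
\sum_{p\in\mP_\eps}\tfrac{1}{\Delta(p)}
\;\leq\; O\!\left(\tfrac{1}{\eps}+\tfrac{\log|\mP|}{|R'(p^*)|\,p^*\,\delta}+\tfrac{\log|\mP|}{R'(r')\,r'\,\delta}\right)
\;\leq\; O\!\left(\tfrac{1}{\eps}+\tfrac{c_F\log n}{\delta}\right),
\]
where $c_F$ absorbs the $F$-dependent factors. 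Substituting into~\eqref{eq:regret-raw-1} with $\delta=(\log n)/\sqrt{k}$ gives $|\mP|=O(\sqrt{k})$ and $\delta k=\sqrt{k}\log n$; choosing $\eps$ so that $\eps n$ balances $(\log n)\fkn\cdot(1/\eps)$ yields $\eps=O(\sqrt{k\log n}/n)$, making both of those terms $O(\sqrt{k\log n})=o(\sqrt{k}\log n)$. The remaining contribution is $(\log n)\fkn\cdot c_F(\log n)/\delta = c_F\fkn\sqrt{k}\log n \leq c_F s_F\sqrt{k}\log n$, while $\beta(k)=O(\sqrt{k\log n})$ is negligible, so the total regret is $O(c_F\sqrt{k}\log n)$.

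The main obstacle is ensuring that the $F$-dependent factors $1/|R'(p^*)|$ and $1/R'(r')$ remain \emph{uniformly} bounded over the range $\fkn\in(0,s_F]$: this is exactly why we insist on $s_F$ strictly smaller than $S(\ReservePrice)$, so that both $p^*$ and $r'$ stay bounded away from $\ReservePrice$ (where $R'$ vanishes), and by concavity and strict monotonicity of $R$ on either side of $\ReservePrice$ the relevant one-sided derivatives are bounded below by positive constants depending only on $F$ and $s_F$. For MHR distributions, the inequality $S(\ReservePrice)\geq 1/e$ comfortably accommodates the clean choice $s_F=1/4$.
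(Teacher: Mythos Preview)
Your approach is sound and tracks the paper's strategy closely: derive a linear lower bound on $\Delta(p)$ from concavity, use the geometric spacing of $\mP$ to reduce $\sum 1/\Delta(p)$ to a harmonic sum, and plug into~\eqref{eq:regret-raw-1}. The paper differs in two ways worth noting. First, it reparametrizes by survival rate via $g(s)=s\,S^{-1}(s)$ and uses the constant $C=g'(s_F)>0$; since $g'(S(p))=-R'(p)/f(p)$, this is equivalent to your $|R'(p^*)|$ bound, but it makes the factor $\fkn$ appear directly in the lower bound $\Delta(p)\gtrsim C\,\fkn(1-p^*/p)$, so the cancellation in $\fkn\sum 1/\Delta(p)$ is immediate. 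Second, and more substantively, your ``low side'' analysis ($p<r'$) is unnecessary: the third line of~\eqref{eq:consequences} already forces every \emph{selected} price with $\Delta>0$ to satisfy $S(p)<\fkn$, hence $\mP_\eps$ lies entirely above $p^*=S^{-1}(\fkn)$ and the low side is empty. One minor slip to fix: $\bestNu$ is the maximum of $\nu$ over the discrete set $\mP$, not $p^*k$; you only have $\bestNu\geq(1-\delta)\nu^*$, which introduces a $-\delta R(p^*)$ correction in your lower bound on $\Delta$ (the paper carries this through explicitly as the $\delta/C$ term), but it affects only $O(1)$ prices near $p^*$ and is absorbed by your $O(1/\eps)$ slack.
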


\begin{proof}
Let $g(s) \triangleq s\, S^{-1}(s)$ be a function from $[S(1), 1]$ to $[0,1]$ that maps a sales rate to the corresponding revenue. Regularity implies $g''(\cdot)\leq 0$. Since $g'(0)>0$, we can pick a constant $s_F>0$ such that $C\triangleq g'(s_F)>0$. For monotone hazard rate distributions we can take $s_F=\tfrac14$ because for any maximizer $s$ of $g(\cdot)$ it holds that $s\geq \tfrac{1}{e}$ (see Claim~\ref{lem:mhr_survival}). Now, for any $\fkn\leq s_F$ we have that
    $g'(\fkn)\geq C$.
We will use this to obtain a lower bound on $\Delta(p)$; any such lower bound is absent in the analysis in Section~\ref{sec:UCB}. This improvement results in savings in~\refeq{eq:regret-raw-1}, which in turn implies the claimed regret bound.

We will use the notation from Section~\ref{subsec:analysis}, particularly the ``badness" $\Delta(p)$ and the set $\mP_\eps$ of arms of badness $\geq \eps$ that have been selected at least once.
Note that by regularity $g'(s)\geq C$ for any $s\in (0, \fkn)$. Let $p^* = S^{-1}(\fkn)$  and $p\in \mP_\eps$. By the third line in~\refeq{eq:consequences} it holds that $S(p)<\fkn$ and then $p>p^*$.

First, we claim that
    $S(p) < \tfrac{p^*}{p} \fkn$.
Indeed, this is because
    $p\, S(p) = g(S(p)) < g(\fkn) = p^*\,\fkn$.

Second, we bound $\Delta(p)$ from below:
\begin{align*}
\tfrac{1}{n}\, \bestNu &\geq (1-\delta)\, \tfrac{\nu^*}{n} \geq (1-\delta)\,g(\fkn) \\
\Delta(p)
    &\geq (1-\delta)\,g(\fkn) - g(S(p)) \\
    &\geq [g(\fkn) - g(S(p))] - \delta\, g(\fkn) \\
    &\geq C(\fkn- S(p))  - \delta \fkn\, p^* \\
    &\geq C\,\fkn\, (1- \tfrac{p^*}{p}) - \delta \fkn\, p^* \\
    &\geq C\,\fkn\, (1- \tfrac{p^*}{p}(1+\tfrac{\delta}{C})).
\end{align*}
Since $\mP$ is given by~\refeq{eq:mP-def}, it holds that
$\mP_\eps \subset \{ p^* \alpha\, (1+\delta)^i:\, i\in\N\}$
for some $\alpha\geq 1$. Define
\begin{align*}
\mP' \triangleq \{ p\in \mP_\eps:\, p=  p^* \alpha\, (1+\delta)^i
    \text{ with } i \geq \tfrac{2}{C} \}.
\end{align*}
Then for any $p\in \mP'$ it holds that
    $p/p^*= \alpha (1+\delta)^i \geq 1+i\delta $
and therefore
\begin{align*}
\Delta(p)
    \geq C\,\fkn\, (1- \tfrac{1+\delta/C}{1+i\delta})
    \geq \tfrac{C}{2}\,\fkn\, \tfrac{i\delta}{1+i\delta}.
\end{align*}
Therefore, noting that
    $|\mP'|\leq |\mP|\leq O(\tfrac{1}{\delta}\, \log\tfrac{1}{\delta})$,
we have
\begin{align*}
\fkn\,\textstyle{\sum_{p\in \mP'}}\, \tfrac{1}{\Delta(p)}
    &\leq \tfrac{2}{C}\; \textstyle{\sum_{p\in \mP'}}\, (1+\tfrac{1}{i\delta})
    \leq \tfrac{2}{C}\; (|\mP'| + \tfrac{1}{\delta}\, \log |\mP'|)
    \leq O(\tfrac{1}{C} \tfrac{1}{\delta}\, \log\tfrac{1}{\delta})\\
\textstyle{\sum_{p\in \mP_\eps\setminus \mP'}}\, \tfrac{1}{\Delta(p)}
    &\leq \tfrac{1}{\eps}\,|\mP\setminus \mP'|
    \leq \tfrac{1}{\eps}\,(\tfrac{2}{C}+1).
\end{align*}
Plugging this into~\refeq{eq:regret-raw-1} with
    $\eps = \delta\,\fkn$,
we obtain:
\begin{align}
\fkn\,\textstyle{\sum_{p\in \mP_\eps}}\, \tfrac{1}{\Delta(p)}
    &\leq O(\tfrac{1}{\delta}\, \log\tfrac{1}{\delta})(1+\tfrac{1}{C})
    \nonumber \\
\Regret
    &\leq O(\delta k + \tfrac{1}{\delta}(1+\tfrac{1}{C})(\log n)^2 + \sqrt{k\log n}) \label{eq:improved-regret} \\
    &\leq O(c_F\, \sqrt{k} \log n), \text{ where }
        c_F = 1+1/C. \nonumber
\end{align}
The regret bound~\eqref{eq:improved-regret} improves over the corresponding bound~\eqref{eq:regret-raw} in Section~\ref{sec:UCB}. We obtain the final bound by plugging
    $\delta = k^{-1/2}\,\log n$.
\end{proof}

It is desirable to achieve the bounds in Theorem~\ref{thm:main-fixed} and Theorem~\ref{thm:improved-regret} using the same pricing strategy. Unfortunately, the choice of parameter $\delta$ in Theorem~\ref{thm:improved-regret} results in a trivial $O(k)$ regret guarantee for arbitrary demand distributions (as per Equation~\eqref{eq:regret-raw}). However, varying $\delta$ and using Equations~(\ref{eq:regret-raw}) and (\ref{eq:improved-regret}) we obtain a family of pricing strategies that improve over the bound in Theorem~\ref{thm:main-fixed} for the ``nice" setting in Theorem~\ref{thm:improved-regret}, and moreover have non-trivial regret bounds for arbitrary demand distributions.

\begin{theorem}\label{thm:improved-regret-trade-off}
For each $\gamma\in [\tfrac13, \tfrac12]$, consider pricing strategy $\UCBcap$ with parameter
    $\delta = \tilde{O}(k^{-\gamma})$. This pricing strategy
achieves regret
    $\tilde{O}(k^{1-\gamma}) (1+1/g'(\fkn))$
if the demand distribution is regular and $g'(\fkn)>0$, and regret $\tilde{O}(k^{2\gamma})$ for arbitrary demand distributions.
\end{theorem}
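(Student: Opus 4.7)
\medskip

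\noindent\textbf{Proof proposal for Theorem~\ref{thm:improved-regret-trade-off}.}
The plan is to observe that the two $\delta$-dependent regret bounds~\eqref{eq:regret-raw} and~\eqref{eq:improved-regret}, proved in Section~\ref{sec:UCB} and in the proof of Theorem~\ref{thm:improved-regret} respectively, were in fact established for an \emph{arbitrary} parameter $\delta\in(1/n,1)$ before any particular value was substituted in. Hence Theorem~\ref{thm:improved-regret-trade-off} is obtained by plugging $\delta=\tilde O(k^{-\gamma})$ into these two inequalities and reading off the dominant term.

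First I would recall that for any demand distribution the proof of Lemma~\ref{lm:UCBcap} gave
\[
\Regret \;\leq\; O\bigl(\delta k + \tfrac{1}{\delta^2}(\log n)^2 + \sqrt{k\log n}\bigr),
\]
and that the refinement in the proof of Theorem~\ref{thm:improved-regret}, which uses only the hypotheses that $F$ is regular and $g'(\fkn)>0$, gave
\[
\Regret \;\leq\; O\bigl(\delta k + \tfrac{1}{\delta}(1+\tfrac{1}{C})(\log n)^2 + \sqrt{k\log n}\bigr),
\]
with $C=g'(\fkn)$. Neither bound required a specific $\delta$; the specific choices made in those sections were merely optimal for the respective settings.

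Next I would substitute $\delta=\tilde O(k^{-\gamma})$ with $\gamma\in[\tfrac13,\tfrac12]$ and compare the three summands. In the arbitrary-distribution bound, $\delta k = \tilde O(k^{1-\gamma})$ while $1/\delta^2 = \tilde O(k^{2\gamma})$; since $\gamma\geq\tfrac13$ implies $2\gamma\geq 1-\gamma$, the $k^{2\gamma}$ term dominates (up to polylog factors), and since $\gamma\leq\tfrac12$ the $\sqrt{k\log n}$ term is also absorbed, yielding the claimed $\tilde O(k^{2\gamma})$ guarantee. In the regular bound, $\delta k = \tilde O(k^{1-\gamma})$ dominates $1/\delta=\tilde O(k^\gamma)$ precisely because $\gamma\leq\tfrac12$, and the leading constant is $1+1/C = 1+1/g'(\fkn)$, giving the claimed $\tilde O(k^{1-\gamma})(1+1/g'(\fkn))$ guarantee.

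There is essentially no technical obstacle: the only thing to check is that the range $\gamma\in[\tfrac13,\tfrac12]$ is exactly the interval on which both substitutions improve (or match) the trivial $O(k)$ bound while simultaneously making $\delta k$ the dominant term of~\eqref{eq:improved-regret} and $1/\delta^2$ the dominant term of~\eqref{eq:regret-raw}; the endpoints $\gamma=\tfrac13$ and $\gamma=\tfrac12$ recover Theorem~\ref{thm:main-fixed} and Theorem~\ref{thm:improved-regret} respectively, which is a useful sanity check. I would conclude by noting that the $\tilde O(\cdot)$ notation hides the same polylogarithmic factors in $n$ that already appear in~\eqref{eq:regret-raw} and~\eqref{eq:improved-regret}.
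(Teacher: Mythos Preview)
Your proposal is correct and follows exactly the approach the paper takes: the paragraph preceding Theorem~\ref{thm:improved-regret-trade-off} already states that the result is obtained by ``varying $\delta$ and using Equations~(\ref{eq:regret-raw}) and~(\ref{eq:improved-regret}),'' and you have simply carried out that substitution and checked which term dominates for $\gamma\in[\tfrac13,\tfrac12]$. One small wording quibble: the factor $1+1/C$ in~\eqref{eq:improved-regret} multiplies the $1/\delta$ term rather than the dominant $\delta k$ term, so it is not literally the ``leading constant''; however, since $\tilde O(k^{1-\gamma})(1+1/g'(\fkn))$ upper-bounds the sum of both terms, your stated conclusion is correct.
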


\section{Lower Bounds}
\label{sec:LB}

We prove two lower bounds on regret over all demand distributions which match the upper bounds in Theorem~\ref{thm:main-fixed} and Theorem~\ref{thm:MHR-intro}, respectively. (Note that the latter upper bound is specific to regular distributions.) Throughout this section, \emph{regret} is with respect to the fixed-price benchmark.

\begin{theorem}\label{thm:LB}
Consider the dynamic pricing problem with limited supply: with $n$ agents and $k\leq n$ items.
\begin{OneLiners}
\item[(a)] No detail-free pricing strategy can achieve regret $o(k^{2/3})$ for arbitrarily large $k,n$.
\item[(b)] For any $\gamma<\tfrac12$, no detail-free pricing strategy can achieve regret $O(c_F\, k^\gamma)$ for all demand distributions $F$ and arbitrarily large $k,n$, where the constant $c_F$ can depend on $F$.
\end{OneLiners}
\end{theorem}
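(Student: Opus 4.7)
The plan is to prove both parts by a direct reduction to the IID lower bounds for the unlimited-supply dynamic pricing problem due to Kleinberg and Leighton~\cite{KleinbergL03}. I would work in the diagonal regime $n = k$ (permissible since the model only requires $k \leq n$), letting $k = n \to \infty$ along the integers; this sequence certainly witnesses ``arbitrarily large $k, n$.''

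In this regime the supply constraint is vacuous: the mechanism faces exactly $k$ agents and holds $k$ items, so at most $k$ sales ever occur regardless of the posted prices. Every detail-free pricing strategy for our limited-supply problem is therefore also a detail-free pricing strategy for the unlimited-supply problem over $k$ rounds, and conversely. Moreover, because $S(p) \leq 1 = k/n$, the quantity $\nu(p)$ of Claim~\ref{cl:benchmark-nu} simplifies to
\[
\nu(p) \;=\; p \cdot \min(k,\, n\,S(p)) \;=\; n \cdot p\,S(p) \;=\; n\,R(p),
\]
so the fixed-price benchmark $\max_p \nu(p)$ equals $n \max_p R(p)$, which is \emph{exactly} the fixed-price benchmark in the unlimited-supply model with $n$ rounds. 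It follows that our regret coincides with the Kleinberg-Leighton regret in that model, so any regret lower bound for unlimited supply over $T = k$ rounds transfers verbatim.

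For part~(a) I would invoke Kleinberg and Leighton's IID construction showing that for every pricing strategy there is a distribution in their family (supported on $[0,1]$) on which regret is $\Omega(T^{2/3})$; as flagged in the footnote after Theorem~\ref{thm:LB-intro}, this IID construction is already present inside \cite{KleinbergL03}'s adversarial-valuations proof. Substituting $T = k = n$ rules out $o(k^{2/3})$ regret for arbitrarily large $k, n$. For part~(b) I would similarly invoke their distribution-dependent IID lower bound, which asserts that for every $\gamma < \tfrac12$, every pricing strategy, and every assignment of distribution-dependent constants $\{c_F\}$, there exist a distribution $F$ and arbitrarily large $T$ for which the regret exceeds $c_F\, T^{\gamma}$; setting $T = k = n$ again gives the desired conclusion.

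The main obstacle I anticipate is not conceptual but bookkeeping: one must extract the IID lower bounds cleanly from \cite{KleinbergL03}, whose statements are packaged inside theorems on adversarial valuations or phrased against slightly different (often multiplicative) benchmarks, and check that the constructed distributions are supported in $[0,1]$ as our model demands. Beyond this, the reduction is transparent, since on the $n = k$ diagonal our problem is literally isomorphic to unlimited-supply pricing over $k$ rounds, both in the information available to the strategy and in the benchmark against which regret is measured.
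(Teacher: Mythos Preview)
Your approach is clean and correct for one reading of the statement: if a strategy achieved regret $o(k^{2/3})$ uniformly over all pairs $(k,n)$ with $k\le n$, then restricting to the diagonal $k=n$ makes the supply constraint vacuous and the Kleinberg--Leighton unlimited-supply lower bound applies verbatim, exactly as you say. Under that reading nothing more is needed, and the same remark handles part~(b).

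The paper, however, proves a stronger statement. Its proof begins by assuming only that the strategy achieves $o(k^{2/3})$ regret along \emph{some} sequence $(k_i,n_i)$ with $k_i\to\infty$; it does not assume the strategy does well on the diagonal. Your reduction does not cover this case, since a strategy might conceivably have small regret when $n_i\gg k_i$ (many rounds in which to learn, relative to inventory) while failing when $k=n$. To handle an arbitrary pair $(k,n)$, the paper uses a thinning reduction: it simulates an unlimited-supply instance with $k/4$ agents by, in each of the $n$ rounds, flipping a biased coin and with probability $k/(2n)$ routing the posted price to the next simulated agent, and otherwise recording ``no sale.'' The induced demand has survival rate $\tfrac{k}{2n}S(p)$, so the $k$-item cap is essentially never binding, and the strategy's $o(k^{2/3})$ regret on the $(k,n)$ instance transfers to $o(k^{2/3})$ regret on the $k/4$-agent unlimited-supply instance, contradicting~\cite{KleinbergL03}. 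The same construction handles part~(b).

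In short, your argument is correct but establishes the weaker of two natural readings; the paper's reduction additionally rules out strategies that are only claimed to succeed along some particular sequence of $(k,n)$ pairs with $k/n\to 0$.
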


\OMIT{
\footnote{In~\cite{KleinbergL03}, the construction that proves Theorem~\ref{thm:LB}(a) for the unlimited supply case is contained in the proof of a theorem on adversarially chosen valuations (but the construction itself is on IID valuations).}
}

Our proof is a black-box reduction to the unlimited supply case ($k=n$). The unlimited supply case of Theorem~\ref{thm:LB} is proved in~\cite{KleinbergL03} (see Footnote~\ref{fn:KL03-LB} on page~\pageref{fn:KL03-LB}).

\begin{proof}
Suppose that some pricing strategy $\A$ violates part (a). Then there is a sequence $\{k_i, n_i\}_{i\in \N}$, where $k_i\leq n_i$ and $\{k_i\}_{i\in \N}$  is strictly increasing, such that $\A$ achieves regret $o(k^{2/3})$ for all problem instances with $n_i$ agents and $k_i$ items, for each $i\in \N$. To obtain a contradiction, let us use $\A$ to solve the unlimited supply problem with regret $o(n^{2/3})$. Specifically, we will solve problem instances with $k_i/4$ agents, for each $i$.

Fix $i\in \N$ and let $k=k_i$ and $n=n_i$. Consider a problem instance $\I$ with unlimited supply and $k/4$ agents and sales rate $S(\cdot)$. Let $\I'$ be an artificial problem instance with unlimited supply and $n$ agents, so that the first $k/4$ agents in $\I'$ correspond to $\I$. Form an artificial problem instance $\J$ with $k$ items and $n$ agents as follows: in each round, $\A$ outputs a price, then with probability $k/2n$ this price is offered to the next agent in $\I'$, and with the remaining probability there is no interaction with agents in $\I'$ and no sale. Since the demand distribution for $\J$ is a mixture of the ``no sale" event which happens with probability $1-\tfrac{k}{2n}$ and the original demand distribution for $\I$, the sales rate for $J$ is given by
    $S_\J(p) = \tfrac{k}{2n} S(p) $.

Running $\A$ on problem instance $\J$ induces a pricing strategy $\A'$ on the original problem instance $\I$.\footnote{If $\A$ stops before it iterates through all agents in $\I$, the remaining agents in $\I$ are offered a price of $\infty$.} In the rest of the proof we show that
$\A'$ achieves regret $o(k^{2/3})$ on $\I$.

\newcommand{\mE}{\mathcal{E}}

Let $\Payoff_\J(\A)$ and $\RPayoff_\J(\A)$ be, respectively, the expected revenue and the realized revenue of $\A$ on problem instance $\J$. Let $r = \argmax_p p S(p)$ be the Myerson reserve price, and let $\A_r$ be the fixed-price strategy with price $r$. By our assumption, we have that
    $\Payoff_\J(\A) \geq \Payoff_\J(\A_r) - o(k^{2/3})$.
We need to deduce that
    $\Payoff_\I(\A') \geq \Payoff_\I(\A_r) - o(k^{2/3})$.

Let $N$ be the number of rounds in $\J$ in which $\A$ interacts with the agents in $\I'$. With high probability $\tfrac{k}{4}<N<k$. Let us condition on $N$ and the event $\mE_N \triangleq \{k/4<N<k\}$:
\begin{align*}
\E[\, \RPayoff_\J(\A_r)\,|\, N,\mE_N \,] &= NrS(r) \\
\E[\, \RPayoff_\J(\A) - \RPayoff_\I(\A') \,|\, N,\mE_N \,] &\leq
    (N-\tfrac{k}{4})\, rS(r).
\end{align*}
Since $\E[N] = \tfrac{k}{2}$, it follows that
\begin{align*}
\Payoff_\I(\A')
    &\geq \Payoff_\J(\A) - \tfrac{k}{4}\, r S(r) - o(1) \\
    &\geq \Payoff_\J(\A_r) - \tfrac{k}{4}\, r S(r) - o(k^{2/3}) \\
    &= \tfrac{k}{4}\, r S(r) - o(k^{2/3}) \\
    &= \Payoff_\I(\A_r) - o(k^{2/3}),
\end{align*}
as required. The reduction for part (b) proceeds similarly.
\end{proof} 
\section{Selling very few items: proof of Theorem~\ref{thm:descending-intro}}
\label{sec:descending}

\newcommand{\freps}{\frac{1}{\eps}}
\newcommand{\tfreps}{\tfrac{1}{\eps}}

In this section we target a case when very few items are available for sale (roughly, $k<O(\log^2 n)$), so that the bound in Theorem~\ref{thm:main} becomes trivial. We provide a different pricing strategy  whose regret does not depend on $n$, under the mild assumption of monotone hazard rate.

We rely on the characterization in Claim~\ref{cl:benchmark-nu}: we look for the price
    $p^* = \max(\ReservePrice,\, S^{-1}(\fkn))$,
where $\ReservePrice = \argmax_p p\, S(p)$ is the Myerson reserve price. The pricing strategy proceeds as follows (see Mechanism~\ref{alg:pp} on page~\pageref{alg:pp}). It considers prices
    $p_\ell = (1-\delta)^\ell$, $\ell\in \N$
sequentially in the descending order. For each $\ell$, it offers the price $p_\ell$ to a fixed number of agents. The loop stops once the pricing strategy detects that, essentially, the ``best" $p_\ell$ has been reached: either $S(p_\ell)$ is close to $\fkn$, or we are near a maximum of $p\, S(p)$. Parameters are chosen so as to minimize regret.

\floatname{algorithm}{Mechanism}

\begin{algorithm}[h]
\caption{Descending prices}
\label{alg:pp}
\begin{algorithmic}[1]
\PARAMETER Approximation parameters $\delta,\epsilon \in [0,1]$
\STATE Let $\alpha=\left(\frac{k}{n}\right)^{1-\delta}$,
    $\gamma= \min(\alpha, 1/e)$.
\STATE $\ell\ot 0,\; \ell_{\max}\ot 0,\; R_{\max}\ot 0$.
\REPEAT
\STATE $\ell \ot \ell+1,\;  p_\ell \ot  (1+\delta)^{-\ell} $
\STATE Offer price $p_\ell$ to
    $m=\ceil{\delta\, \frac{n}{\log_{1+\delta} (1/\eps)}}$ agents.
\STATE Let $S_\ell$ be the fraction of them who accept.
\STATE Let $R_\ell=p_\ell S_\ell$ be the average per agent revenue.
\STATE  If $S_\ell \geq  (1+\delta)^{-1} \gamma$ and $R_{\ell} \geq R_{\text{max}}$, \STATE \hspace{3mm} then $R_{\max}\ot R_\ell,\; \ell_{\max}\ot \ell$
\UNTIL{$p_\ell \leq \epsilon$ or $S_\ell \geq  (1+\delta) \alpha$ or  $R_\ell \leq (1+\delta)^{-2} R_{\text{max}}$}
\STATE Offer price $\tilde{p}=p_{\ell}$ so long as unsold items remain.
 \end{algorithmic}
\end{algorithm}

\begin{theorem}\label{thm:descending}
For some parameters $\epsilon$ and $\delta$,
Mechanism~\ref{alg:pp} achieves regret
    $O\left({k^{3/4}\poly\log(k)} \right) $
with respect to the offline benchmark, for any demand distribution that satisfies the monotone hazard rate condition.
\end{theorem}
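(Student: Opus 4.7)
The plan is to compare the mechanism's revenue to $\nu(p^*) = p^* \min(k,\, n S(p^*))$ with $p^* = \max(\ReservePrice,\, S^{-1}(\fkn))$. By Claim~\ref{cl:benchmark-nu} and Lemma~\ref{lm:benchmark} (MHR implies regularity), the offline benchmark exceeds $\nu(p^*)$ by only $O(\sqrt{k\log k})$, so it suffices to show that Mechanism~\ref{alg:pp} has regret $O(k^{3/4}\poly\log k)$ against $\nu(p^*)$. The first step of the analysis is to set up a high-probability concentration event: using Theorem~\ref{lm:my-chernoff} and a union bound over the at most $L_{\max}=O(\log(1/\epsilon)/\delta)$ loop iterations, with probability $1 - 1/\poly(k)$ every empirical survival rate satisfies $|S_\ell - S(p_\ell)| = O(\sqrt{S(p_\ell)\log k/m}\, + \log k/m)$; I would condition on this event, and the low-probability failure mode contributes $o(1)$ to the expected regret.

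The core of the proof is the stopping-time analysis, split according to which term defines $p^*$. In the supply-constrained regime $p^* = S^{-1}(\fkn) > \ReservePrice$, the condition $S_\ell \geq (1+\delta)\alpha$ (with $\alpha = (\fkn)^{1-\delta}$) is the binding stopping criterion: as $p_\ell$ decreases from $1$ it crosses $p^*$ before $\ReservePrice$, and by MHR the condition fires as soon as $p_\ell$ drops slightly below $S^{-1}((1+\delta)\alpha)$, which is within a $(1+\delta)^{O(1)}$ factor of $p^*$; the revenue-drop criterion cannot fire first because $R_{\max}$ is still growing on this side of $\ReservePrice$. In the unconstrained regime $p^* = \ReservePrice$ (relevant when $\fkn < 1/e$; cf.\ Claim~\ref{lem:mhr_survival}), concavity of $R(\cdot)$ guarantees that once the mechanism has moved a controlled distance past $\ReservePrice$, the revenue-drop criterion $R_\ell \leq (1+\delta)^{-2} R_{\max}$ fires, landing $\tilde p$ within a small multiplicative factor of $p^*$.

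Given that $\tilde p$ is geometrically close to $p^*$ in either case, I would decompose the regret into three pieces: (i) the exploration cost, bounded by summing the per-agent shortfall $\nu(p^*)/n - R(p_\ell)$ over the at most $m L_{\max} \le \delta n$ agents used in the loop; MHR plus the fact that the visited prices lie in a bounded geometric range around $p^*$ keeps this total at $O(\delta k \cdot \poly\log k)$; (ii) the approximation error of using $\tilde p$ instead of $p^*$ during exploitation, which is $O(\delta k)$ because the supply constraint still forces essentially $k$ items to be sold in the exploitation phase; and (iii) concentration residuals of total order $\sqrt{k/m}\cdot \poly\log k$. Choosing $\delta = \Theta(k^{-1/4})$, $m \approx \delta n / L_{\max}$ (so $m L_{\max} = \delta n$), and $\epsilon = 1/\poly(k)$ balances the three contributions at $O(k^{3/4}\poly\log k)$, which is $n$-independent provided $n$ is at least $\tilde\Omega(k \log^2 k)$ so that $m$ is large enough to support concentration.

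The main obstacle is the unconstrained case of the stopping analysis: near its maximum, $R(\cdot)$ is flat, so sampling fluctuations of size $1/\sqrt m$ can easily swamp the signal of size $\delta^2 \cdot |R''(\ReservePrice)|$ that we wish to detect. Showing that the revenue-drop criterion fires shortly past $\ReservePrice$ but not prematurely (which would leave $\tilde p$ stranded above $\ReservePrice$ and ruin the exploitation phase) requires both the sharper confidence radius of Theorem~\ref{lm:my-chernoff} and an MHR-based lower bound on $|R''|$ near $\ReservePrice$ in order to separate signal from noise.
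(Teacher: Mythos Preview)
Your outline is reasonable and you correctly identify the delicate point, but the resolution you propose for the unconstrained case is not the right one, and your item~(ii) reasoning is incorrect in that case.

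\medskip

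\textbf{The gap.} When $p^* = \ReservePrice$, you aim to show that $\tilde p$ lands within a small multiplicative factor of $\ReservePrice$, and you propose to do this via an MHR lower bound on $|R''|$ near the peak. This is the wrong target. The revenue-drop criterion $R_\ell \le (1+\delta)^{-2} R_{\max}$ is a test on \emph{revenue}, not on \emph{price}; when $R$ is flat near its peak, the criterion may fire at a $\tilde p$ that is far from $\ReservePrice$ yet has $R(\tilde p)$ very close to $R(\ReservePrice)$. Trying to invert this via curvature is unnecessary and would at best cost you a $\sqrt{\delta}$ where you want $\delta$. Relatedly, your statement in~(ii) that ``the supply constraint still forces essentially $k$ items to be sold in the exploitation phase'' is false in the unconstrained regime: when $p^* = \ReservePrice$ we have $nS(\ReservePrice) < k$, so the benchmark itself sells fewer than $k$ items.

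\medskip

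\textbf{How the paper proceeds.} The paper never claims $\tilde p \approx p^*$ in the unconstrained case. Instead it proves two separate facts about $\tilde p$ (conditioned on the exploration being $\delta$-approximate):
\begin{itemize}
\item $R(\tilde p) \ge (1-O(\delta))\, R(p^*)$ (Claim~\ref{lem:approx_rev}), which follows directly from the revenue-drop stopping rule with no curvature argument;
\item $\tilde p \ge (1-O(\delta))\, S^{-1}(\tfrac{k}{n})$ (Claim~\ref{lem:approx_price}), which follows from the survival-rate stopping rule and is needed only to keep the supply constraint from binding too hard during exploitation.
\end{itemize}
These two facts are combined not through an additive decomposition but through a \emph{multiplicative} comparison lemma (Lemma~\ref{lm:auction_smoothness}), which bounds $\Payoff(\A^{n'}_k(\tilde p))$ below by a $(1-O(\delta))$ fraction of $\Payoff(\A^n_n(p^*))$, hence of the offline benchmark via Claim~\ref{cl:compare_myerson}. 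The paper explicitly remarks that it ``appears difficult to prove the additive version directly,'' and the difficulty is exactly the one you flagged. The exploration cost is handled by the simple observation that every item sold during exploration is sold at a price $\ge \tilde p$, so the total revenue is at least $\Payoff(\A^{n'}_k(\tilde p))$; no per-agent shortfall bookkeeping is needed.

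\medskip

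\textbf{Parameters.} The paper takes $\epsilon = k^{-1/4}$ (not $1/\poly(k)$): the case $p^* < \epsilon$ contributes at most $\epsilon k = k^{3/4}$ to regret, matching the $O(\delta k)$ loss from the multiplicative bound with $\delta \approx (k^{-1}\log k)^{1/4}$.
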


The rest of this section is devoted to proving Theorem~\ref{thm:descending} for
parameters $\epsilon=k^{-1/4}$ and $\delta = (\frac{1}{k}\, \log k)^{1/4}$. We will assume that the demand distribution is MHR, without further notice.  We derive Theorem~\ref{thm:descending} from the following multiplicative bound; it appears difficult to prove the additive version directly.


\OMIT{ 
\begin{align*}
(1- 2 (\log_{1+\delta} \frac{1}{\epsilon}) e^{-\delta^2 \gamma m /4}) (1-15\delta)
\end{align*}
} 

\begin{lemma}\label{lm:mhr_ub}
Assume $p^* \geq \eps$. Set
    $\delta = \sqrt[4]{\tfrac{1}{k}\, \log k \log
        \tfreps \log\log \tfreps}$.
Then the expected revenue of Mechanism~\ref{alg:pp}
is at least $1-O(\delta)$ fraction of the offline benchmark.
\end{lemma}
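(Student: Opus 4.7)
\medskip

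The plan is to condition on a high-probability ``good concentration'' event, analyze the stopping location of the descending-price loop, and then bound the revenue loss both from the gap between $\tilde p$ and $p^*$ and from the fact that the chosen $\tilde p$ might undersell or oversell relative to inventory. Let $L = \log_{1+\delta}(1/\eps)$ denote the maximum number of phases. By Chernoff bounds applied to the $m$ independent Bernoulli trials in each phase, with probability $1 - L\cdot e^{-\Omega(\delta^2 \gamma m)}$ every empirical survival rate $S_\ell$ corresponding to a price $p_\ell$ with $S(p_\ell) \geq \gamma$ satisfies $S_\ell\in[(1+\delta)^{-1} S(p_\ell),\, (1+\delta) S(p_\ell)]$; prices with smaller $S(p_\ell)$ produce $S_\ell$ that is at most $(1+\delta)\gamma$ with the same probability. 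Plugging in $m = \lceil \delta n / L\rceil$, $\gamma \geq \alpha \approx k/n$, and the stated $\delta$, this good event occurs with probability $1 - O(\delta)$; in the rest of the argument we proceed deterministically on this event.

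Next I would characterize the stopping index $\ell$ via a case split driven by Claim~\ref{cl:benchmark-nu}, which identifies $p^* = \max(\ReservePrice, S^{-1}(k/n))$ as the (essentially) optimal fixed price. In \emph{Case A}, $p^* = S^{-1}(k/n)$ (the supply constraint is binding), so $S(p^*) = k/n \leq 1/e$ and $\gamma = \alpha$. The survival-rate stopping rule $S_\ell \geq (1+\delta)\alpha$ fires as soon as $p_\ell$ crosses from just above $p^*$ to just below it, so $\tilde p \in [(1+\delta)^{-O(1)} p^*,\, p^*]$ and $S(\tilde p) \geq (1+\delta)^{-O(1)} (k/n)$; hence all $k$ items sell with $n \cdot S(\tilde p) \gtrsim k$, yielding revenue $k\tilde p \geq (1 - O(\delta))\,\nu(p^*)$. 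In \emph{Case B}, $p^* = \ReservePrice$ and $S(p^*) \geq 1/e$ by Claim~\ref{lem:mhr_survival} (MHR), so $\gamma = 1/e$ and the empirical revenue estimates $R_\ell$ are $(1+\delta)$-accurate at every phase above $p^*$. Here the $R_\ell \leq (1+\delta)^{-2} R_{\max}$ trigger fires just after the algorithm descends past $\ReservePrice$ (by the unimodality of $p\,S(p)$ on either side of $\ReservePrice$), so $p_{\ell_{\max}}$ is within a $(1+\delta)^{O(1)}$ factor of $\ReservePrice$ and the chosen $\tilde p$ has $R(\tilde p) \geq (1 - O(\delta))\, R(p^*)$; since $nS(\tilde p) \gg k$, the supply constraint is non-binding at $\tilde p$, and revenue is $k\tilde p$ with $\tilde p/p^* = 1 - O(\delta)$.

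The main obstacle is uniform multiplicative control of the stopping behavior: I need to rule out the loop terminating prematurely at some $p_\ell \gg p^*$ (e.g.\ by a spurious $R_\ell$ dip) and to rule out it overshooting deep past $p^*$ (so that $\tilde p$ is much smaller than $p^*$ and revenue is lost). The first is handled by the good event: on it, $R_\ell$ is within a $(1+\delta)^{\pm 1}$ factor of $R(p_\ell)$ whenever $S(p_\ell) \geq \gamma$, and MHR gives that $R(\cdot)$ is non-decreasing on $[0, \ReservePrice]$, so $R_\ell$ cannot drop by a factor $(1+\delta)^{-2}$ above $p^*$; the $S_\ell$-based rule is ruled out similarly using monotonicity of $S$. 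The second is handled because each rule fires within one additional $(1+\delta)$-step past the ``right'' crossover, giving only a $(1+\delta)$ further multiplicative loss. Combining the two cases yields $\E[\Payoff] \geq (1-O(\delta))\,\nu(p^*)$, and then the bound $\nu(p^*) \geq (1-O(\sqrt{k}/k)) \cdot (\text{offline benchmark})$ from Claim~\ref{cl:benchmark-nu}, which is absorbed into the $O(\delta)$ factor because $\delta \gtrsim k^{-1/4}$.

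Finally I would verify the quantitative choice of $\delta$: the union bound over $L = O(\delta^{-1}\log(1/\eps))$ phases requires $L e^{-\Omega(\delta^2 \gamma m)} = O(\delta)$; substituting $m = \delta n/L$ and $\gamma \geq k/n$ reduces this to $\delta^3 k \gtrsim \log L + \log(1/\delta)$, which is satisfied by $\delta = (k^{-1}\log k\,\log(1/\eps)\,\log\log(1/\eps))^{1/4}$. The remaining loss terms (e.g.\ the additive $\sqrt{k\log k}$ slack in Claim~\ref{cl:benchmark-nu} and the $(1+\delta)^{O(1)}$ factors) all contribute at most $O(\delta)$ multiplicative loss, completing the proof.
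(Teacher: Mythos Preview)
Your high-level structure (condition on a concentration event, then a case split driven by whether $p^*=S^{-1}(k/n)$ or $p^*=\ReservePrice$) matches the paper's, and your Case~A is essentially right once you invoke the MHR log-concavity fact (Claim~\ref{lem:logconcave}) to translate $S^{-1}(\alpha)$ back to $(1-O(\delta))\,S^{-1}(k/n)$.

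The genuine gap is in Case~B. You assert that ``the $R_\ell \leq (1+\delta)^{-2} R_{\max}$ trigger fires just after the algorithm descends past $\ReservePrice$'' and hence $\tilde p/p^* = 1-O(\delta)$. This is not true: unimodality of $R(\cdot)$ only tells you that $R$ is nonincreasing below $\ReservePrice$, not how fast it falls. If the revenue curve is nearly flat just below $\ReservePrice$, the revenue-based stopping rule can fail to fire for many phases, so $\tilde p$ can be much smaller than $\ReservePrice$ even though $R(\tilde p)\geq(1-O(\delta))R(p^*)$. Your subsequent claims ``$nS(\tilde p)\gg k$'' and ``revenue is $k\tilde p$'' then break: in Case~B we have $nS(p^*)\leq k$, so neither claim is justified, and $k\tilde p$ with a small $\tilde p$ does not dominate $nR(p^*)$.

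The paper closes this gap differently. It proves (Claim~\ref{lem:approx_price}) that the \emph{survival-rate} stopping rule forces $\tilde p \geq (1-7\delta)\,S^{-1}(k/n)$ regardless of which rule actually fires; this lower bound holds in Case~B as well. It then combines this with $R(\tilde p)\geq(1-7\delta)R(p^*)$ (Claim~\ref{lem:approx_rev}) and applies Lemma~\ref{lm:auction_smoothness} with $p=\max(S^{-1}(k/n),\tilde p)$ to pass from limited to unlimited supply multiplicatively. Your route through $\nu(p^*)$ and Claim~\ref{cl:benchmark-nu} could in principle be made to work, but you would still need the lower bound on $\tilde p$ from Claim~\ref{lem:approx_price}; without it the argument does not close. (Also, Claim~\ref{cl:benchmark-nu} compares $\nu(p)$ to the fixed-price benchmark, not to the offline benchmark; for the latter you need Claim~\ref{cl:compare_myerson}.)
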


\begin{proof}[Proof of Theorem~\ref{thm:descending}]
If $p^*\leq \eps$ then the expected loss in revenue is at most $\epsilon k$. Else by Lemma~\ref{lm:mhr_ub} the expected loss in revenue is at most $O(\delta k)$, where $\delta$ is from Lemma~\ref{lm:mhr_ub}. In both cases  the additive regret compared to the offline benchmark is at most $\max(\eps k, O(k\delta))$. Finally, pick $\eps=k^{-1/4}$.
\end{proof}

\subsection{Proof of Lemma~\ref{lm:mhr_ub}}

We use a multiplicative bound in which fixed-price strategies for limited supply are compared to those for unlimited supply (which in turn can be compared to the offline benchmark using Claim~\ref{cl:compare_myerson}).

\begin{lemma}\label{lm:auction_smoothness}
  Assume the demand distribution is regular. Let $p'\leq p$ be two prices such that $p\geq S^{-1}(k/n)$. Let $n' \leq n$. Then
$\Payoff(\A^{n'}_k(p')) \geq \tfrac{n'}{n} \tfrac{p'}{p} \left(1-\tfrac{1}{\sqrt{2 \pi k}}\right) \Payoff(\A^n_n(p))$.
\end{lemma}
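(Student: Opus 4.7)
My approach decouples the statement into a ``price-coupling'' step and a Binomial concentration step.

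\emph{Step 1: reduction to $p'=p$.} I first show that
\[
\Payoff(\A^{n'}_k(p')) \;\geq\; \tfrac{p'}{p}\, \Payoff(\A^{n'}_k(p))
\]
by coupling the two strategies over the common uniform samples that generate the valuations. Since $p'\leq p$ implies $S(p')\geq S(p)$, the uncapped number of sales at $p'$ stochastically dominates that at $p$; the cap $k$ preserves the ordering, giving $\min(k,Y_{p'})\geq\min(k,Y_p)$ a.s. The factor $p'/p$ reflects that each sale now earns $p'$ rather than $p$.

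\emph{Step 2: reduction to a Binomial inequality.} Using $\Payoff(\A^n_n(p))=npS(p)$ (unlimited supply) and cancelling the price ratio, the claim becomes
\[
\E[\min(k,Y)] \;\geq\; n'S(p)\Bigl(1-\tfrac{1}{\sqrt{2\pi k}}\Bigr),
\]
where $Y\sim\mathrm{Bin}(n',S(p))$ has mean $\mu:=n'S(p)$. This is equivalent to the upper-tail estimate $\E[(Y-k)^+]\leq\mu/\sqrt{2\pi k}$. The hypothesis $p\geq S^{-1}(k/n)$ combined with $n'\leq n$ gives $\mu\leq k$, so the threshold $k$ sits at or above the mean of $Y$.

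\emph{Step 3: Binomial peak estimate.} For the tail bound, I plan to use the combinatorial identity (obtained from $j\binom{n'}{j}=n'\binom{n'-1}{j-1}$ and conditioning $Y=Y'+B$ with $B\sim\mathrm{Bernoulli}(q)$ independent of $Y'$)
\[
\E[(Y-k)^+] \;=\; q(n'-k)\,\Pr[Y'=k] \;+\; (\mu-k)\,\Pr[Y'\geq k+1],
\]
where $q=S(p)$ and $Y'\sim\mathrm{Bin}(n'-1,q)$. Since $\mu\leq k$, the second summand is non-positive, so $\E[(Y-k)^+]\leq q(n'-k)\Pr[Y'=k]$. The $\sqrt{2\pi}$ constant is then extracted from the Stirling-based peak bound $\Pr[Y'=k]\leq 1/\sqrt{2\pi k(1-k/(n'-1))}$, obtained by maximizing $\binom{n'-1}{k}q^k(1-q)^{n'-1-k}$ over $q$ at $q^\star=k/(n'-1)$. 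Substituting $q=\mu/n'\leq k/n'$, the $(n'-k)$ factor cancels against $q$ to yield the desired $\mu/\sqrt{2\pi k}$ bound.

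The main obstacle is Step 3: the exact $\sqrt{2\pi}$ constant cannot come from generic concentration inequalities (Chernoff or Hoeffding only give $O(1/\sqrt{k})$ without this factor). The combinatorial identity is precisely what collapses the entire expected excess to a single point probability $\Pr[Y'=k]$, which is sharp up to $\sqrt{2\pi}$ via Stirling. A secondary difficulty is handling edge cases (e.g., $k$ close to $n'$), where the Stirling peak bound becomes vacuous; in these cases the trivial bound $\Pr[Y'=k]\leq 1$ suffices because the prefactor $q(n'-k)$ is also small. The algebraic simplification at the end is delicate: it hinges on $q\leq k/n'$, which converts what would otherwise be a $\sqrt{k/2\pi}$ error into the tighter $\mu/\sqrt{2\pi k}$ needed.
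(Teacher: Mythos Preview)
Your proof is correct and takes a genuinely different route from the paper. Both proofs share Step~1 (price coupling: more sales at the lower price, then scale by $p'/p$), but diverge at the core inequality
\[
\E[\min(k,Y)] \;\geq\; \Bigl(1-\tfrac{1}{\sqrt{2\pi k}}\Bigr)\mu,\qquad Y\sim\mathrm{Bin}(n',q),\ \mu=n'q\leq k.
\]
The paper does not prove this directly. Instead it (i) invokes the \emph{correlation gap} of the $k$-uniform matroid rank function from Yan~\cite{Yan11} as a black box to get $\Payoff(\A^n_k(p))\geq(1-1/\sqrt{2\pi k})\,\Payoff(\A^n_n(p))$, and then (ii) uses a separate ``diminishing marginal returns in $n$'' argument to pass from $n$ to $n'$ agents and pick up the $n'/n$ factor. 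Your approach is more elementary and self-contained: the identity
\[
\E[(Y-k)^+]=q(n'-k)\Pr[Y'=k]+(\mu-k)\Pr[Y'\geq k+1]
\]
(which is indeed correct---it follows from $\mu\Pr[Y'\geq k]-k\Pr[Y\geq k+1]$ after writing $\Pr[Y\geq k+1]=q\Pr[Y'\geq k]+(1-q)\Pr[Y'\geq k+1]$) plus the Stirling peak bound yields the same inequality directly, and the $n'/n$ factor is absorbed automatically since you work with $\mu=n'q$ throughout. This also makes clear that regularity plays no role in the inequality itself; it is only needed to make $S^{-1}(k/n)$ meaningful.

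One small clarification on your Step~3 algebra: after dropping the non-positive term, the goal $\E[(Y-k)^+]\leq \mu/\sqrt{2\pi k}$ becomes $(n'-k)\Pr[Y'=k]\leq n'/\sqrt{2\pi k}$---the $q$'s cancel from both sides rather than ``$(n'-k)$ cancelling against $q$.'' For $k\leq n'-2$ the Stirling peak bound gives $(n'-k)\sqrt{(n'-1)/(n'-1-k)}\leq n'$, which one checks is equivalent to $(n'-2)^2\geq 0$ in the extremal case; for $k=n'-1$ the Stirling bound is vacuous but $\Pr[Y'=k]=q^{n'-1}\leq((n'-1)/n')^{n'-1}$ via $q\leq k/n'$ handles it, exactly as you anticipated in your edge-case remark.
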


The proof uses a technique from~\cite{Yan11}, see Appendix~\ref{app:benchmarks}. Also, we take advantage of several properties of MHR distributions, detailed in Appendix~\ref{app:distributions}.

We say the exploration phase is \emph{$\delta$-approximate} if
\begin{align*}
S(p_\ell) \geq \gamma
    \;\Rightarrow\;
\tfrac{1}{1+\delta} \leq S_\ell/S(p_\ell) \leq 1+\delta.
\end{align*}

\begin{claim}\label{lem:probsucc}
The exploration phase is $\delta$-approximate with probability at least
    $1- 2\, (\log_{1+\delta} \freps)\, e^{-\delta^2 \gamma m /4}$.
\end{claim}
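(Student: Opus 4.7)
The plan is to combine a per-iteration multiplicative Chernoff bound with a union bound over iterations of the outer loop.

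First, I would bound the number of iterations. Since $p_\ell = (1+\delta)^{-\ell}$ and the \texttt{REPEAT} loop terminates as soon as $p_\ell \leq \eps$, there are at most $L := \log_{1+\delta}(1/\eps)$ iterations.

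Next, fix an iteration $\ell$. Conditional on the algorithm reaching this iteration, the random variable $m \cdot S_\ell$ is a sum of $m$ i.i.d.\ Bernoulli trials with success probability $S(p_\ell)$: one per agent offered price $p_\ell$, since the valuations are independent samples from $F$ and the offered prices during this round do not depend on these specific valuations. Applying Theorem~\ref{thm:chernoff}(a) with deviation parameter $\delta' = \delta/(1+\delta)$ gives
\[
\Pr\!\left[\, |S_\ell - S(p_\ell)| > \delta'\, S(p_\ell) \,\right] \;\leq\; 2\, e^{-S(p_\ell)\, m\, (\delta')^2 / 3}.
\]
For $\delta\in(0,1]$ we have $(\delta')^2/3 \geq \delta^2/4$, so this is at most $2\,e^{-S(p_\ell)\, m\,\delta^2/4}$. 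On the complementary event, $S_\ell/S(p_\ell) \in [1-\delta',\,1+\delta']\subseteq [1/(1+\delta),\,1+\delta]$, which is exactly the $\delta$-approximation condition at iteration $\ell$.

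The $\delta$-approximation condition is vacuous at any iteration $\ell$ for which $S(p_\ell)<\gamma$, so I only need to verify it at iterations with $S(p_\ell)\geq\gamma$. For such $\ell$, substituting $S(p_\ell)\geq\gamma$ into the exponent bounds the per-iteration failure probability by $2\,e^{-\delta^2 \gamma m / 4}$. A union bound over the at most $L=\log_{1+\delta}(1/\eps)$ iterations then yields total failure probability at most $2\,(\log_{1+\delta}(1/\eps))\, e^{-\delta^2 \gamma m/4}$, matching the claim.

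The only real (and very minor) obstacle is aligning constants: the Chernoff bound is stated as a symmetric deviation of $(1\pm\delta')S(p_\ell)$, whereas the target is the ratio condition $S_\ell/S(p_\ell)\in[1/(1+\delta),\,1+\delta]$. This forces the use of $\delta'=\delta/(1+\delta)$ in the concentration inequality, with the resulting factor absorbed to produce the advertised $\delta^2\gamma m/4$ in the exponent. Everything else is bookkeeping.
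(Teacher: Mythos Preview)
Your approach is exactly the paper's: apply Chernoff to each $S_\ell$, note that only iterations with $S(p_\ell)\geq\gamma$ matter, and union-bound over the at most $\log_{1+\delta}(1/\eps)$ iterations. The paper's own proof is a one-line sketch to this effect.

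One arithmetic slip: the inequality $(\delta')^2/3 \geq \delta^2/4$ does \emph{not} hold for all $\delta\in(0,1]$. With $\delta'=\delta/(1+\delta)$ we have $(\delta')^2/3 = \delta^2/\bigl(3(1+\delta)^2\bigr)$, and this is at least $\delta^2/4$ only when $(1+\delta)^2\leq 4/3$, i.e.\ $\delta\leq 2/\sqrt{3}-1\approx 0.155$. For larger $\delta$ the stated exponent $\delta^2\gamma m/4$ is not what you get. This is harmless for the paper's application, where $\delta=\tilde{O}(k^{-1/4})$ is small, but if you want the claim literally for all $\delta\in(0,1]$ you should either restrict the range of $\delta$ or weaken the constant in the exponent (e.g.\ to $\delta^2\gamma m/12$). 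Everything else in your write-up is fine.
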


\begin{proof}
  This follows directly by applying Chernoff bounds (both the upper and lower tail form) to the event that some $S_\ell$ violates the condition, then applying the union bound over all choices of $\ell$.
\end{proof}

\begin{claim}\label{lem:approx_price}
When the exploration phase is $\delta$-approximate, we have  $(1-7\delta)S^{-1}\left(\frac{k}{n}\right)  \leq \tilde{p} \leq p^*$.
\end{claim}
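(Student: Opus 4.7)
The plan is to prove the two inequalities separately. Both rely on two facts: (a) the $\delta$-approximate property, which gives $S_\ell \in [(1+\delta)^{-1} S(p_\ell),\, (1+\delta)\, S(p_\ell)]$ whenever $S(p_\ell) \geq \gamma$; and (b) MHR-based structural properties from Appendix~\ref{app:distributions}, in particular that the revenue function $R(p) = p\,S(p)$ is unimodal with maximum at $\ReservePrice$, and a quantitative Lipschitz-like estimate on $S^{-1}$ that converts ``$S(p)$ is within a $(1+O(\delta))$ factor of $x$'' into ``$p$ is within a $(1+O(\delta))$ factor of $S^{-1}(x)$'' for $x \geq \gamma$.

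For the upper bound $\tilde{p} \leq p^*$, I will show the loop exits no later than the first $\ell^*$ with $p_{\ell^*} \leq p^*$. For every $\ell < \ell^*$ we have $p_\ell > p^* \geq \max(\ReservePrice,\, S^{-1}(k/n)) \geq \eps$, which rules out stopping condition~(1). Condition~(2) fails because $p_\ell > S^{-1}(k/n)$ yields $S(p_\ell) < k/n \leq \alpha$, so $\delta$-approximateness gives $S_\ell < (1+\delta)\alpha$. Condition~(3) fails because $p_\ell \geq \ReservePrice$ places us on the range where $R(p)$ is monotonically decreasing in $p$, so $R(p_\ell)$ grows along the descent, and after translating to the empirical $R_\ell$ via $\delta$-approximateness, $R_\ell$ stays above $(1+\delta)^{-2} R_{\max}$.

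For the lower bound $(1-7\delta)\, S^{-1}(k/n) \leq \tilde{p}$, I split on which stopping condition triggered at the exit round $\ell$. In all three cases round $\ell-1$ did not trigger, so in particular $S_{\ell-1} < (1+\delta)\alpha$ and $\delta$-approximateness gives $S(p_{\ell-1}) < (1+\delta)^2 \alpha$; hence $p_\ell = p_{\ell-1}/(1+\delta) \geq S^{-1}((1+\delta)^2 \alpha)/(1+\delta)$. The MHR estimate then yields $S^{-1}((1+\delta)^2 \alpha) \geq (1 - O(\delta))\, S^{-1}(k/n)$ using $\alpha = (k/n)^{1-\delta}$, which handles cases (1) and (2) directly (for case (1) we also use $p_\ell > \eps/(1+\delta)$, combined with $\eps \leq p^*$). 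Case (3) is the subtlest: the revenue-drop trigger implies some earlier $\ell_{\max}$ satisfies $R_{\ell_{\max}} \geq (1+\delta)^2 R_\ell$, and unimodality of $R$ forces $p_\ell$ to lie on the \emph{increasing} side of $R$, close to $\ReservePrice$; a further MHR estimate then gives $p_\ell \geq (1 - O(\delta))\,\ReservePrice \geq (1-7\delta)\, S^{-1}(k/n)$ (noting that if $\ReservePrice < S^{-1}(k/n)$ condition~(2) would have triggered first).

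The main obstacle is the quantitative MHR estimate for $S^{-1}$: translating ``$S(p_\ell)$ is within a $(1+O(\delta))$ factor of $k/n$'' into ``$p_\ell$ is within a $(1-O(\delta))$ factor of $S^{-1}(k/n)$''. The constant $7$ is the cumulative bookkeeping budget for the slack coming from the geometric price grid (one $(1+\delta)$), the two-sided $\delta$-approximate bound ($(1+\delta)^2$), and the change of base from $\alpha = (k/n)^{1-\delta}$ back to $k/n$; these multiplicative factors combine linearly in $\delta$ for small $\delta$ to yield the stated $(1-7\delta)$ bound.
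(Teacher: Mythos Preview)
Your approach to the lower bound differs from the paper's, and the difference exposes a real gap. The paper does \emph{not} do a case split on which stopping condition fired. Instead it argues by contradiction that the revenue-drop condition~(3) must fire before $p_\ell$ drops below $(1+\delta)^{-6}S^{-1}(\alpha)$: it identifies an earlier round $t$ with $(1+\delta)^{-1}S^{-1}(\alpha)\le p_t\le S^{-1}(\alpha)$, notes that $\delta$-approximateness at that round gives $R_t\ge(1+\delta)^{-2}\alpha\,S^{-1}(\alpha)$ (so $R_{\max}\ge R_t$), and then combines the \emph{failure} of condition~(2) ($S_\ell<(1+\delta)\alpha$) with the assumed smallness of $p_\ell$ to get $R_\ell=p_\ell S_\ell<(1+\delta)^{-4}\alpha\,S^{-1}(\alpha)$, forcing condition~(3) to trigger. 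The output is a bound $p_\ell\gtrsim(1+\delta)^{-5}S^{-1}(\alpha)$ obtained \emph{without} inverting any $(1+\delta)^{O(1)}$ perturbation of a survival rate; the only MHR step is the clean one, $S^{-1}(\alpha)=S^{-1}((k/n)^{1-\delta})\ge(1-\delta)S^{-1}(k/n)$, which follows exactly from Claim~\ref{lem:logconcave}.

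Your route instead uses only that condition~(2) did not fire at round $\ell-1$, giving $S(p_{\ell-1})<(1+\delta)^2\alpha$ and hence $p_\ell>S^{-1}((1+\delta)^2\alpha)/(1+\delta)$. The trouble is the step ``MHR yields $S^{-1}((1+\delta)^2\alpha)\ge(1-O(\delta))S^{-1}(k/n)$''. Claim~\ref{lem:logconcave} gives
\[
\frac{S^{-1}((1+\delta)^2\alpha)}{S^{-1}(k/n)}\;\ge\;(1-\delta)\;-\;\frac{2\log(1+\delta)}{\log(n/k)},
\]
and the second term is \emph{not} $O(\delta)$ uniformly: when $\log(n/k)$ is of constant order (or smaller) it swamps the budget. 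Your closing assertion that ``these multiplicative factors combine linearly in $\delta$'' fails precisely for the $(1+\delta)^2$ perturbation of the survival rate; the log-concavity lemma handles the exponent change $k/n\mapsto(k/n)^{1-\delta}$ cleanly, but not a multiplicative shift of the argument. This is exactly why the paper routes through condition~(3): dividing a lower bound on $R_\ell$ by an upper bound on $S_\ell$ yields a lower bound on $p_\ell$ in terms of $S^{-1}(\alpha)$ with only $(1+\delta)^{O(1)}$ losses, no inverse-log factor. Your separate sketch for case~(3) does not rescue this: unimodality of $R$ only tells you which side of $\ReservePrice$ you are on, not that $p_\ell$ is within $(1-O(\delta))\ReservePrice$; there is no MHR estimate that turns a $(1+\delta)^{O(1)}$ revenue ratio into a $(1-O(\delta))$ price ratio near the peak. (Minor aside: in the upper-bound paragraph you wrote ``exits no later than'' where you meant ``no earlier than''.)
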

\begin{proof}
It is easy to see that none of the stopping conditions of the exploration phase can be triggered until the price goes below $p^*$. Therefore $\tilde{p} \leq p^*$. For the other inequality observe that, by Claim~\ref{lem:logconcave} it holds that
    $S^{-1}(\alpha) \geq (1-\delta)\,S^{-1}(\fkn)$.
Therefore it suffices to show that $\tilde{p} \geq (1-6\delta)\, S^{-1}(\alpha)$.

Assume for a contradiction that the stopping conditions are not triggered in some phase $\ell$ such that
    $p_{\ell+1} < (1+\delta)^{-6}\, S^{-1} (\alpha)$.
Therefore, at round $\ell$ we have
\begin{align}\label{eq:pl}
  p_\ell &= (1+\delta) p_{\ell+1} <  (1+\delta)^{-5}\, S^{-1}(\alpha)
\end{align}
Examining the stopping conditions, and using our assumption above,
we deduce that:
\begin{align}
S_\ell &< (1+\delta)\alpha   \label{eq:1} \\
R_{\text{max}}/ R_\ell &< (1+\delta)^{2} ,   \label{eq:2}
\end{align}
Combining \eqref{eq:pl} and \eqref{eq:1}, we get
\begin{align}\label{eq:rl}
R_\ell = p_\ell S_\ell < (1+\delta)^{-4} \alpha S^{-1}(\alpha)
\end{align}
Note that, since we chose round $\ell$ such that $p_\ell \ll S^{-1}(\alpha)$, the pricing strategy already encountered some round $t < \ell$ such that $p_t$ is``close'' to $S^{-1}(\alpha)$ -- in particular
\begin{align}\label{eq:ptbound}
(1+\delta)^{-1} S^{-1}(\alpha) \leq p_t \leq S^{-1}(\alpha)
\end{align}
and therefore also
$S(p_t) \geq \alpha$.
Since we assume the exploration phase is $\delta$-approximate, the estimated sales rate at round $t$ satisfies
$S_t \geq (1+\delta)^{-1}S(p_t) \geq (1+\delta)^{-1} \alpha$.
Combining this
with~\eqref{eq:ptbound}, we get that the estimated revenue $R_t$ at round $t$ satisfies
\begin{align}\label{eq:rtbound}
R_t = p_t S_t \geq  (1+\delta)^{-2} \alpha S^{-1}(\alpha)
\end{align}

The value of $R_{\text{max}}$ in round $\ell$  is at least $R_t$. Combining \eqref{eq:rtbound} with \eqref{eq:rl}, this shows that at round $\ell$ we have $\frac{R_{\text{max}}}{R_\ell} > (1+\delta)^2$, contradicting \eqref{eq:2}.
\end{proof}

\begin{claim}\label{lem:approx_rev}
When the exploration phase is $\delta$-approximate, we have
    $ R(\tilde{p}) \geq  (1-7\delta)   R(p^*)$.
\end{claim}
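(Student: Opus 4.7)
The plan is to leverage Claim~\ref{lem:approx_price}, which yields $(1-7\delta)S^{-1}(k/n) \le \tilde{p} \le p^*$, together with the concavity and unimodality of the revenue function $R(\cdot) = p\,S(p)$ (implied by regularity, which MHR guarantees). Recall that $R$ is concave with maximum at the Myerson reserve $\ReservePrice$, so $R$ is non-increasing on $[\ReservePrice,\infty)$, and on $[0,\ReservePrice]$ concavity with $R(0)=0$ yields the tangent-style bound $R(p) \ge (p/\ReservePrice)\,R(\ReservePrice)$.

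First I would dispatch the easy case $\tilde{p} \ge \ReservePrice$: since $\tilde{p}\le p^*$ and both lie in $[\ReservePrice,\infty)$ where $R$ is non-increasing, $R(\tilde{p}) \ge R(p^*)$. Next, for $\tilde{p} < \ReservePrice$ together with $p^* = S^{-1}(k/n) \ge \ReservePrice$, Claim~\ref{lem:approx_price} furnishes $\tilde{p} \ge (1-7\delta)\,S^{-1}(k/n) \ge (1-7\delta)\,\ReservePrice$, so the tangent bound plus $R(p^*) \le R(\ReservePrice)$ gives $R(\tilde{p}) \ge (1-7\delta)\,R(\ReservePrice) \ge (1-7\delta)\,R(p^*)$.

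The delicate sub-case is $\tilde{p} < \ReservePrice$ together with $p^* = \ReservePrice > S^{-1}(k/n)$, since Claim~\ref{lem:approx_price} gives only $\tilde{p} \ge (1-7\delta)\,S^{-1}(k/n)$, which need not be close to $\ReservePrice$. Here I would use the mechanism's stopping rule: because prices descend geometrically by $(1+\delta)$ and exploration terminated at $\tilde{p} = p_\ell$, some earlier round $\ell_0$ has $p_{\ell_0}$ within a factor $(1+\delta)$ of $\ReservePrice$. By the $\delta$-approximate property applied at $\ell_0$, and by concavity of $R$ near its maximizer, $R_{\max} \ge R_{\ell_0} \ge (1+\delta)^{-1}R(p_{\ell_0})$, which is within a small multiplicative factor of $R(\ReservePrice)$. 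Termination at $\tilde{p}$ via the revenue-decrease rule (condition (iii)) forces $R_\ell \ge (1+\delta)^{-2}R_{\max}\cdot$ (the previous-round slack), and a final application of the $\delta$-approximate property translates the estimator $R_\ell$ back into the true $R(\tilde{p})$, yielding $R(\tilde{p}) \ge (1-O(\delta))\,R(\ReservePrice) = (1-O(\delta))\,R(p^*)$.

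The main obstacle is precisely this last sub-case: the $\delta$-approximation introduces a $(1+\delta)^{\pm 1}$ slip each time we pass between $R_\ell$ and $R(p_\ell)$, and we invoke it at two different rounds ($\ell_0$ and $\ell$) in addition to the $(1+\delta)^{-2}$ stopping slack. Careful bookkeeping is needed to certify that the cumulative loss lies within the budget $(1-7\delta)$. A subtlety is the requirement $S(p_{\ell_0}) \ge \gamma$ needed to apply the $\delta$-approximation at $\ell_0$; since $p_{\ell_0} \approx \ReservePrice$ and $S(\ReservePrice) > S(\ReservePrice) \ge (1/e)$ for MHR distributions (Claim~\ref{lem:mhr_survival}), this holds with margin.
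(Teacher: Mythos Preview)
Your plan is correct and matches the paper's argument: both dispatch the case $p^*=S^{-1}(k/n)$ directly from Claim~\ref{lem:approx_price}, and in the case $p^*=\ReservePrice$ both anchor $R_{\max}$ via a round $\ell_0$ with $p_{\ell_0}$ just below $\ReservePrice$ (using $S(\ReservePrice)\ge 1/e$ from Claim~\ref{lem:mhr_survival}) and then use the revenue-decrease stopping rule to argue $R(\tilde p)$ cannot have fallen far. The paper frames the last step as a contradiction (assume stopping has not triggered at some $\ell\ge t$ with $R(p_{\ell+1})<(1-7\delta)R(p^*)$ and derive that condition~(iii) must already hold), which is exactly your direct argument repackaged.

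Two small slips in your third sub-case. First, the inequality you wrote is reversed: if the revenue-decrease rule triggers at round $\ell$ then $R_\ell\le(1+\delta)^{-2}R_{\max}$. What you actually need (and what your parenthetical ``previous-round slack'' gestures at) is that at $\ell-1$ no stopping condition held, hence $R_{\ell-1}>(1+\delta)^{-2}R_{\max}$; then translate to $R(p_{\ell-1})$ via the $\delta$-approximate property and to $R(p_\ell)=R(\tilde p)$ via the one-step bound $R(p_{\ell})\ge(1+\delta)^{-1}R(p_{\ell-1})$. Second, you should not assume termination was specifically via condition~(iii): regardless of which condition triggers at $\tilde\ell$, none held at $\tilde\ell-1$, in particular not (iii), which is all you use. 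The paper's contradiction phrasing sidesteps both issues automatically.
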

\begin{proof}
By Claim~\ref{lem:approx_price}, we are done when $p^*=S^{-1}\left(\frac{k}{n}\right)$. Therefore, assume $p^*=\ReservePrice$, the Myerson reserve price. It is easy to see that $R(p_{\ell+1}) \geq \tfrac{1}{1+\delta}\; R(p_\ell)$ for each $\ell$.  Let $t$ be the first integer such that $p_{t} \leq p^*=\ReservePrice$. Note that $(1+\delta)^{-1} p^* \leq p_t \leq p^*$. Claim~\ref{lem:approx_price} says that $\tilde{p} \leq p^*=\ReservePrice$, therefore $\tilde{\ell} \geq t$ and by Claim~\ref{lem:mhr_survival} $S(p_{t}) \geq S(\ReservePrice) \geq 1/e \geq \gamma$. It suffices to show that a stopping condition must be triggered before $R(p_\ell)$ gets too small.

Assume for a contradiction that the stopping condition is not triggered by phase $\ell\geq t$, for some $\ell$ such that
    $R(p_{\ell+1}) < (1-7\delta) R(p^*)$.
Since $R$ decreases slowly as described above, it follows that $t < \ell$. Moreover, since we assumed the exploration phase is $\delta$-approximate,
$S_{t}
    \geq \tfrac{1}{1+\delta}\; S(p_{t})
    \geq \tfrac{1}{1+\delta}\; \gamma.$
 Therefore, during phase $\ell$ we have
    $R_{max} \geq R_{t} = S_t p_t \geq (\tfrac{1}{1+\delta})^{2}\;  R(p^*)$.
 Since no stopping condition is triggered for phase $\ell$, it must be that
$R_\ell
    \geq (\tfrac{1}{1+\delta})^2\; R_{max}
    \geq (\tfrac{1}{1+\delta})^4\; R(p^*)$.
Moreover
$R(p_{\ell+1})
    \geq \tfrac{1}{1+\delta}\; R( p_\ell) \geq (\tfrac{1}{1+\delta})^2\; R_\ell
    \geq (\tfrac{1}{1+\delta})^{6} R(p^*)$,
a contradiction.
\end{proof}



We can now complete the proof of Lemma~\ref{lm:mhr_ub}.

We condition on the exploration phase being $\delta$-approximate. Let $n'$ and $k'$ be the number of players and items left after the exploration phase, respectively. In the exploitation phase, we attain expected revenue $\Payoff(A^{n'}_{k'}(\tilde{p}))$. Moreover, in the exploration phase we attained revenue at least $(k-k') \tilde{p}$, since we only used prices greater than or equal to $\tilde{p}$. Therefore, the total expected revenue of our pricing strategy is at least $\Payoff(\A^{n'}_{k'}(\tilde{p})) + (k'-k) \tilde{p} $. It is easy to see that this is at least $\Payoff(\A^{n'}_k(\tilde{p}))$.

It remains to bound the expected revenue of $\A^{n'}_k(\tilde{p})$.
Observe that $\frac{n'}{n} \geq 1-\delta$.
For brevity, denote
    $\beta \triangleq (1-\frac{1}{\sqrt{2 \pi k}})$.

There are two cases. In the first case,
    $p^* = S^{-1}(\fkn)$.
Lemma~\ref{lm:auction_smoothness} and Claim~\ref{lem:approx_price} imply that
\begin{align*}
\Payoff(\A^{n'}_k(\tilde{p}))
    \geq \beta\, \frac{n'}{n} \frac{\tilde{p}}{p^*} \; \Payoff(\A^n_n(p^*))
    \geq \beta\; (1-8\delta) \; \Payoff(\A^n_n(p^*)).
\end{align*}

The second case is $p^*=\ReservePrice$. By Claim~\ref{lem:approx_rev} and unimodality of $R$, we have that
\begin{align*}
\Payoff(\A^n_n(\max(S^{-1}(\fkn),\tilde{p})))
    \geq \Payoff(\A^n_n(\tilde{p}))
        \geq (1-7\delta) \;  \Payoff(A^n_n(p^*)).
\end{align*}
Moreover, using Lemma~\ref{lm:auction_smoothness}, Claim~\ref{lem:approx_price}, and the equation above we show that
\begin{align*}
\Payoff(\A^{n'}_k(\tilde{p}))
    \geq \beta \; (1-8\delta)\; \Payoff(\A^n_n(\max(S^{-1}(\fkn),\tilde{p})))
    \geq \beta\; (1-15\delta) \; \Payoff(\A^n_n(p^*)).
\end{align*}

By Lemma \ref{cl:compare_myerson}, Mechanism~\ref{alg:pp} achieves, in expectation, at least the following fraction of the expected revenue of the offline benchmark:
\begin{align*}
\beta\;(1-O(\delta))\;
    \left( 1- 2 \, \log_{1+\delta}(\tfreps) \exp(-\tfrac{1}{4}\,\delta^2 \gamma m) \right).
\end{align*}
Now, plug $\delta$ into Lemma~\ref{lm:mhr_ub}, and $m$ as defined in the pricing strategy. Note that
    $m= \Theta(\frac{\delta^2 n}{\log 1/\eps})$.
We obtain the final bound replacing $\gamma$ by the lesser quantity $\fkn$, and using the fact that
    $\log_{1+\delta}(x) = \Theta(\frac{1}{\delta}\,\log x)$.


\section{Conclusions and open questions}

We consider dynamic pricing with limited supply and achieve near-optimal performance using an index-based bandit-style algorithm. A key idea in designing this algorithm is that we define the index of an arm (price) according to the estimated expected \emph{total payoff} from this arm given the known constraints.

\OMIT{ 
It is worth noting that a good index-based algorithm did not \emph{have} to exist in our setting. Indeed, many bandit algorithms in the literature are not index-based, e.g. $\EXP$~\cite{bandits-exp3} and ``zooming algorithm"~\cite{LipschitzMAB-stoc08} and their respective variants. The fact that Gittins algorithm~\cite{Gittins-index-79} and $\UCB$~\cite{bandits-ucb1}
achieve (near-)optimal performance with index-based algorithms was widely seen as an impressive contribution.
} 

While in this paper we apply the above key idea to a specific index-based algorithm ($\UCB$), it can be seen as an (informal) general reduction for index-based algorithms for dynamic pricing, from unlimited supply to limited supply. This reduction may help with more general dynamic pricing settings (more on that below), and moreover it might be extended to other bandit-style settings where the ``best arm'' is \emph{not} an arm with the best expected per-round payoff. In particular,~\cite{BanditSurveys-colt13} uses a version of this reduction in the context of adaptive quality control in crowdsourcing.

It is an interesting open question whether a reduction such as above can be made more formal, and which algorithms and which settings it can be applied to. An ambitions conjecture for our setting is that there is a simple black-box reduction from unlimited supply to limited supply that applies to arbitrary ``reasonable'' algorithms. In the full generality this conjecture appears problematic; in particular, some reasonable bandit algorithms such as $\EXP$ are hard-coded to spend a prohibitively large amount of time on exploration.

This paper gives rise to a number of more concrete open questions. The most immediate ones concern extending our upper and lower bounds for, respectively, more general and more specific classes of demand functions. First, it is desirable to extend Theorem~\ref{thm:main} to possibly irregular distributions, i.e. obtain non-trivial regret bounds with respect to the offline benchmark. Second, one wonders whether the optimal $O(c_F\, \sqrt{k})$ regret rate from Theorem~\ref{thm:MHR-intro} can be extended to all regular demand distributions. Third, it is open whether our lower bounds can be strengthened to regular demand distributions.

For arbitrary (possibly irregular) distributions, one can show that, essentially, the appropriate benchmark is a \emph{mixture} of two fixed prices rather than one fixed price. In a recent follow-up work~\cite{BwK-focs13}, the authors design regret-minimizing algorithms that compete with this randomized benchmark. In fact, their results extend to a much more general setting of explore-exploit problems with resource utilization constraints.

Further, it is desirable to extend dynamic pricing with limited supply beyond IID valuations. For example, most results on secretary problems \cite{BabaioffIK07} hold under a weaker assumption: random permutation of adversarially chosen values; however, our results do not immediately extend to this model. More generally, one would like to handle adversarial valuations, or perhaps identify assumptions which make the problem tractable. One natural direction is to bound the number or frequency of changes. An initial result in this direction~\cite{BesbesZeevi-OR11} allows the demand distribution to change at most once, at some point in time that is unknown to the mechanism. Alternatively, one could bound the rate of change. A promising approach here is to apply the reduction from this paper to index-based algorithms for the corresponding bandit setting~\cite{DynamicMAB-colt08,contextualMAB-colt11}.

On a final note, we observe that selling at a given price provides some information about the smaller prices, whereas our algorithms do not directly use this information (and neither does the prior work~\cite{KleinbergL03,BZ09}). Likewise, our algorithms do not update the estimates for a given price using the estimates for other prices using the fact that the sales rate $S(p)$ is non-increasing in the price $p$. It is somewhat surprising that our main algorithm achieves near-optimal regret without taking advantage of this additional information. While such information might help in practice, the extent to which it can possibly help is an open question.

\xhdr{Acknowledgements.} We are grateful to Jason Hartline, Qiqi Yan and Assaf Zeevi for their comments.


\bibliographystyle{plain}
\bibliography{bib-abbrv,bib-bandits,bib-rwalks,bib-Shaddin,bib-AGT,bib-slivkins}

\begin{thebibliography}{10}

\bibitem{BanditSurveys-colt13}
Ittai Abraham, Omar Alonso, Vasilis Kandylas, and Aleksandrs Slivkins.
\newblock {Adaptive crowdsourcing algorithms for the bandit survey problem}.
\newblock In {\em 26th Conf. on Learning Theory (COLT)}, 2013.

\bibitem{agrawal-bandits-95}
Rajeev Agrawal.
\newblock {The continuum-armed bandit problem}.
\newblock {\em SIAM J. Control and Optimization}, 33(6):1926--1951, 1995.

\bibitem{Bubeck-colt09}
J.Y. Audibert and S.~Bubeck.
\newblock {Regret Bounds and Minimax Policies under Partial Monitoring}.
\newblock {\em J. of Machine Learning Research (JMLR)}, 11:2785--2836, 2010.
\newblock A preliminary version has been published in \emph{COLT 2009}.

\bibitem{bandits-ucb1}
Peter Auer, Nicol{\`o} Cesa-Bianchi, and Paul Fischer.
\newblock Finite-time analysis of the multiarmed bandit problem.
\newblock {\em Machine Learning}, 47(2-3):235--256, 2002.
\newblock Preliminary version in {\em 15th ICML}, 1998.

\bibitem{bandits-exp3}
Peter Auer, Nicol{\`o} Cesa-Bianchi, Yoav Freund, and Robert~E. Schapire.
\newblock The nonstochastic multiarmed bandit problem.
\newblock {\em SIAM J. Comput.}, 32(1):48--77, 2002.
\newblock Preliminary version in {\em 36th IEEE FOCS}, 1995.

\bibitem{AuerOS/07}
Peter Auer, Ronald Ortner, and Csaba Szepesv\'ari.
\newblock {Improved Rates for the Stochastic Continuum-Armed Bandit Problem}.
\newblock In {\em 20th Conf. on Learning Theory (COLT)}, pages 454--468, 2007.

\bibitem{BBDS11}
Moshe Babaioff, Liad Blumrosen, Shaddin Dughmi, and Yaron Singer.
\newblock Posting prices with unknown distributions.
\newblock In {\em Symp. on Innovations in CS}, 2011.

\bibitem{BabaioffIK07}
Moshe Babaioff, Nicole Immorlica, and Robert Kleinberg.
\newblock Matroids, secretary problems, and online mechanisms.
\newblock In {\em 18th ACM-SIAM Symp. on Discrete Algorithms (SODA)}, pages
  434--443, 2007.

\bibitem{Transform-ec10}
Moshe Babaioff, Robert Kleinberg, and Aleksandrs Slivkins.
\newblock Truthful mechanisms with implicit payment computation.
\newblock In {\em 11th ACM Conf. on Electronic Commerce (EC)}, pages 43--52,
  2010.

\bibitem{MechMAB-ec09}
Moshe Babaioff, Yogeshwer Sharma, and Aleksandrs Slivkins.
\newblock Characterizing truthful multi-armed bandit mechanisms.
\newblock In {\em 10th ACM Conf. on Electronic Commerce (EC)}, pages 79--88,
  2009.

\bibitem{BwK-focs13}
Ashwinkumar Badanidiyuru, Robert Kleinberg, and Aleksandrs Slivkins.
\newblock Bandits with knapsacks.
\newblock In {\em 54th IEEE Symp. on Foundations of Computer Science (FOCS)},
  2013.

\bibitem{BHW02}
Z.~Bar-Yossef, K.~Hildrum, and F.~Wu.
\newblock Incentive-compatible online auctions for digital goods.
\newblock In {\em 13th Annual ACM-SIAM Symposium on Discrete Algorithms
  (SODA)}, 2002.

\bibitem{BZ09}
Omar Besbes and Assaf Zeevi.
\newblock Dynamic pricing without knowing the demand function: Risk bounds and
  near-optimal algorithms.
\newblock {\em Operations Research}, 57:1407--1420, 2009.

\bibitem{BesbesZeevi-OR11}
Omar Besbes and Assaf Zeevi.
\newblock On the minimax complexity of pricing in a changing environment.
\newblock {\em Operations Reseach}, 59:66--79, 2011.

\bibitem{Blum-soda05}
Avrim Blum and Jason Hartline.
\newblock Near-optimal online auctions.
\newblock In {\em 16th ACM-SIAM Symp. on Discrete Algorithms (SODA)}, 2005.

\bibitem{Blum03}
Avrim Blum, Vijay Kumar, Atri Rudra, and Felix Wu.
\newblock Online learning in online auctions.
\newblock In {\em 14th ACM-SIAM Symp. on Discrete Algorithms (SODA)}, pages
  202--204, 2003.

\bibitem{Bubeck-survey12}
S\'{e}bastien Bubeck and Nicolo Cesa-Bianchi.
\newblock {Regret Analysis of Stochastic and Nonstochastic Multi-armed Bandit
  Problems}.
\newblock {\em Foundations and Trends in Machine Learning}, 5(1):1--122, 2012.

\bibitem{Bubeck-alt09}
S\'{e}bastien Bubeck, R\'{e}mi Munos, and Gilles Stoltz.
\newblock {Pure Exploration in Multi-Armed Bandit Problems}.
\newblock {\em Theoretical Computer Science}, 412(19):1832--1852, 2011.
\newblock Preliminary version published in {\em ALT 2009}.

\bibitem{xbandits-nips08}
S\'{e}bastien Bubeck, R\'{e}mi Munos, Gilles Stoltz, and Csaba Szepesvari.
\newblock {Online Optimization in X-Armed Bandits}.
\newblock {\em J. of Machine Learning Research (JMLR)}, 12:1587--1627, 2011.
\newblock Preliminary version in {\em NIPS 2008}.

\bibitem{CesaBL-book}
Nicol\`o Cesa-Bianchi and G\'abor Lugosi.
\newblock {\em {Prediction, learning, and games}}.
\newblock Cambridge Univ. Press, 2006.

\bibitem{ChakrabortyEGMM10}
Tanmoy Chakraborty, Eyal Even-Dar, Sudipto Guha, Yishay Mansour, and
  S.~Muthukrishnan.
\newblock Approximation schemes for sequential posted pricing in multi-unit
  auctions.
\newblock In {\em Workshop on Internet \& Network Economics (WINE)}, pages
  158--169, 2010.

\bibitem{ChawlaHMS10}
Shuchi Chawla, Jason~D. Hartline, David~L. Malec, and Balasubramanian Sivan.
\newblock Multi-parameter mechanism design and sequential posted pricing.
\newblock In {\em ACM Symp. on Theory of Computing (STOC)}, pages 311--320,
  2010.

\bibitem{Devanur-ec09}
Nikhil Devanur and Jason Hartline.
\newblock Limited and online supply and the bayesian foundations of prior-free
  mechanism design.
\newblock In {\em ACM Conf. on Electronic Commerce (EC)}, 2009.

\bibitem{DevanurK09}
Nikhil Devanur and Sham~M. Kakade.
\newblock The price of truthfulness for pay-per-click auctions.
\newblock In {\em 10th ACM Conf. on Electronic Commerce (EC)}, pages 99--106,
  2009.

\bibitem{DRY10}
Peerapong Dhangwatnotai, Tim Roughgarden, and Qiqi Yan.
\newblock Revenue maximization with a single sample.
\newblock In {\em ACM Conf. on Electronic Commerce (EC)}, pages 129--138, 2010.

\bibitem{Dynkin-secretary63}
Eugene~B. Dynkin.
\newblock {The optimum choice of the instant for stopping a markov process}.
\newblock {\em Sov. Math. Dokl.}, 4, 1963.

\bibitem{Gittins-index-79}
J.~C. Gittins.
\newblock {Bandit processes and dynamic allocation indices (with discussion)}.
\newblock {\em J. Roy. Statist. Soc. Ser. B}, 41:148--177, 1979.

\bibitem{Gittins-book11}
John Gittins, Kevin Glazebrook, and Richard Weber.
\newblock {\em {Multi-Armed Bandit Allocation Indices}}.
\newblock John Wiley \& Sons, 2011.

\bibitem{Null-soda09}
Ashish Goel, Sanjeev Khanna, and Brad Null.
\newblock {The Ratio Index for Budgeted Learning, with Applications}.
\newblock In {\em 20th ACM-SIAM Symp. on Discrete Algorithms (SODA)}, pages
  18--27, 2009.

\bibitem{HKP04}
Mohammad~T. Hajiaghayi, Robert Kleinberg, and David~C. Parkes.
\newblock Adaptive limited-supply online auctions.
\newblock In {\em Proc. ACM Conf. on Electronic Commerce}, pages 71--80, 2004.

\bibitem{HR08}
J.D. Hartline and T.~Roughgarden.
\newblock {Optimal mechanism design and money burning}.
\newblock In {\em ACM Symp. on Theory of Computing (STOC)}, 2008.

\bibitem{Bobby-nips04}
Robert Kleinberg.
\newblock Nearly tight bounds for the continuum-armed bandit problem.
\newblock In {\em 18th Advances in Neural Information Processing Systems
  (NIPS)}, 2004.

\bibitem{LipschitzMAB-stoc08}
Robert Kleinberg, Aleksandrs Slivkins, and Eli Upfal.
\newblock {Multi-Armed Bandits in Metric Spaces}.
\newblock In {\em 40th ACM Symp. on Theory of Computing (STOC)}, pages
  681--690, 2008.

\bibitem{KleinbergL03}
Robert~D. Kleinberg and Frank~T. Leighton.
\newblock The value of knowing a demand curve: Bounds on regret for online
  posted-price auctions.
\newblock In {\em IEEE Symp. on Foundations of Computer Science (FOCS)}, 2003.

\bibitem{Lai-Robbins-85}
Tze~Leung Lai and Herbert Robbins.
\newblock {Asymptotically efficient Adaptive Allocation Rules}.
\newblock {\em Advances in Applied Mathematics}, 6:4--22, 1985.

\bibitem{LN00}
Ron Lavi and Noam Nisan.
\newblock Competitive analysis of incentive compatible on-line auctions.
\newblock In {\em {ACM} Conference on Electronic Commerce}, pages 233--241,
  2000.

\bibitem{McDiarmid-concentration}
Colin McDiarmid.
\newblock {Concentration}.
\newblock In M.~Habib. C. McDiarmid.~J. Ramirez and B.~Reed, editors, {\em
  Probabilistic Methods for Discrete Mathematics}, pages 195--248.
  Springer-Verlag, Berlin, 1998.

\bibitem{MitzUpfal-book05}
Michael Mitzenmacher and Eli Upfal.
\newblock {\em {Probability and Computing: Randomized Algorithms and
  Probabilistic Analysis}}.
\newblock Cambridge University Press, 2005.

\bibitem{Mye81}
R.~B. Myerson.
\newblock Optimal auction design.
\newblock {\em Mathematics of Operations Research}, 6(1):58--73, 1981.

\bibitem{NSV08}
Hamid Nazerzadeh, Amin Saberi, and Rakesh Vohra.
\newblock Dynamic cost-per-action mechanisms and applications to online
  advertising.
\newblock In {\em 17th Intl. World Wide Web Conf. (WWW)}, 2008.

\bibitem{contextualMAB-colt11}
Aleksandrs Slivkins.
\newblock {Contextual Bandits with Similarity Information}.
\newblock In {\em 24th Conf. on Learning Theory (COLT)}, 2011.
\newblock To appear in J. of Machine Learning Research (JMLR), 2013.

\bibitem{DynamicMAB-colt08}
Aleksandrs Slivkins and Eli Upfal.
\newblock {Adapting to a Changing Environment: the Brownian Restless Bandits}.
\newblock In {\em 21st Conf. on Learning Theory (COLT)}, pages 343--354, 2008.

\bibitem{Wil89}
Robert~B. Wilson.
\newblock Efficient and competitive rationing.
\newblock {\em Econometrica}, 57:1--40, 1989.

\bibitem{Yan11}
Qiqi Yan.
\newblock Mechanism design via correlation gap.
\newblock In {\em 22nd ACM-SIAM Symp. on Discrete Algorithms (SODA)}, 2011.

\end{thebibliography}

\newpage
\appendix
\section{Benchmark comparison}
\label{app:benchmarks}

We start with a self-contained proof of a slightly weaker version of Lemma~\ref{lm:benchmark} (which suffices for the purposes of this paper).

\begin{lemma}[Yan~\cite{Yan11}]\label{lm:benchmark-appendix}
For each regular demand distribution there exists a fixed-price strategy whose expected revenue is at least the offline benchmark minus
    $O(\sqrt{k\log k})$.
\end{lemma}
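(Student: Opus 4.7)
The plan is to exhibit an explicit fixed price that attains the bound, namely the price $p^* = \max(\ReservePrice,\, S^{-1}(\fkn))$ from Claim~\ref{cl:benchmark-nu}. That claim already gives $\Payoff(\A_k^n(p^*)) \geq \nu(p^*) - O(p^*\sqrt{k\log k}) \geq \nu(p^*) - O(\sqrt{k\log k})$ since $p^* \in [0,1]$. Everything therefore reduces to establishing the deterministic inequality $\text{(offline benchmark)} \leq \nu(p^*)$, which is the step that uses regularity in an essential way.

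To prove this inequality I would use the standard ex-ante relaxation of Myerson's optimal auction. Let $q^* \triangleq S(\ReservePrice)$ be the reserve quantile; a direct case analysis shows $\nu(p^*) = n\, R(\min(\fkn, q^*))$, since for $\fkn \le q^*$ the claim reduces to $k \cdot S^{-1}(\fkn)$, while for $\fkn > q^*$ it reduces to $n\,R(\ReservePrice)$. For a single agent with value $V \sim F$ and any monotone interim allocation rule $a(\cdot)$, Myerson's lemma expresses the expected payment as $\int a(v)\phi(v) f(v)\,dv$, where $\phi$ is the virtual value. Fixing $x = \E[a(V)]$ and using regularity (so that $\phi$ is nondecreasing), a rearrangement argument shows that the virtual-surplus-maximizing monotone rule is a threshold rule; direct computation then gives expected payment at most $R(\min(x,q^*))$, with the threshold $S^{-1}(x)$ when $x \le q^*$ (revenue $R(x)$) and the threshold $\ReservePrice$ when $x > q^*$ (revenue $R(q^*)$, using only an ex-ante share $q^* \le x$).

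Next, let $x_1,\ldots,x_n$ be the ex-ante allocation probabilities under Myerson's optimal auction; since at most $k$ items are ever sold, $\sum_i x_i \le k$. Summing the single-agent bound, the offline benchmark is at most $\sum_i R(\min(x_i,q^*))$. The function $x \mapsto R(\min(x,q^*))$ is nondecreasing and concave on $[0,1]$: it is concave on $[0,q^*]$ by regularity, constant on $[q^*,1]$, and the two pieces join smoothly because $R'(q^*)=0$. Hence, writing $\bar x = \tfrac{1}{n}\sum_i x_i \le \fkn$, Jensen's inequality and monotonicity give $\sum_i R(\min(x_i,q^*)) \le n\, R(\min(\bar x, q^*)) \le n\, R(\min(\fkn, q^*)) = \nu(p^*)$. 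Combining with the guarantee from Claim~\ref{cl:benchmark-nu} completes the proof.

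The main obstacle is the per-agent ex-ante bound: one must argue carefully that among monotone allocation rules with a given ex-ante probability $x$, threshold rules maximize virtual surplus (invoking regularity), and that the resulting envelope $R(\min(x,q^*))$ is in fact concave after the capping at $q^*$, so Jensen's inequality can be applied across the $n$ agents. Once these ingredients are in place, the reduction from the offline benchmark to the closed-form quantity $\nu(p^*)$, and hence to the fixed-price revenue via Claim~\ref{cl:benchmark-nu}, is essentially mechanical.
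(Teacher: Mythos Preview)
Your proof is correct and follows the same overall architecture as the paper: show the deterministic inequality $\Payoff(M^n_k)\le \nu(p^*)$ via an ex-ante relaxation, then invoke Claim~\ref{cl:benchmark-nu} to pass from $\nu(p^*)$ to the revenue of the fixed-price strategy $\A^n_k(p^*)$.

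The one place where your route differs from the paper's is how you handle the $n$ agents after the per-agent bound. The paper (Claim~\ref{cl:compare_myerson}) exploits the i.i.d.\ symmetry of the setting: since Myerson's auction is symmetric, every agent has the \emph{same} ex-ante selling probability $q$, and the supply constraint immediately gives $q\le \fkn$. It then only has to compare the revenue curve at two specific quantiles, $q$ and $S(p^*)$, using unimodality of $R$. You instead allow arbitrary $x_1,\ldots,x_n$, build the concave envelope $x\mapsto R(\min(x,q^*))$, and apply Jensen across the agents. Your argument is a bit longer but is genuinely more general---it would go through for non-identical regular distributions---whereas the symmetry shortcut is specific to the i.i.d.\ case and yields a shorter proof here. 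Either way, the per-agent step (threshold rules maximize virtual surplus under regularity) is the same Jensen/rearrangement fact that the paper invokes in one line.
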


Recall that $A^n_k(p)$ denotes the fixed-price strategy with $k$ items, $n$ agents, and fixed price $p$. Let $M^n_k$ denote the optimal (expected revenue maximizing) offline auction with $n$-players and $k$-items.
As in Claim~\ref{cl:benchmark-nu}, let
    $p^* = \max(\ReservePrice,\, S^{-1}(\fkn))$,
where
    $\ReservePrice = \argmax_p p\, S(p)$
is the Myerson reserve price.

\begin{claim}\label{cl:compare_myerson}
If the demand distribution is regular then
$\Payoff(A^n_n(p^*)) \geq \Payoff(M^n_k)$.
\end{claim}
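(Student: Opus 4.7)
The plan is to bound $\Payoff(M^n_k)$ from above via the \emph{ex-ante relaxation} and then observe that the value of this relaxation is attained by $A^n_n(p^*)$. For any IC, IR mechanism with $k$ items and $n$ IID agents, let $x_i \in [0,1]$ denote the marginal (unconditional) probability that agent $i$ receives an item. Applying Myerson's single-agent characterization to each agent in isolation would yield that the expected payment from agent $i$ is at most $R(x_i)$, where $R(q) \triangleq q\, S^{-1}(q)$ is the revenue curve in quantile space. Since feasibility implies $\sum_i x_i \leq k$, this would give $\Payoff(M^n_k) \leq \max\{\sum_{i=1}^n R(x_i) : \sum_i x_i \leq k,\; x_i \in [0,1]\}$.

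The second step is to solve this concave maximization. Regularity of $F$ is equivalent to $R$ being concave on $[0,1]$, with unconstrained maximizer $q^\dagger \triangleq S(\ReservePrice)$. By concavity and symmetry, the maximum is attained at $x_i = \min(k/n,\, q^\dagger)$ for all $i$; this common value equals $S(p^*)$ because $p^* = \max(\ReservePrice,\, S^{-1}(k/n))$. The optimum therefore equals $n\, R(S(p^*)) = n\, p^*\, S(p^*)$, which is exactly $\Payoff(A^n_n(p^*))$: in $A^n_n(p^*)$ each of the $n$ agents is offered price $p^*$, the supply constraint never binds since there are $n$ items, and so each agent independently pays $p^*$ with probability $S(p^*)$. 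Combining the two steps yields $\Payoff(A^n_n(p^*)) \geq \Payoff(M^n_k)$.

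The main obstacle is the per-agent revenue bound underlying the first step. The standard argument fixes the reports of all other agents and views the mechanism restricted to agent $i$ as a single-agent IC, IR mechanism with ex-ante selling probability $x_i$; Myerson's lemma for a single regular-distribution agent would then show that no such mechanism extracts more expected revenue than the posted-price mechanism with threshold $S^{-1}(x_i)$, whose expected revenue is precisely $R(x_i)$. Concavity of $R$, equivalent to regularity of $F$, is essential for justifying the symmetric optimum in the second step; without it, neither step goes through cleanly, and the bound would fail in general.
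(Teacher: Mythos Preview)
Your proposal is correct and follows essentially the same ex-ante relaxation argument as the paper: bound each agent's expected payment by the concave revenue curve $R$ evaluated at that agent's marginal allocation probability, then optimize subject to $\sum_i x_i \le k$. The only cosmetic difference is that the paper first invokes symmetry of the Myerson auction to set all $x_i$ equal to a common $q$ and then applies Jensen once, whereas you keep the $x_i$ general and use concavity a second time to argue the symmetric point is optimal; both routes are the same idea.
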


\begin{proof}[Proof of Claim~\ref{cl:compare_myerson}]
  Let $q_i$ be the probability that $M^n_k$ sells to agent $i$. By symmetry, $q_i=q_j$ for all players $i$ and $j$, so we simply denote this probability by $q$. Let $p= S^{-1}(q)$ be the single price we would need to offer a agent in order to sell to him with probability $q$. Since $R$ is a concave function of the selling probability, Jensen's inequality implies that $R(p)$ is an upper bound on the revenue collected by the Myerson auction from a single agent. Equivalently:
    $n R(p) \geq \Payoff(M^n_k)$.

Now, observe that the expected number of items sold by $M^n_k$ is $nq$. Since $M^n_k$ never sells more than $k$ items, it must be that $q \leq \fkn$. Therefore, $p \geq S^{-1}(\fkn)$. By definition of $p^*$, we deduce that there are two cases: (1)  $p^*=\ReservePrice$, or  (2) $\ReservePrice \leq p^* = S^{-1}(\fkn) \leq p$. In case (1) it is clear that $R(p^*) \geq  R(p)$. In case (2) we get that $R(p^*) \geq R(p)$ since $R(x)$ is decreasing for $x \geq \ReservePrice$. Then
\begin{align*}
\Payoff(A^n_n(p^*)) = n R(p^*) \geq nR(p)
    \geq \Payoff(M^n_k). \quad\qedhere
\end{align*}
\end{proof}

Lemma~\ref{lm:benchmark-appendix} follows from Claim~\ref{cl:compare_myerson} and Claim~\ref{cl:benchmark-nu}
because for $p=p^*$ we have $S(p)\leq \fkn$, and so
\begin{align*}
\nu(p) = p\, \min(k, n\,S(p))
    = np^*\,S(p^*) = \Payoff(A^n_n(p^*))
    \geq \Payoff(M^n_k).
\end{align*}

\xhdr{Multiplicative bounds.}
Further, we derive a multiplicative bound in which fixed-price strategies for limited supply are compared to those for unlimited supply. We use this bound to prove Lemma~\ref{lm:auction_smoothness}.

\begin{claim}\label{cl:correlationgap}
For any regular demand distribution and any $p \geq S^{-1}(\fkn)$ it holds that
\begin{align*}
\Payoff(\A^n_k(p)) \geq \left(1-\tfrac{1}{\sqrt{2 \pi k}}\right) \Payoff(\A^n_n(p)).
\end{align*}
\end{claim}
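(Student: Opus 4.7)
Let $q \triangleq S(p)$. The revenue of $\A_n^n(p)$ equals $p\cdot nq$ (all $n$ agents are offered price $p$ and buy independently with probability $q$), while the revenue of $\A_k^n(p)$ equals $p\cdot E[\min(X,k)]$, where $X\sim\mathrm{Binomial}(n,q)$ is the would-be number of buyers under unlimited supply. The hypothesis $p\geq S^{-1}(k/n)$ gives $nq\leq k$. Thus, after dividing through by $p$, it suffices to establish the purely probabilistic inequality
\begin{equation*}
E[(X-k)_+] \;\leq\; \tfrac{nq}{\sqrt{2\pi k}}\qquad\text{whenever $X\sim\mathrm{Binomial}(n,q)$ and $nq\leq k$,}
\end{equation*}
after which the claim follows from $E[\min(X,k)]=nq-E[(X-k)_+]$. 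Note that regularity plays no role; only the hypothesis $nq\leq k$ is used. The boundary cases $k=n$ (where $(X-k)_+\equiv 0$) and $k<1$ (vacuous) can be dismissed immediately.

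\textbf{Key identity.} The plan is to derive, via the identity $j\binom{n}{j}=n\binom{n-1}{j-1}$ and the natural coupling $X=X'+Y$ with $X'\sim\mathrm{Binomial}(n-1,q)$ and an independent Bernoulli $Y\sim\mathrm{Bern}(q)$, the clean decomposition
\begin{equation*}
E[(X-k)_+]
\;=\;(n-k)q\cdot P(X'=k)\;+\;(nq-k)\cdot P(X'\geq k+1).
\end{equation*}
To obtain it, I would first rewrite $E[(X-k)_+]=nq\cdot P(X'\geq k)-k\cdot P(X\geq k+1)$ via the above combinatorial identity, then use the recursion $P(X\geq k+1)=qP(X'\geq k)+(1-q)P(X'\geq k+1)$ coming from the coupling, and finally split $P(X'\geq k)=P(X'=k)+P(X'\geq k+1)$. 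Since $nq-k\leq 0$ by hypothesis, the second summand is non-positive and we retain the single-term bound
\begin{equation*}
E[(X-k)_+]\;\leq\;(n-k)q\cdot P(X'=k).
\end{equation*}

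\textbf{Controlling $P(X'=k)$ and closing the argument.} The function $q\mapsto P(X'=k)=\binom{n-1}{k}q^k(1-q)^{n-1-k}$ is unimodal with unconstrained maximizer at $q=k/(n-1)>k/n$, so it is increasing on $[0,k/n]$ and attains its maximum on this interval at $q=k/n$. Using $\binom{n-1}{k}=\tfrac{n-k}{n}\binom{n}{k}$, a direct calculation collapses the upper endpoint to the ``central'' binomial probability
\begin{equation*}
P(X'=k)\big|_{q=k/n}\;=\;\binom{n}{k}(k/n)^k(1-k/n)^{n-k},
\end{equation*}
and Stirling's formula yields $\binom{n}{k}(k/n)^k(1-k/n)^{n-k}\leq\bigl(2\pi k(1-k/n)\bigr)^{-1/2}$. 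Substituting back,
\begin{equation*}
(n-k)q\cdot P(X'=k)
\;\leq\;q\sqrt{\tfrac{n(n-k)}{2\pi k}}
\;=\;\tfrac{nq}{\sqrt{2\pi k}}\cdot\sqrt{\tfrac{n-k}{n}}
\;\leq\;\tfrac{nq}{\sqrt{2\pi k}},
\end{equation*}
which is exactly what we need.

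\textbf{Main obstacle.} The crux is the algebraic identity in the first step: one must group the binomial sum so that the coefficient of $P(X'\geq k+1)$ is \emph{precisely} $(nq-k)$, because this is what allows the hypothesis $nq\leq k$ to eliminate that term. Without this cancellation, the naive bound $E[(X-k)_+]\leq(n-k)q\cdot P(X'\geq k)$ is too weak and the $\sqrt{k}$-rate is lost. A secondary but minor point is ensuring the Stirling constant lands on $\tfrac{1}{\sqrt{2\pi k}}$ and not something larger; the collapse $\binom{n-1}{k}(k/n)^k(1-k/n)^{n-1-k}=\binom{n}{k}(k/n)^k(1-k/n)^{n-k}$ at $q=k/n$ makes this automatic by reducing the problem to the standard Stirling estimate for the binomial mass at its mean, with the leftover factor $\sqrt{(n-k)/n}\leq 1$ comfortably absorbing the constant.
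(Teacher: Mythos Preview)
Your proof is correct and takes a genuinely different route from the paper's. The paper does not compute $E[(X-k)_+]$ directly; instead it invokes the \emph{correlation gap} technique of Yan~\cite{Yan11}. Concretely, the paper constructs a correlated valuation environment in which exactly $k$ agents have value above $S^{-1}(\fkn)$, so that the set $T'$ of agents with value above $p$ always has $|T'|\leq k$; since $T'$ and the i.i.d.\ set $T$ have the same marginals and $r(Y)=\min(|Y|,k)$ is the rank function of the $k$-uniform matroid, the correlation-gap bound $E[r(T)]\geq(1-1/\sqrt{2\pi k})\,E[r(T')]$ applies, and $E[r(T')]=E[|T'|]=nS(p)$ finishes the argument. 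Your approach, by contrast, is entirely elementary: the exact identity $E[(X-k)_+]=(n-k)q\,P(X'=k)+(nq-k)\,P(X'\geq k+1)$, the sign observation $nq-k\leq 0$, the monotonicity of $q\mapsto P(X'=k)$ on $[0,k/n]$, and a clean Stirling estimate together reproduce the same constant $1/\sqrt{2\pi k}$ with no appeal to an external lemma. The paper's proof is shorter once the correlation-gap result is taken as a black box and places the claim within a general framework; your proof is self-contained and makes transparent that regularity is never used and that the statement is really a pure fact about binomial tails.
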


\begin{proof}
The proof uses a technique from \cite{Yan11}.
As a thought experiment, consider an environment where agent valuations are \emph{correlated} as follows: The joint distribution of agent valuations can be sampled by choosing a set $S'$ of $k$ players uniformly at random, then for each agent in $S'$ sampling from the conditional distribution $F(x) |_{x \geq S^{-1}(k/n)}$, and for each agent not in $S'$ sampling from the conditional distribution $F(x) |_{x < S^{-1}(k/n)}$. Observe that each agent's valuation is distributed according to $F$, yet at any point exactly $k$ players have value exceeding $S^{-1}(k/n)$.

Let $T'$ be the set of players in this correlated environment whose valuation exceeds $p$. The probability of a particular agent being included in $T'$ is $S(p)$, and  $\Ex[|T'|] = n S(p)$. Since $p \geq S^{-1}(k/n)$, it is clear that $T' \sse S'$ and therefore $0 \leq |T'| \leq k$.

Now consider our original environment where each agent's valuation is drawn i.i.d from $F$. Let $T$ be the set of players in this environment whose valuations exceed $p$. The  probability of a agent being included in $T$ is $S(p)$ -- the same as the probability of being included in $T'$. However, each agent is included in $T$ \emph{independently} with probability $S(p)$. As a result, some of the players in $T$ do not win an item -- this happens when $|T| > k$. We can write the revenue of $\A^n_k(p)$ in this i.i.d environment as follows.
\begin{align}\label{eq:4}
\Payoff(\A^n_k(p)) = p\, \Ex[\, \min(|T|,k)\,]
\end{align}

Now, observe that $r(Y) = \min(|Y|,k)$ is the rank function of the $k$-uniform matroid. Moreover, it was shown in \cite{Yan11} that the correlation gap of this function is $\beta\triangleq \left(1-\frac{1}{\sqrt{2 \pi k}} \right)$.
Therefore, since each agent is included in $T$ independently, we know by the definition of the correlation gap and the fact that $T$ and $T'$ have the same marginals that
\begin{align}
  \label{eq:5}
\Ex[r(T)] \geq \beta\; \Ex[r(T')].
\end{align}
Recall that $T'$ is always bounded between $0$ and $k$, therefore $r(T')=|T'|$. Combining 
\eqref{eq:4} and \eqref{eq:5}, we get
\begin{align*}
 \Payoff(A^n_k(p))
    = p\; \Ex[\, \min(|T|,k) \,]
    \geq \beta\; p\, \Ex[ |T'| ]
    = \beta\; p n S(p) )
    = \beta\; \Payoff(\A^n_n(p)) ). \qquad \qedhere
\end{align*}
\end{proof}

\begin{corollary*}[Lemma~\ref{lm:auction_smoothness}]
  Assume the demand distribution is regular. Let $p\leq p'$ be two prices such that $p\geq S^{-1}(k/n)$. Let $n' \leq n$. Then
 $\Payoff(\A^{n'}_k(p')) \geq \tfrac{n'}{n} \tfrac{p'}{p} \left(1-\tfrac{1}{\sqrt{2 \pi k}}\right) \Payoff(\A^n_n(p)).$
\end{corollary*}

\begin{proof}
Observe that $\A^n_k(p')$ sells at least as many items as $\A^n_k(p)$ for every realization of the bids, but at price $p'$ instead of $p$. Therefore $\Payoff(A^n_k(p')) \geq \tfrac{p'}{p} \Payoff(A^n_k(p))$. Combining with Claim \ref{cl:correlationgap} we get that
\[\Payoff(A^n_k(p')) \geq \tfrac{p'}{p} \left(1-\tfrac{1}{\sqrt{2 \pi k}}\right) \Payoff(\A^n_n(p)).\]
Next, a simple (omitted) argument shows that the revenue  $\Payoff(A^n_k(p))$ of a fixed price auction exhibits diminishing marginal returns in the number $n$ of players. Therefore, $\Payoff(A^{n'}_k(p)) \geq \frac{n'}{n} \Payoff(A^n_k(p))$.
\end{proof}

Let us note in passing that Claim~\ref{cl:correlationgap} and Claim~\ref{cl:benchmark-nu} imply a stronger, multiplicative version of Lemma~\ref{lm:benchmark}, which is also immediate from~\cite{Yan11}.

\begin{lemma}[Yan~\cite{Yan11}]\label{lem:Yan11}
Assume that the demand distribution is regular. Then there exists a fixed-price strategy whose expected revenue approximates the offline benchmark up to a factor $1-\tfrac{1}{\sqrt{2\pi k}}$.
\end{lemma}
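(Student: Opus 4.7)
The plan is a direct composition of the two multiplicative claims already established in this appendix, namely Claim~\ref{cl:correlationgap} and Claim~\ref{cl:compare_myerson}. The witness fixed-price strategy will be $\A^n_k(p^*)$ where, as in Claim~\ref{cl:benchmark-nu},
\[
    p^* = \max(\ReservePrice,\, S^{-1}(\fkn)).
\]
I first note that by construction $p^* \geq S^{-1}(\fkn)$, which is precisely the hypothesis required to invoke Claim~\ref{cl:correlationgap} at price $p=p^*$. This yields
\[
    \Payoff(\A^n_k(p^*)) \;\geq\; \left(1-\tfrac{1}{\sqrt{2\pi k}}\right)\, \Payoff(\A^n_n(p^*)).
\]

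Next, Claim~\ref{cl:compare_myerson} gives $\Payoff(\A^n_n(p^*)) \geq \Payoff(M^n_k)$, where $M^n_k$ realizes the offline benchmark. Chaining the two inequalities yields
\[
    \Payoff(\A^n_k(p^*)) \;\geq\; \left(1-\tfrac{1}{\sqrt{2\pi k}}\right)\, \Payoff(M^n_k),
\]
which is the claimed multiplicative approximation. Since the fixed-price benchmark is at least $\Payoff(\A^n_k(p^*))$, the same bound holds for the best fixed-price strategy, completing the proof.

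There is essentially no obstacle here: the real work was already carried out in Claim~\ref{cl:correlationgap} (via the correlation-gap argument from~\cite{Yan11} applied to the rank function of the $k$-uniform matroid) and in Claim~\ref{cl:compare_myerson} (via Jensen's inequality applied to the concave revenue curve $R$ of a regular distribution). The only thing to verify is that the single price $p^*$ is simultaneously admissible for both claims, which is immediate from its definition. In particular, picking $p^* = S^{-1}(\fkn)$ when $\ReservePrice \leq S^{-1}(\fkn)$ and $p^* = \ReservePrice$ otherwise ensures both $p^* \geq S^{-1}(\fkn)$ (needed for the correlation-gap step) and the domination $\Payoff(\A^n_n(p^*)) \geq \Payoff(M^n_k)$ (needed for the Myerson-comparison step). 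No additional concentration or smoothing argument is required, in contrast with the additive version (Lemma~\ref{lm:benchmark-appendix}) which paid an extra $O(\sqrt{k\log k})$ through Chernoff bounds in Claim~\ref{cl:benchmark-nu}.
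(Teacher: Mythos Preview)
Your proof is correct and follows essentially the same route as the paper: chain Claim~\ref{cl:correlationgap} (applied at $p=p^*$, which satisfies $p^*\geq S^{-1}(\fkn)$ by construction) with the comparison $\Payoff(\A^n_n(p^*))\geq \Payoff(M^n_k)$ from Claim~\ref{cl:compare_myerson}. The paper phrases the second ingredient as ``Claim~\ref{cl:benchmark-nu}'' (for the identification of $p^*$), but the substantive step is exactly the Myerson comparison you invoke; no additional idea is needed.
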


\section{Monotone Hazard Rate distributions}
\label{app:distributions}

Let us state and prove several properties of Monotone Hazard Rate (MHR) distributions which we use in Section~\ref{sec:root-k} and Section~\ref{sec:descending}. Throughout, for a distribution $F$ we use $F(x)$ to denote the c.d.f, $S(x) = 1-F(x)$ to denote the sales rate, and $f(x)$ to denote the p.d.f.

We begin with a simple known characterization of MHR distributions.
\begin{fact}\label{fact:mhr_logconcave}
A distribution is MHR if and only if $S(\cdot)$ is log-concave (i.e. $\log S(x)$ is a concave function of $x$).
\end{fact}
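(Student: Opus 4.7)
The plan is to exploit the direct relationship between the log-derivative of $S$ and the hazard rate $H$. Specifically, since $F$ is assumed twice differentiable on the support, $S(x) = 1 - F(x)$ is twice differentiable with $S'(x) = -f(x) = -F'(x)$. The key observation is that
\begin{equation*}
\frac{d}{dx} \log S(x) \;=\; \frac{S'(x)}{S(x)} \;=\; -\frac{F'(x)}{S(x)} \;=\; -H(x).
\end{equation*}

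From this identity, concavity of $\log S(\cdot)$ is equivalent to its derivative $-H(\cdot)$ being non-increasing, which in turn is equivalent to $H(\cdot)$ being non-decreasing, i.e.\ to $F$ being MHR. This gives both directions of the ``if and only if'' simultaneously. First I would spell out the forward direction (MHR implies $\log S$ concave) by noting that $H$ non-decreasing yields $(\log S)'$ non-increasing, and then invoke the standard fact that a function with non-increasing derivative is concave. Then I would give the reverse direction by reading the equivalence backwards: if $\log S$ is concave then its derivative $-H$ is non-increasing, so $H$ is non-decreasing.

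There is essentially no obstacle: the proof is a one-line computation followed by the elementary calculus fact that a (differentiable) function is concave iff its derivative is non-increasing. The only subtlety worth mentioning is restricting attention to the interior of the support so that $S(x) > 0$ and the logarithm (and the division defining $H$) is well-defined; outside the support both sides of the equivalence are trivial. Given the brevity, the whole argument fits in a few lines and needs no auxiliary lemma beyond the definitions already stated in Section~\ref{sec:prelims}.
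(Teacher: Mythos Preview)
Your argument is correct: the identity $(\log S)' = -H$ reduces the equivalence to the elementary calculus fact that a differentiable function is concave iff its derivative is non-increasing, and you handle the only minor caveat (restricting to the interior of the support where $S>0$). The paper does not actually prove this fact---it is simply stated as a known characterization---so there is nothing to compare your proof against; your write-up would serve perfectly well as the omitted justification.
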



Next, we bound the sales rate at the Myerson reserve price.
\begin{claim}\label{lem:mhr_survival}
Let $F$ be an MHR distribution with support on $[0,\infty]$, and let $S(x)=1-F(x)$. Let
    $r\in \argmax R(\cdot)$
where $R(x) = x\, S(x)$.
Then
    $ S(r) \geq 1/e$.
\end{claim}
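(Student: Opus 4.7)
The plan is to combine the first-order optimality condition for $R(\cdot)$ at $r$ with log-concavity of the survival function (Fact~\ref{fact:mhr_logconcave}). The whole argument should fit in a few lines once these two ingredients are in place.

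First, I would translate the MHR hypothesis into a statement about $L(x) \triangleq \log S(x)$. By Fact~\ref{fact:mhr_logconcave}, $L$ is concave, and moreover $L'(x) = -f(x)/S(x) = -h(x)$ where $h$ is the hazard rate. Next, I would write the first-order condition for a maximum of $R(x) = x\,S(x)$ at $x = r$: differentiating gives $R'(r) = S(r) + r S'(r) = 0$, hence $r\,f(r) = S(r)$, i.e.\ $h(r) = 1/r$. Equivalently, $L'(r) = -1/r$, so $r\,L'(r) = -1$.

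The final step is the tangent inequality: concavity of $L$ means the graph of $L$ lies below any of its tangent lines, so in particular at $x=0$,
\begin{align*}
L(0) \;\leq\; L(r) + (0 - r)\,L'(r) \;=\; L(r) + 1.
\end{align*}
Since $F$ is supported on $[0,\infty)$ we have $S(0)=1$, hence $L(0) = 0$, and rearranging yields $L(r) \geq -1$, i.e.\ $S(r) \geq 1/e$, as claimed.

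I do not anticipate any real obstacle here: the only mildly delicate point is making sure $L$ is differentiable at $r$ (so that the first-order condition and the tangent inequality are both available). In the smooth case this is immediate, and if one wants to handle general MHR distributions one can replace the derivatives by one-sided derivatives / subgradients of the concave function $L$ and invoke the same subgradient inequality, which still gives $L(0) \leq L(r) - r\,L'_{-}(r)$ with $-r\,L'_{-}(r) \leq 1$ coming from the maximality of $R$ at $r$. Either way, the proof is essentially a two-line consequence of ``$h(r) = 1/r$ at the optimum'' plus ``$\log S$ is concave and equals $0$ at the origin.''
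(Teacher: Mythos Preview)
Your proof is correct and is essentially the same as the paper's: both combine the first-order condition $R'(r)=0$ (giving $L'(r)=-1/r$) with the concavity of $L=\log S$ to compare $L(r)$ with $L(0)=0$. The paper phrases the concavity step as the secant-versus-tangent inequality $\tfrac{\log S(r)}{r}\geq \tfrac{S'(r)}{S(r)}$, which is exactly your tangent-line inequality $L(0)\leq L(r)-r\,L'(r)$ rearranged.
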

\begin{proof}
  We have
  $R'(r) = S(r) + r S'(r) = 0 $.
  Moreover, by Fact \ref{fact:mhr_logconcave} we deduce that \[ \frac{\log S(r)}{r} \geq \frac{d}{dx} \log(S(x)) |_r = \frac{S'(r)}{S(r)} \] Combining with the previous equality, we have
  $\frac{-1}{r} \leq \frac{\log(S(r))}{r} $
  which is equivalent to $S(r) \geq \frac{1}{e}$.
\end{proof}

We now use log-concavity to bound the sensitivity of the inverse of the sales rate.
\begin{claim}\label{lem:logconcave}
Let $F$ be an MHR distribution with support on $[0,\infty]$, and let $\alpha,\beta \in [0,1]$ with $\beta \geq \alpha$. Then \[S^{-1}(\beta) \geq \frac{\log(\beta)}{\log(\alpha)} S^{-1}(\alpha)\]
\end{claim}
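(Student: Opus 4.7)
The plan is to use log-concavity of the survival function, which is the characterization of MHR distributions given in Fact~\ref{fact:mhr_logconcave}. Set $g(x) \triangleq \log S(x)$; then $g$ is concave on $[0,\infty)$ and satisfies $g(0) = \log S(0) = \log 1 = 0$.

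The main observation I would use is a standard consequence of concavity through the origin: for any concave $g$ with $g(0)=0$, the map $x \mapsto g(x)/x$ is non-increasing on $(0,\infty)$. (This follows immediately from the chord inequality: for $0 < y \leq x$, $g(y) = g(\tfrac{y}{x} \cdot x + (1-\tfrac{y}{x}) \cdot 0) \geq \tfrac{y}{x} g(x)$, so $g(y)/y \geq g(x)/x$.)

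Now let $x_\alpha \triangleq S^{-1}(\alpha)$ and $x_\beta \triangleq S^{-1}(\beta)$. Since $\beta \geq \alpha$ and $S$ is non-increasing, we have $x_\beta \leq x_\alpha$. Apply the monotonicity of $g(x)/x$ at $x_\beta \leq x_\alpha$ to get $g(x_\beta)/x_\beta \geq g(x_\alpha)/x_\alpha$, i.e.
\[
\frac{\log \beta}{x_\beta} \;\geq\; \frac{\log \alpha}{x_\alpha}.
\]
The final step is just a careful rearrangement. Both $\log\alpha$ and $\log\beta$ are non-positive (the boundary case $\alpha = 1$ or $\beta = 1$ can be handled separately or by continuity). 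Multiplying both sides by the positive quantity $x_\alpha x_\beta$ and then dividing by $\log\alpha < 0$ (which flips the inequality), we obtain
\[
x_\beta \;\geq\; \frac{\log \beta}{\log \alpha}\, x_\alpha,
\]
which is exactly the claim. The only subtlety is the sign bookkeeping when dividing by $\log\alpha$, and the trivial degenerate cases $\alpha \in \{0,1\}$ or $\beta = 1$, which I would dispatch by inspection or by continuity of $S^{-1}$. No real ``hard step'' here: the whole proof rides on log-concavity of $S$ plus the one-line fact that a concave function vanishing at $0$ has non-increasing average slope from the origin.
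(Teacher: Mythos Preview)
Your proof is correct and follows essentially the same route as the paper: both use log-concavity of $S$ (Fact~\ref{fact:mhr_logconcave}) and the one-line consequence that a concave function vanishing at the origin has non-increasing secant slope, then rearrange. The paper phrases the key step as ``by Jensen's inequality, $\frac{\log S(b)}{\log S(a)} \leq \frac{b}{a}$ for $b\leq a$,'' which is just your inequality $g(x_\beta)/x_\beta \geq g(x_\alpha)/x_\alpha$ rewritten after dividing by the negative quantity $g(x_\alpha)=\log\alpha$; your version is arguably cleaner about the sign bookkeeping and the degenerate endpoints.
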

\begin{proof}
By Fact \ref{fact:mhr_logconcave}, $f(x) = \log(S(x))$ is a concave, decreasing function of $x$ such that $f(0)=0$ and $f(x) \to -\infty$ as $x \to \infty$. By Jensen's inequality, for every $a,b \in [0,\infty]$ with $b \leq a$  we have
$ f(b)/ f(a) \leq \tfrac{b}{a} $.
Plugging $a=S^{-1}(\alpha)$ and $b=S^{-1}(\beta)$ into this inequality completes the proof.
\end{proof}


\end{document}